\title{Optimal Oracles for Point-to-Set Principles}
\author{D. M. Stull\\Department of Computer Science, Iowa State University\\
	Ames, IA 50011, USA\\
	\texttt{dstull@iastate.edu}}
\date{}
\newtheorem{thm}{Theorem}
\newtheorem{obs}[thm]{Observation}
\newtheorem{prop}[thm]{Proposition}
\newtheorem{cor}[thm]{Corollary}
\newtheorem{lem}[thm]{Lemma}
\newtheorem{defn}[thm]{Definition}
\theoremstyle{remark}
\newtheorem*{remark}{Remark}
\newtheorem*{ex}{Example}
\DeclareMathOperator{\Dim}{Dim}
\newcommand{\R}{\mathbb{R}}
\newcommand{\N}{\mathbb{N}}
\newcommand{\Q}{\mathbb{Q}}
\newcommand{\D}{\mathbb{D}}
\newcommand{\ve}{\varepsilon}
\newcommand{\Hs}{\mathcal{H}^s}
\newcommand{\Ps}{\mathcal{P}^s}
\begin{document}
	\maketitle

\begin{abstract}
The point-to-set principle \cite{LutLut17} characterizes the Hausdorff dimension of a subset $E\subseteq\R^n$ by the \textit{effective} (or algorithmic) dimension of its individual points. This characterization has been used to prove several results in classical, i.e., without any computability requirements, analysis. Recent work has shown that algorithmic techniques can be fruitfully applied to Marstrand's projection theorem, a fundamental result in fractal geometry.

In this paper, we introduce an extension of point-to-set principle - the notion of \textit{optimal oracles} for subsets $E\subseteq\R^n$. One of the primary motivations of this definition is that, if $E$ has optimal oracles, then the conclusion of Marstrand's projection theorem holds for $E$. We show that every analytic set has optimal oracles. We also prove that if the Hausdorff and packing dimensions of $E$ agree, then $E$ has optimal oracles. Moreover, we show that the existence of sufficiently nice outer measures on $E$ implies the existence of optimal Hausdorff oracles. In particular, the existence of exact gauge functions for a set $E$ is sufficient for the existence of optimal Hausdorff oracles, and is therefore sufficient for Marstrand's theorem. Thus, the existence of optimal oracles extends the currently known sufficient conditions for Marstrand's theorem to hold.

Under certain assumptions, every set has optimal oracles. However, assuming the axiom of choice and the continuum hypothesis, we construct sets which do not have optimal oracles. This construction naturally leads to a generalization of Davies theorem on projections.
\end{abstract}

\section{Introduction}
Effective, i.e., algorithmic, dimensions were introduced \cite{Lutz03a, AthHitLutMay07} to study the randomness of points in Euclidean space. The effective dimension, $\dim(x)$ and effective strong dimension, $\Dim(x)$, are real values which measure the asymptotic density of information of an \textit{individual point} $x$.  The connection between effective dimensions and the classical Hausdorff and packing dimension is given by the point-to-set principle of J. Lutz and N. Lutz \cite{LutLut17}: For any $E \subseteq \R^n$,
\begin{align}
\dim_H(E) &= \min\limits_{A \subseteq \N} \sup_{x \in E} \dim^A(x), \text{ and} \label{eq:h}\\ 
\dim_P(E) &= \min\limits_{A \subseteq \N} \sup_{x \in E} \Dim^A(x)\label{eq:p}\,.
\end{align}
Call an oracle $A$ satisfying (\ref{eq:h}) a \textit{Hausdorff oracle} for $E$. Similarly, we call an oracle $A$ satisfying (\ref{eq:p}) a \textit{packing oracle} for $E$. Thus, the point-to-set principle shows that the classical notion of Hausdorff or packing dimension is completely characterized by the effective dimension of its individual points, relative to a Hausdorff or packing oracle, respectively.

Recent work as shown that algorithmic dimensions are not only useful in effective settings, but, via the point-to-set principle, can be used to solve problems in geometric measure theory \cite{LutLut20, Lutz17, LutStu17, Stull18}. It is important to note that the point-to-set principle allows one to use algorithmic techniques to prove theorems whose statements have seemingly nothing to do with computability theory. In this paper, we focus on the connection between algorithmic dimension and Marstrand's projection theorem.

Marstrand, in his landmark paper \cite{Marstrand54}, was the first to study how the dimension of a set is changed when projected onto a line. He showed that, for any \textit{analytic} set $E \in \R^2$, for almost every angle $\theta \in [0, \pi)$,
\begin{equation}\label{eq:MarstrandTheorem}
\dim_H(p_\theta \, E) = \min\{\dim_H(E), 1\},
\end{equation}
where $p_\theta(x,y) = x \cos \theta + y\sin \theta$\footnote{This result was later generalized to $\R^n$, for arbitrary $n$, as well as extended to hyperspaces of dimension $m$, for any $1 \leq m \leq n$ (see e.g. \cite{Mattila75, Mattila99, Mattila04}).}. The study of projections has since become a central theme in fractal geometry (see \cite{FalFraJin15} or \cite{Mattila19} for a more detailed survey of this development). 

Marstrand's theorem begs the question of whether the analytic requirement on $E$ can be dropped. It is known that, without further conditions, it cannot. Davies~\cite{Davies79} showed that, assuming the axiom of choice and the continuum hypothesis, there are non-analytic sets for which Marstrands conclusion fails. However, the problem of classifying the sets for which Marstrands theorem does hold is still open. Recently, Lutz and Stull \cite{LutStu18} used the point-to-set principle to prove that the projection theorem holds for sets for which the Hausdorff and packing dimensions agree\footnote{Orponen \cite{Orponen20a} has recently given another proof of Lutz and Stull's result using more classical tools.}. This expanded the reach of Marstrand's theorem, as this assumption is incomparable with analyticity.

In this paper, we give the broadest known sufficient condition (which makes essential use of computability theory) for Marstrand's theorem. In particular, we introduce the notion of \textit{optimal Hausdorff oracles} for a set $E\subseteq \R^n$. We prove that Marstrand's theorem holds for every set $E$ which has optimal Hausdorff oracles. 

An optimal Hausdorff oracle for a set $E$ is a Hausdorff oracle which minimizes the algorithmic complexity of "most``\footnote{By most, we mean a subset of $E$ of the same Hausdorff dimension as $E$} points in $E$. It is not immediately clear that any set $E$ has optimal oracles. Nevertheless, we show that two natural classes of sets $E \subseteq \R^n$ do have optimal oracles. 

We show that every analytic, and therefore Borel, set has optimal oracles.  We also prove that every set whose Hausdorff and packing dimensions agree has optimal Hausdorff oracles. Thus, we show that the existence of optimal oracles encapsulates the known conditions sufficient for Marstrand's theorem to hold. Moreover, we show that the existence of sufficiently nice outer measures on $E$ implies the existence of optimal Hausdorff oracles. In particular, the existence of exact gauge functions (Section \ref{ssection:outerMeasure}) for a set $E$ is sufficient for the existence of optimal Hausdorff oracles for $E$, and is therefore sufficient for Marstrand's theorem. Thus, the existence of optimal Hausdorff oracles is weaker than the previously known conditions for Marstrand's theorem to hold. 

We also show that the notion of optimal oracles gives insight to sets for which Marstrand's theorem does \textit{not} hold. Assuming the axiom of choice and the continuum hypothesis, we construct sets which do not have optimal oracles. This construction, with minor adjustments, proves a generalization of Davies theorem proving the existence of sets for which (\ref{eq:MarstrandTheorem}) does not hold. In addition, the inherently algorithmic aspect of the construction might be useful for proving set-theoretic properties of exceptional sets for Marstrand's theorem.

Finally, we define optimal \textit{packing} oracles for a set. We show that every analytic set $E$ has optimal packing oracles. We also show that every $E$ whose Hausdorff and packing dimensions agree have optimal packing oracles. Assuming the axiom of choice and the continuum hypothesis, we show that there are sets with optimal packing oracles without optimal Hausdorff oracles (and vice-versa).

The structure of the paper is as follows. In Section \ref{ssection:outerMeasure} we review the concepts of measure theory needed, and the (classical) definition of Hausdorff dimension. In Section \ref{ssec:ait} we review algorithmic information theory, including the formal definitions of effective dimensions. We then introduce and study the notion of optimal oracles in Section \ref{sec:optimaloracles}. In particular, we give a general condition for the existence of optimal oracles in Section \ref{ssec:optoraclesouter}. We use this condition to prove that analytic sets have optimal oracles in Section \ref{ssec:setswithopt}. We conclude in Section \ref{ssec:counterexamOptimal} with an example, assuming the axiom of choice and the continuum hypothesis, of a set without optimal oracles. The connection between Marstrands projection theorem and optimal oracles is explored in Section \ref{sec:Marstrand}. In this section, we prove that Marstrands theorem holds for every set with optimal oracles. In Section \ref{ssec:daviesCounter}, we use the construction of a set without optimal oracles to give a new, algorithmic, proof of Davies theorem. Finally, in Sectino \ref{sec:optimalPacking}, we define and investigate the notion of optimal packing oracles.

\section{Preliminaries}

\subsection{Outer Measures and Classical Dimension}\label{ssection:outerMeasure}
A set function $\mu: \mathcal{P}(\R^n) \to [0,\infty]$ is called an \textit{outer measure} on $\R^n$ if 
\begin{enumerate}
\item $\mu(\emptyset) = 0$, 
\item if $A\subseteq B$ then $\mu(A) \leq \mu(B)$, and
\item for any sequence $A_1,A_2,\ldots$ of subsets,
\begin{center}
$\mu(\bigcup_i A_i) \leq \sum_i \mu(A_i)$.
\end{center}
\end{enumerate}
If $\mu$ is an outer measure, we say that a subset $A$ is \textit{$\mu$-measurable} if
\begin{center}
$\mu(A \cap B) + \mu(B - A) = \mu(B)$,
\end{center}
for every subset $B\subseteq \R^n$.

An outer measure $\mu$ is called a \textit{metric outer measure} if every Borel subset is $\mu$-measurable and 
\begin{center}
$\mu(A \cup B) = \mu(A) + \mu(B)$,
\end{center}
for every pair of subsets $A,B$ which have positive Hausdorff distance. That is, 
\begin{center}
$\inf\{ \|x - y\| \, | \, x\in A, y\in B\} > 0$.
\end{center}

An important example of a metric outer measure is the $s$-dimensional Hausdorff measure. For every $E\subseteq [0,1)^n$, define the $s$-dimensional Hausdorff content at precision $r$ by
\begin{center}
$h^s_r(E) = \inf\left\{ \sum_i d(Q_i)^s \, |\, \bigcup_i Q_i \text{ covers } E \text{ and } d(Q_i) \leq 2^{-r}\right\}$,
\end{center}
where $d(Q)$ is the diameter of ball $Q$. We define the $s$-dimensional Hausdorff measure of $E$ by
\begin{center}
$\mathcal{H}^s(E) = \lim\limits_{r\to \infty} h^s_r(E)$.
\end{center}

\begin{remark}
It is well-known that $\Hs$ is a metric outer measure for every $s$.
\end{remark}

The \textit{Hausdorff dimension} of a set $E$ is then defined by 
\begin{center}
$\dim_H(E) = \inf\limits_{s}\{ \Hs(E) = \infty \} = \sup\limits_s \{\Hs(E) = 0\}$.
\end{center}

Another important metric outer measure, which gives rise to the packing dimension of a set, is the $s$-dimensional packing measure. For every $E\subseteq [0,1)^n$, define the $s$-dimensional packing pre-measure by
\begin{center}
$p^s(E) = \limsup\limits_{\delta \to 0} \left\{ \sum\limits_{i\in\N} d(B_i)^s \, |\, \{B_i\} \text{ is a set of disjoint balls and } B_i \in C(E,\delta)\right\}$,
\end{center}
where $C(E,\delta)$ is the set of all closed balls with diameter at most $\delta$ with centers in $E$. We define the $s$-dimensional packing measure of $E$ by
\begin{center}
$\mathcal{P}^s(E) = inf\left\{ \sum\limits_j p^s(E_j) \, | \, E \subseteq \bigcup E_j\right\}$,
\end{center}
where the infimum is taken over all countable covers of $E$. For every $s$, the $s$-dimensional packing measure is a metric outer measure.

The \textit{packing dimension} of a set $E$ is then defined by 
\begin{center}
$\dim_P(E) = \inf\limits_{s}\{ \Ps(E) = 0 \} = \sup\limits_s \{\Ps(E) = \infty\}$.
\end{center}

In order to prove that every analytic set has optimal oracles, we will make use of the following facts of geometric measure theory (see, e.g., \cite{Falconer14}, \cite{BisPer17}).
\begin{thm}\label{thm:compactSset}
The following are true.
\begin{enumerate}
\item Suppose $E \subseteq \R^n$ is compact and satisfies $\mathcal{H}^s(E) > 0$. Then there is a compact subset $F\subseteq E$ such that $0< \Hs(F) <\infty$.
\item Every analytic set $E\subseteq \R^n$ has a $\Sigma^0_2$ subset $F \subseteq E$ such that $\dim_H(F) = \dim_H(E)$.
\item Suppose $E \subseteq \R^n$ is compact and satisfies $\Ps(E) > 0$. Then there is a compact subset $F\subseteq E$ such that $0< \Ps(F) <\infty$.
\item Every analytic set $E\subseteq \R^n$ has a $\Sigma^0_2$ subset $F \subseteq E$ such that $\dim_P(F) = \dim_P(E)$.
\end{enumerate}
\end{thm}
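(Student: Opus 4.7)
These four statements are classical theorems of geometric measure theory; my plan is to sketch the route I would take for each, noting at the end that a formal write-up would ultimately cite Falconer or Bishop--Peres rather than redevelop the full machinery.

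For Part (1), I would use a dyadic trimming construction. Starting from compact $E$ with $\Hs(E) > 0$, build a nested sequence $E = E_0 \supseteq E_1 \supseteq \cdots$ of compact subsets by discarding, at each stage $k$, every dyadic cube of side $2^{-k}$ whose intersection with $E_{k-1}$ carries more than a carefully chosen threshold of $\Hs$-mass. Arranging the thresholds so that the total mass removed across all stages is bounded by $\Hs(E)/2$ guarantees that $F = \bigcap_k E_k$ is compact with $\Hs(F) > 0$, while the surviving dyadic cover at level $k$ certifies $\Hs(F) < \infty$. An alternative route is via Frostman's lemma: produce a finite Borel measure $\mu$ supported on $E$ with $\mu(B(x,r)) \le r^s$ for all $x, r$, and take the compact support of a suitable restriction. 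Part (3) follows the same scheme, with the added subtlety that one first trims to a compact $F$ with $0 < p^s(F) < \infty$ and only then derives the $\Ps$ conclusion, since $\Ps$ is obtained from $p^s$ by a further Carath\'eodory-type infimum over countable covers.

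For Parts (2) and (4), my plan is to combine Parts (1) and (3) with the inner regularity of analytic sets with respect to Hausdorff and packing dimensions, i.e.,
\[
\dim_H(E) = \sup\{\dim_H(K) : K \subseteq E \text{ compact}\},
\]
and the analogous identity for $\dim_P$. Once this is in hand, choose compact sets $K_n \subseteq E$ with $\dim_H(K_n) \nearrow \dim_H(E)$; then $F = \bigcup_n K_n$ is an $F_\sigma$ (equivalently $\Sigma^0_2$) subset of $E$ realizing the correct Hausdorff dimension. The same argument, applied to packing dimension, yields Part (4).

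The main obstacle will be the inner regularity statements, which rest on Choquet's capacitability theorem. The Hausdorff case is standard, since the Hausdorff content $\mathcal{H}^s_\infty$ serves as a Choquet capacity on analytic sets and Part (1) then upgrades dimension-of-capacity to the genuine $\dim_H$. The packing case is substantially more delicate: the packing pre-measure is not a Choquet capacity in any immediate sense, and one has to work with an appropriate modification or appeal to the machinery of Joyce--Preiss to obtain compact approximants of full packing dimension inside an arbitrary analytic set. This packing analogue of capacitability is the one ingredient I would be least inclined to reprove from scratch.
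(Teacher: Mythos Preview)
Your proposal is correct and, in fact, goes well beyond what the paper does: the paper simply states this theorem as background and cites Falconer~\cite{Falconer14} and Bishop--Peres~\cite{BisPer17} without any proof sketch, exactly as you anticipate in your closing remark. Your identification of the key ingredients---Frostman/dyadic trimming for Part~(1), Choquet capacitability for Part~(2), and the Joyce--Preiss machinery for the packing analogues in Parts~(3) and~(4)---is accurate and more informative than the paper's bare citation.
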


It is possible to generalize the definition of Hausdorff measure using gauge functions. A function $\phi:[0,\infty) \to [0,\infty)$ is a \textit{gauge function} if $\phi$ is monotonically increasing, strictly increasing for $t > 0$ and continuous. If $\phi$ is a gauge, define the $phi$-Hausdorff content at precision $r$ by
\begin{center}
$h^\phi_r(E) = \inf\left\{ \sum_i \phi(d(Q_i)) \, |\, \bigcup_i Q_i \text{ covers } E \text{ and } d(Q_i) \leq 2^{-r}\right\}$,
\end{center}
where $d(Q)$ is the diameter of ball $Q$. We define the $phi$-Hausdorff measure of $E$ by
\begin{center}
$\mathcal{H}^\phi(E) = \lim\limits_{r\to \infty} h^\phi_r(E)$.
\end{center}
Thus we recover the $s$-dimensional Hausdorff measure when $\phi(t) = t^s$. 

Gauged Hausdorff measures give fine-grained information about the size of a set. There are sets $E$ which Hausdorff dimension $s$, but $\Hs(E) = 0$ or $\Hs(E) = \infty$. However, it is sometimes possible to find an appropriate gauge so that $0 < \mathcal{H}^\phi(E) < \infty$. When $0 < \mathcal{H}^\phi(E) <\infty$, we say that $\phi$ is an \textit{exact gauge for $E$}.
\begin{ex}
For almost every Brownian path $X$ in $\R^2$, $\mathcal{H}^2(X) = 0$, but $0<\mathcal{H}^\phi(X) <\infty$, where
$\phi(t) = t^2 \log\frac{1}{t} \log\log\frac{1}{t}$.
\end{ex}

For two outer measures $\mu$ and $\nu$, $\mu$ is said to be \textit{absolutely continuous with respect to} $\nu$, denoted $\mu \ll \nu$, if $\mu(A) = 0$ for every set $A$ for which $\nu(A) = 0$.
\begin{ex}
For every $s$, let $\phi_s(t) = t^s \log \frac{1}{t}$. Then $\Hs \ll \mathcal{H}^{\phi_s}$.
\end{ex}

\begin{ex}
For every $s$, let $\phi_s(t) = \frac{t^s}{\log \frac{1}{t}}$. Then $\mathcal{H}^{\phi_s} \ll  \Hs$.
\end{ex}

\subsection{Algorithmic Information Theory}\label{ssec:ait}

The \emph{conditional Kolmogorov complexity} of a binary string $\sigma\in\{0,1\}^*$ given binary string $\tau\in\{0,1\}^*$ is 
		\[K(\sigma|\tau)=\min_{\pi\in\{0,1\}^*}\left\{\ell(\pi):U(\pi,\tau)=\sigma\right\}\,,\]
		where $U$ is a fixed universal prefix-free Turing machine and $\ell(\pi)$ is the length of $\pi$. The \emph{Kolmogorov complexity} of $\sigma$ is $K(\sigma)=K(\sigma|\lambda)$, where $\lambda$ is the empty string. An important fact is that the choice of universal machine affects the Kolmogorov complexity by at most an additive constant (which, especially for our purposes, can be safely ignored). See \cite{LiVit08,Nies09,DowHir10} for a more comprehensive overview of Kolmogorov complexity.

We can naturally extend these definitions to Euclidean spaces by introducing ``precision" parameters~\cite{LutMay08,LutLut17}. Let $x\in\R^m$, and $r,s\in\N$. The \emph{Kolmogorov complexity of $x$ at precision $r$} is
\[K_r(x)=\min\left\{K(p)\,:\,p\in B_{2^{-r}}(x)\cap\Q^m\right\}\,.\]
The \emph{conditional Kolmogorov complexity of $x$ at precision $r$ given $q\in\Q^m$} is
\[\hat{K}_r(x|q)=\min\left\{K(p\,|\,q)\,:\,p\in B_{2^{-r}}(x)\cap\Q^m\right\}\,.\]
The \emph{conditional Kolmogorov complexity of $x$ at precision $r$ given $y\in\R^n$ at precision $s$} is
\[K_{r,s}(x|y)=\max\big\{\hat{K}_r(x|q)\,:\,q\in B_{2^{-s}}(y)\cap\Q^n\big\}\,.\]
We typically abbreviate $K_{r,r}(x|y)$ by $K_r(x|y)$.

		The \emph{effective Hausdorff dimension} and \emph{effective packing dimension}\footnote{Although effective Hausdorff was originally defined by J. Lutz~\cite{Lutz03b} using martingales, it was later shown by Mayordomo~\cite{Mayordomo02} that the definition used here is equivalent. For more details on the history of connections between Hausdorff dimension and Kolmogorov complexity, see~\cite{DowHir10,Mayordomo08}.} of a point $x\in\R^n$ are
	\[\dim(x)=\liminf_{r\to\infty}\frac{K_r(x)}{r}\quad\text{and}\quad\Dim(x) = \limsup_{r\to\infty}\frac{K_r(x)}{r}\,.\]

	By letting the underlying fixed prefix-free Turing machine $U$ be a universal \emph{oracle} machine, 
	we may \emph{relativize} the definition in this section to an arbitrary oracle set $A \subseteq \N$. The definitions of $K^A_r(x)$, $\dim^A(x)$, $\Dim^A(x)$, etc. are then all identical to their unrelativized versions, except that $U$ is given oracle access to $A$. Note that taking oracles as subsets of the naturals is quite general. We can, and frequently do, encode a point $y$ into an oracle, and consider the complexity of a point \textit{relative} to $y$. In these cases, we typically forgo explicitly referring to this encoding, and write e.g. $K^y_r(x)$. We can also \textit{join} two oracles $A, B\subseteq\N$ using any computable bijection $f: \N \times \N \to \N$. We denote the join of $A$ and $B$ by $(A,B)$. We can generalize this procedure to join any countable sequence of oracles.
	
As mentioned in the introduction, the connection between effective dimensions and the classical Hausdorff and packing dimensions is given by the point-to-set principle introduced by J. Lutz and N. Lutz \cite{LutLut17}.
\begin{thm}[Point-to-set principle]\label{thm:p2s}
Let $n \in \N$ and $E \subseteq \R^n$. Then
\begin{align*}
\dim_H(E) &= \min\limits_{A \subseteq \N} \sup_{x \in E} \dim^A(x), \text{ and}\\
\dim_P(E) &= \min\limits_{A \subseteq \N} \sup_{x \in E} \Dim^A(x)\,.
\end{align*}
\end{thm}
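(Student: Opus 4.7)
The plan is to prove each equality by a two-sided squeeze: (1) $\dim_H(E)\le \sup_{x\in E}\dim^A(x)$ for every oracle $A$, and (2) construction of an explicit oracle $A^*$ witnessing equality. I work out the Hausdorff case; the packing case follows the same blueprint with $\limsup$ replacing $\liminf$ and box-counting covers replacing Hausdorff covers.

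For (1), fix $A$, set $s_0 = \sup_{x\in E}\dim^A(x)$, and pick rationals $s>s'>s_0$. By the definition of $\dim^A$, every $x\in E$ satisfies $K^A_r(x)\le s' r$ for infinitely many $r$, so for every $r_0\in\N$ the collection
\[
\mathcal{U}_{r_0} \;=\; \bigcup_{r\ge r_0}\bigl\{B(p,2^{-r-1}):p\in\Q^n,\ K^A(p)\le s'r\bigr\}
\]
covers $E$ by balls of diameter at most $2^{-r_0}$. Kraft's inequality gives $|\{p:K^A(p)\le s'r\}|\le 2^{s'r+O(1)}$, so the $s$-sum over $\mathcal{U}_{r_0}$ is a geometric series in $2^{(s'-s)r}$ that vanishes as $r_0\to\infty$. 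Hence $\Hs(E)=0$ for every $s>s_0$, giving $\dim_H(E)\le s_0$.

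For (2), set $d=\dim_H(E)$. For each rational $s>d$ and each $k\in\N$, $\Hs(E)=0$ yields a countable cover $\mathcal{C}_{s,k}=\{Q_{s,k,i}\}_i$ of $E$ by balls with rational centers and dyadic radii such that $\sum_i d(Q_{s,k,i})^s\le 2^{-k}$. Let $A^*$ computably encode the whole family $\{\mathcal{C}_{s,k}\}$. Given $x\in E$ and rational $s>d$, for every $k$ pick an index $i(k)$ with $x\in Q_{s,k,i(k)}$ and write $d(Q_{s,k,i(k)})=2^{-r(k)}$; then $2^{-sr(k)}\le 2^{-k}$, i.e.\ $r(k)\ge k/s\to\infty$. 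I describe $x$ at precision $r(k)$ relative to $A^*$ in three pieces: $O(\log k)$ bits to name the parameters $(s,k)$, $O(\log r(k))$ bits to name the diameter class $r(k)$, and at most $sr(k)-k+O(1)$ bits to name the index within that class (since $\mathcal{C}_{s,k}$ contains at most $2^{sr(k)-k+O(1)}$ balls of diameter $2^{-r(k)}$). Dividing by $r(k)$ and letting $k\to\infty$ gives $\liminf_r K^{A^*}_r(x)/r\le s$; since this holds for every rational $s>d$, $\dim^{A^*}(x)\le d$, so $A^*$ attains the minimum.

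The packing equality follows the same template, but the $\limsup$ in $\Dim$ forces complexity bounds at \emph{every} sufficiently large $r$, not merely at infinitely many, which is the main technical wrinkle. The easy direction is essentially identical. For the hard direction I use the classical characterization that $\dim_P(E)<s$ iff $E=\bigcup_j E_j$ with each $E_j$ of upper box-counting dimension less than $s$; the oracle encodes such a decomposition for each rational $s>\dim_P(E)$ together with efficient covers of each $E_j$ at every scale, and the uniform-in-$r$ bound demanded by the $\limsup$ is supplied automatically by the box-counting condition. In both parts the persistent obstacle is keeping the bookkeeping overhead (for parameters, diameter classes, and sub-cover indices) in the $o(r)$ regime, which is precisely why the construction buckets balls by dyadic diameter and spends only logarithmic bits on meta-information such as $(s,k)$.
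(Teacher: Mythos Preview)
The paper does not prove this theorem: it is quoted in the preliminaries as a result of J.~Lutz and N.~Lutz and is invoked throughout without argument. So there is nothing in the paper to compare against; what you have written is essentially a self-contained reconstruction of the original Lutz--Lutz proof, and the Hausdorff half is correct as stated (the radius discrepancy $2^{-r-1}$ versus $2^{-r}$ is immaterial, and your bucketing of the cover $\mathcal{C}_{s,k}$ by dyadic diameter is exactly the right device to keep the index cost at $sr(k)-k+O(1)$).

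One small caution on the packing half: the ``easy'' direction is \emph{not} literally identical to the Hausdorff case. From $\sup_{x\in E}\Dim^A(x)<s'$ you obtain $K^A_r(x)\le s'r$ only for all $r$ beyond an $x$-dependent threshold, so a single cover argument does not bound $\dim_P(E)$ directly; you must first stratify $E=\bigcup_{r_0}E_{r_0}$ by that threshold and then observe that each $E_{r_0}$ has upper box dimension at most $s'$, whence $\dim_P(E)\le s'$ via the modified-box characterization. You invoke precisely this decomposition idea for the hard direction, so you clearly have it in hand, but the phrase ``essentially identical'' hides the one spot where packing genuinely diverges from Hausdorff on the easy side.
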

An oracle testifying to the the first equality is called a \textit{Hausdorff oracle for E}. Similarly, an oracle testifying to the the second equality is called a \textit{packing oracle for E}.

\section{Optimal Hausdorff Oracles}\label{sec:optimaloracles}
For any set $E$, there are infinitely many Hausdorff oracles for $E$. A natural question is whether there is a Hausdorff oracle which minimizes the complexity of every point in $E$. Unfortunately, it is, in general, not possible for a single oracle to maximally reduce \textit{every} point. We introduce the notion of optimal Hausdorff oracles by weakening the condition to a \textit{single} point. 
\begin{defn}
Let $E \subseteq\R^n$ and $A \subseteq \N$. We say that $A$ is \textit{Hausdorff optimal} for $E$ if the following conditions are satisfied.
\begin{enumerate}
\setlength\itemsep{.5em}
\item $A$ is a Hausdorff oracle for $E$.
\item For every $B \subseteq \N$ and every $\epsilon > 0$ there is a point $x\in E$ such that $\dim^{A,B}(x) \geq \dim_H(E) - \epsilon$ and for almost every $r\in \N$
\begin{center}
$K^{A,B}_r(x) \geq K^A_r(x) - \epsilon r$.
\end{center}
\end{enumerate}
\end{defn}
Note that the second condition only guarantees the existence of \textit{one} point whose complexity is unaffected by the addtional information in $B$. However, we can show that this implies the seemingly stronger condition that ``most" points are unaffected. For $B\subseteq \N$, $\epsilon > 0$ define the set
\begin{center}
$N(A, B,\epsilon) = \{x \in E \, | \, (\forall^\infty r) \, K^{A,B}_r(x) \geq K^A_r(x) - \epsilon r\}$
\end{center}

\begin{prop}
Let $E\subseteq\R^n$ be a set such that $\dim_H(E) > 0$ and let $A$ be an oracle. Then $A$ is a Hausdorff optimal oracle for $E$ if and only if $A$ is a Hausdorff oracle and $\dim_H(N(A, B,\epsilon)) = \dim_H(E)$ for every $B\subseteq\N$ and $\epsilon > 0$.
\end{prop}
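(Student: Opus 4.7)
The plan is to prove both directions via the point-to-set principle (Theorem~\ref{thm:p2s}), with all the real work going into oracle bookkeeping. Throughout, write $d = \dim_H(E)$.

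For the easier $(\Leftarrow)$ direction, condition 1 of optimality is exactly the hypothesis that $A$ is a Hausdorff oracle, so only condition 2 needs verification. Fix $B$ and $\epsilon$. Since $\dim_H(N(A,B,\epsilon)) = d$ by assumption, applying the point-to-set principle to the set $N(A,B,\epsilon)$ with the specific oracle $(A,B)$ yields $\sup_{x \in N(A,B,\epsilon)} \dim^{A,B}(x) \geq d$, so I can choose $x \in N(A,B,\epsilon)$ with $\dim^{A,B}(x) \geq d - \epsilon$. Since $N(A,B,\epsilon) \subseteq E$ and membership in $N(A,B,\epsilon)$ is literally the Kolmogorov complexity clause, this $x$ witnesses condition 2.

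The harder $(\Rightarrow)$ direction I would do by contradiction. Assume $A$ is optimal, and suppose $s := \dim_H(N(A,B,\epsilon)) < d$ for some $B$ and $\epsilon$. Apply the point-to-set principle to $N(A,B,\epsilon)$ to obtain a single oracle $C$ such that $\dim^C(x) \leq s$ for every $x \in N(A,B,\epsilon)$. Now choose any $\epsilon' > 0$ with $\epsilon' < \min(\epsilon, d-s)$, and invoke the optimality of $A$ with the joined oracle $(B,C)$ and parameter $\epsilon'$. This produces a witness $x \in E$ satisfying both $\dim^{A,B,C}(x) \geq d - \epsilon'$ and $x \in N(A,(B,C),\epsilon')$. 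A dichotomy on whether $x \in N(A,B,\epsilon)$ closes the argument: in the first case, monotonicity of effective dimension under oracle enlargement gives $\dim^{A,B,C}(x) \leq \dim^C(x) \leq s < d - \epsilon'$, a contradiction; in the second case, $x \notin N(A,B,\epsilon)$ means $K^{A,B}_r(x) < K^A_r(x) - \epsilon r$ for infinitely many $r$, and the inequality $K^{A,B,C}_r(x) \leq K^{A,B}_r(x) + O(1)$ (since $(A,B,C)$ can simulate $(A,B)$) combined with $\epsilon' < \epsilon$ forces $K^{A,B,C}_r(x) < K^A_r(x) - \epsilon' r$ infinitely often, contradicting $x \in N(A,(B,C),\epsilon')$.

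The main obstacle is not really conceptual but just ensuring the parameters $\epsilon, \epsilon', s, d$ are chosen so both branches of the dichotomy fail simultaneously; the two key facts being leveraged are the monotonicity of $K^{\bullet}_r$ and $\dim^{\bullet}$ under oracle enlargement, and the fact that the point-to-set principle can be used both to construct an oracle that caps the dimension of points in a set and, in the other direction, as a lower bound for a specific oracle's supremum. The assumption $\dim_H(E) > 0$ is only needed to ensure the quantity $d - s$ is positive in the nontrivial subcase, and to ensure $N(A,B,\epsilon)$ is nonempty in the easy direction.
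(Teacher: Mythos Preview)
Your proof is correct and follows essentially the same route as the paper: in both directions you invoke the point-to-set principle exactly as the paper does, and in the forward direction your contradiction argument with the auxiliary Hausdorff oracle $C$ for $N(A,B,\epsilon)$, the parameter $\epsilon' < \min(\epsilon, d-s)$, and the dichotomy on whether the optimality witness lies in $N(A,B,\epsilon)$ mirrors the paper's argument with its parameter $\gamma$ step for step.
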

\begin{proof}
For the forward direction, let $A$ be a optimal Hausdorff oracle for $E$. Then by the first condition of the definition, $A$ is a Hausdorff oracle. Let $B\subseteq\N$ and $\epsilon > 0$. Let $C$ be a Hausdorff oracle for $N(A, B,\epsilon)$. For the sake of contradiction, suppose that 
\begin{center}
$\dim_H(N(A, B,\epsilon)) < \dim_H(E) - \gamma$,
\end{center}
for some $\gamma > 0$. We will, without loss of generality, assume that $\gamma < \epsilon$. Then, by the point to set principle, for every $x \in N(A,B,\epsilon)$,
\begin{align*}
\dim^{A, (B, C)}(x) &\leq \dim^{C}(x)\\
&\leq \dim_H(N(A,B,\epsilon))\\
&< \dim_H(E) - \gamma.
\end{align*}
Since, $A$ is an optimal Hausdorff oracle for $E$, there is a point $x \in E$ such that $\dim^{A, (B,C)}(x) \geq \dim_H(E) - \gamma$ and for almost every $r\in \N$
\begin{center}
$K^{A,(B, C)}_r(x) \geq K^A_r(x) - \gamma r$.
\end{center}
By our previous discussion, any such point $x$ cannot be in $N(A,B,\epsilon)$. However, if $x \notin N(A,B,\epsilon)$, then for infinitely many $r$,
\begin{center}
$K^{A,(B, C)}_r(x) < K^A_r(x) - \epsilon r$.
\end{center}
Thus, no such $x$ exists, contradicting the fact that $A$ is Hausdorff optimal.

For the backward direction, let $A$ be an oracle satisfying the hypothesis. Then $A$ is a Hausdorff oracle for $E$ and the first condition of optimal Hausdorff oracles is satisfied. Let $B \subseteq \N$ and $\epsilon > 0$. By our hypothesis and the point-to-set principle,
\begin{align*}
\dim_H(E) &= \dim_H(N(A, B,\epsilon))\\
&\leq \sup\limits_{x \in N(A, B,\epsilon)} \dim^{A,B}(x).
\end{align*}
Therefore, there is certainly a point $x \in E$ such that $\dim^{A,B}(x) \geq \dim_H(E) - \epsilon$ and 
\begin{center}
$K^{A,B}_r(x) \geq K^A_r(x) - \epsilon r$,
\end{center}
for almost every $r\in\N$.
\end{proof}

A simple, but useful, result is if $B$ is an oracle obtained by adding additional information to an optimal Hausdorff oracle, then $B$ is also optimal.
\begin{lem}\label{lem:joinOptimalisOptimal}
Let $E\subseteq \R^n$. If $A$ is an optimal Hausdorff oracle for $E$, then the join $C = (A,B)$ is Hausdorff optimal for $E$ for every oracle $B$. 
\end{lem}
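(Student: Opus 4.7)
The plan is to verify both conditions of Hausdorff optimality for $C = (A,B)$ by reducing to the optimality of $A$, using the fact that joining oracles changes Kolmogorov complexities only by additive constants.

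For condition (1), I would first observe that since $C$ computes $A$, we have $K^C_r(x) \leq K^A_r(x) + O(1)$ and hence $\dim^C(x) \leq \dim^A(x)$ for every $x \in \R^n$. Taking suprema over $E$ gives $\sup_{x \in E} \dim^C(x) \leq \sup_{x \in E} \dim^A(x) = \dim_H(E)$, since $A$ is a Hausdorff oracle. The reverse inequality is immediate from the point-to-set principle, so $C$ is itself a Hausdorff oracle for $E$.

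For condition (2), fix an arbitrary oracle $B'$ and $\epsilon > 0$. Let $D = (B, B')$ and apply the optimality of $A$ to the oracle $D$ with parameter $\epsilon/2$. This produces a point $x \in E$ such that $\dim^{A,D}(x) \geq \dim_H(E) - \epsilon/2$ and for almost every $r \in \N$,
\begin{equation*}
K^{A,D}_r(x) \;\geq\; K^A_r(x) - (\epsilon/2)\, r.
\end{equation*}
Since the joins $(C, B') = ((A,B), B')$ and $(A, D) = (A, (B, B'))$ encode the same countable collection of oracles via computable bijections, the induced Kolmogorov complexities agree up to an additive constant, so $\dim^{C,B'}(x) = \dim^{A,D}(x) \geq \dim_H(E) - \epsilon$ and $K^{C, B'}_r(x) = K^{A,D}_r(x) + O(1)$. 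Combining this with $K^A_r(x) \geq K^C_r(x) - O(1)$ yields
\begin{equation*}
K^{C, B'}_r(x) \;\geq\; K^C_r(x) - (\epsilon/2)\, r - c
\end{equation*}
for some constant $c$ independent of $r$.

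The only mild subtlety is converting the bound with an additive $O(1)$ slack into the required inequality with slope exactly $\epsilon$. This is handled by the fact that we used $\epsilon/2$ rather than $\epsilon$: for all sufficiently large $r$ we have $c \leq (\epsilon/2)\, r$, so $K^{C, B'}_r(x) \geq K^C_r(x) - \epsilon r$ for almost every $r$. Thus $x$ witnesses condition (2) for $C$, completing the proof. No real geometric content is needed; the argument is essentially bookkeeping about how the join operation interacts with the definition of optimality.
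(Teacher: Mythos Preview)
Your proof is correct and follows essentially the same route as the paper: verify condition (1) via the point-to-set principle and monotonicity of dimension under joins, then for condition (2) apply the optimality of $A$ to the oracle $(B,B')$ with parameter $\epsilon/2$ and use $K^{(A,B),B'}_r(x)=K^{A,(B,B')}_r(x)+O(1)$ together with $K^A_r(x)\geq K^C_r(x)-O(1)$. The paper's write-up simply suppresses the $O(1)$ constants you track explicitly, arriving at $K^{C,B'}_r(x)\geq K^C_r(x)-\epsilon r/2$ directly.
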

\begin{proof}
Let $A$ be an optimal Hausdorff oracle for $E$. By the point-to-set principle (Theorem \ref{thm:p2s}),
\begin{align*}
\dim_H(E) &= \sup\limits_{x\in E} \dim^A(x)\\
&\geq \sup\limits_{x\in E} \dim^{C}(x)\\
&\geq \dim_H(E).
\end{align*}
Hence, the oracle $C = (A,B)$ is a Hausdorff oracle for $E$.

Let $B^\prime \subseteq \N$ be an oracle, and let $\epsilon > 0$. Let $x \in E$ be a point such that
\begin{equation}
\dim^{A, (B, B^\prime)}(x) \geq \dim_H(E) - \epsilon / 2,
\end{equation}
and
\begin{equation}
K_r^{A, (B, B^\prime)}(x) \geq K^A_r(x) - \epsilon r / 2,
\end{equation}
for almost every precision $r$. Note that such a point exists since $A$ is optimal for $E$.

For all sufficiently large $r$,
\begin{align*}
K^{(A, B), B^\prime}_r(x) &= K^{A, (B, B^\prime)}_r(x)\\
&\geq K^{A}_r(x) - \epsilon r/2\\
&\geq K^{A, B}_r(x) - \epsilon r/2\\
&= K^{C}_r(x) - \epsilon r/2.
\end{align*}
Therefore, $C = (A,B)$ is optimal for $E$.

\end{proof}

We now give some basic closure properties of the class of sets with optimal Hausdorff oracles. 
\begin{obs}\label{obs:optimalSubset}
Let $F \subseteq E$. If $\dim_H(F) = \dim_H(E)$ and $F$ has an optimal Hausdorff oracle, then $E$ has an optimal Hausdorff oracle.

\end{obs}

We can also show that having optimal Hausdorff oracles is closed under countable unions.
\begin{prop}\label{prop:optimalOraclesClosedUnderUnion}
Let $E_1,E_2,\ldots$ be a countable sequence of sets and let $E = \cup_n E_n$. If every set $E_n$ has an optimal Hausdorff oracle, then $E$ has an optimal Hausdorff oracle.
\end{prop}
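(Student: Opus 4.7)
The plan is to let $A_n$ be an optimal Hausdorff oracle for each $E_n$ and take $A$ to be the join of the countable sequence $(A_n)_{n\in\N}$, viewed as a single subset of $\N$ via the procedure described before Theorem \ref{thm:p2s}. I will argue that this $A$ is Hausdorff optimal for $E$ by verifying the two clauses of the definition separately.

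For the first clause (that $A$ is a Hausdorff oracle for $E$), I would use countable stability of Hausdorff dimension to write $\dim_H(E) = \sup_n \dim_H(E_n)$. Since $A$ computes each $A_n$, one has $\dim^A(x) \leq \dim^{A_n}(x)$ for every $x \in E_n$, and applying the point-to-set principle to $E_n$ gives $\sup_{x \in E_n} \dim^A(x) \leq \dim_H(E_n)$. Taking the supremum over $n$ yields $\sup_{x \in E} \dim^A(x) \leq \dim_H(E)$, and the opposite inequality is immediate from Theorem \ref{thm:p2s}. Hence $A$ is a Hausdorff oracle for $E$.

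For the second clause, fix $B \subseteq \N$ and $\epsilon > 0$, and (using countable stability) choose $n$ with $\dim_H(E_n) \geq \dim_H(E) - \epsilon/2$. Observe that $A$ decomposes as the join $(A_n, A'_n)$, where $A'_n$ packages the remaining sequence $(A_m)_{m\neq n}$ into a single subset of $\N$. Since $A_n$ is optimal for $E_n$, Lemma \ref{lem:joinOptimalisOptimal} implies that $A$ is itself optimal for $E_n$. Applying the second clause of optimality of $A$ for $E_n$ with oracle $B$ and parameter $\epsilon/2$ furnishes a point $x \in E_n \subseteq E$ such that $\dim^{A,B}(x) \geq \dim_H(E_n) - \epsilon/2 \geq \dim_H(E) - \epsilon$ and $K^{A,B}_r(x) \geq K^A_r(x) - (\epsilon/2)\, r$ for almost every $r$, which is stronger than what is required.

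The argument is clean once one observes that $A$ inherits optimality for each individual $E_n$ via Lemma \ref{lem:joinOptimalisOptimal}. The only mildly subtle point, and the main thing to be careful about, is choosing $n$ so that a witness of optimality at the level of $\dim_H(E_n)$ automatically witnesses optimality at the level of $\dim_H(E)$; this is why I take the cushion $\epsilon/2$ in the selection of $n$. No other real obstacle appears.
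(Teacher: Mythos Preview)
Your argument is correct and follows essentially the same route as the paper: take the join $A$ of optimal oracles $A_n$ for the pieces, invoke Lemma~\ref{lem:joinOptimalisOptimal} to see that $A$ is optimal for each $E_n$, and then pick $n$ with $\dim_H(E_n)$ close to $\dim_H(E)$ to produce the required witness. The only cosmetic difference is that the paper additionally joins $A$ with a Hausdorff oracle $B$ for $E$ so that clause~(1) is immediate, whereas you verify directly (via countable stability and $\dim^A(x)\le\dim^{A_n}(x)$ for $x\in E_n$) that $A$ itself is already a Hausdorff oracle for $E$; your verification is valid and in fact slightly cleaner.
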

\begin{proof}
We first note that 
\begin{center}
$\dim_H(E) = \sup_n \dim_H(E_n)$.
\end{center}
For each $n$, let $A_n$ be an optimal Hausdorff oracle for $E_n$. Let $A$ be the join of $A_1, A_2,\ldots$. Let $B$ be a Hausdorff oracle for $E$. Note that, by Lemma \ref{lem:joinOptimalisOptimal}, for every $n$, since $A_n$ is an optimal Hausdorff oracle for $E_n$, $(A,B)$ is optimal for $E_n$.

We now claim that $(A, B)$ is an optimal Hausdorff oracle for $E$. Theorem \ref{thm:p2s} shows that item (1) of the definition of optimal Hausdorff oracles is satisfied. For item (2), let $C \subseteq \N$ be an oracle, and let $\epsilon > 0$. Let $n$ be a number such that $\dim_H(E_n) > \dim_H(E) - \epsilon$. Since $(A,B)$ is Hausdorff optimal for $E_N$, there is a point $x \in E_n$ such that
\begin{enumerate}
\item[(i)] $\dim^{(A, B), C}(x) \geq \dim_H(E_n) - \epsilon \geq \dim_H(E) - \epsilon$, and
\item[(ii)] for almost every $r$,
\begin{center}
$K^{(A, B), C}_r(x) \geq K^{(A,B)}_r(x) - \epsilon r$.
\end{center}
\end{enumerate}
Therefore, item (2) of the definition of optimal Hausdorff oracles is satisfied, and so $(A,B)$ is Hausdorff optimal for $E$.

\end{proof}

\subsection{Outer Measures and Optimal Oracles}\label{ssec:optoraclesouter}

In this section we give a sufficient condition for a set to have optimal Hausdorff oracles. Specifically, we prove that if $\dim_H(E) = s$, and there is a metric outer measure, absolutely continuous with respect to $\Hs$, such that $0 < \mu(E) < \infty$, then $E$ has optimal Hausdorff oracles. Although stated in this general form, the main application of this result (in Section \ref{ssec:setswithopt}) is for the case $\mu = \Hs$.

For every $r\in\N$, let $\mathcal{Q}^n_r$ be the set of all dyadic cubes at precision $r$, i.e., cubes of the form
\begin{center}
$Q = [m_1 2^{-r}, (m_1 + 1) 2^{-r}) \times \ldots \times [m_n 2^{-r}, (m_n + 1) 2^{-r})$,
\end{center}
where $0\leq m_1,\ldots, m_n \leq 2^{r}$. For each $r$, we refer to the $2^{nr}$ cubes in $\mathcal{Q}_r$ as $Q_{r,1}, \ldots, Q_{r,2^{nr}}$. We can identify each dyadic cube $Q_{r,i}$ with the unique dyadic rational $d_{r,i}$ at the center of $Q_{r,i}$.

We now associate, to each metric outer measure, a \textit{discrete semimeasure on the dyadic rationals $\mathbb{D}$}. Recall that discrete semimeasure on $\mathbb{D}^n$ is a function $p: \mathbb{D}^n \to [0,1]$ which satisfies $\Sigma_{r,i} p(d_{r,i}) < \infty$.

Let $E\subseteq\R^n$ and $\mu$ be a metric outer measure such that $0<\mu(E) <\infty$. Define the function $p_\mu : \D^n \rightarrow [0,1]$ by
\begin{center}
$p_{\mu,E}(d_{r,i}) = \frac{\mu(E \cap Q_{r,i})}{r^2\mu(E)}$.
\end{center}

\begin{obs}\label{obs:dyadicSum}
Let $\mu$ be a metric outer measure and $E\subseteq\R^n$ such that $0 < \mu(E) < \infty$. Then for every $r$, every dyadic cube $Q \in \mathcal{Q}_r$, and all $r^\prime > r$,
\begin{center}
$\mu(E \cap Q) = \sum\limits_{\substack{Q^\prime \subset Q\\ Q^\prime \in \mathcal{Q}_{r^\prime}}} \mu(E \cap Q^\prime)$.
\end{center}
\end{obs}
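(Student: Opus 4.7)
The plan is to reduce this observation to the standard fact that a metric outer measure is finitely additive on disjoint intersections of an arbitrary set with $\mu$-measurable sets. I cannot appeal directly to the defining property of a metric outer measure (positive Hausdorff distance implies additivity), because adjacent dyadic cubes share boundary points and hence have zero Hausdorff distance. Instead, I will use the fact that every Borel set is $\mu$-measurable, which is included in the definition of a metric outer measure given earlier.

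First, I would observe that for fixed $r$ and $r' > r$, the cubes $\{Q' \in \mathcal{Q}_{r'} : Q' \subset Q\}$ form a finite partition of $Q$ into exactly $2^{n(r'-r)}$ pairwise disjoint half-open boxes. Each such $Q'$ is Borel, so each $Q'$ is $\mu$-measurable.

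Next, I would invoke the standard lemma: if $A_1, \dots, A_N$ are pairwise disjoint $\mu$-measurable sets and $B$ is any set, then
\begin{equation*}
\mu\Bigl(B \cap \bigcup_{j=1}^{N} A_j\Bigr) = \sum_{j=1}^{N} \mu(B \cap A_j).
\end{equation*}
This follows by induction on $N$ from the measurability condition $\mu(S) = \mu(S \cap A_1) + \mu(S \setminus A_1)$ applied with $S = B \cap \bigcup_{j \leq k} A_j$. (This is a well-known consequence of Carathéodory's criterion and is not in any way special to our setting.) Applying this with $B = E$ and the $A_j$ enumerating the dyadic subcubes $Q' \in \mathcal{Q}_{r'}$ with $Q' \subset Q$, and noting $\bigcup_j A_j = Q$, yields
\begin{equation*}
\mu(E \cap Q) = \sum_{\substack{Q' \subset Q \\ Q' \in \mathcal{Q}_{r'}}} \mu(E \cap Q'),
\end{equation*}
as required.

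The only subtlety, and essentially the only step worth flagging, is the need to route the argument through $\mu$-measurability of Borel sets rather than through the metric outer measure property, for the reason noted above. Once that is recognized, the proof is a one-line application of finite additivity on disjoint measurable pieces. No issue of convergence arises because the sum is finite.
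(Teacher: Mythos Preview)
Your argument is correct. The paper states this as an observation without supplying a proof, so there is nothing to compare against; your route via Carath\'eodory measurability of Borel sets (which the paper builds into its definition of metric outer measure) is the natural one, and your remark that the metric additivity condition alone does not apply directly---because adjacent half-open dyadic cubes have zero Hausdorff distance---is exactly the right diagnosis.
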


\begin{prop}
Let $E\subseteq\R^n$ and $\mu$ be a metric outer measure such that $0 < \mu(E) < \infty$. Relative to some oracle $A$, the function $p_{\mu,E}$ is a lower semi-computable discrete semimeasure.
\end{prop}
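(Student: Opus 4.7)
The plan is to verify the two conditions defining a lower semi-computable discrete semimeasure on $\D^n$: (i) each $p_{\mu,E}(d_{r,i})$ lies in $[0,1]$ and the total sum is finite, and (ii) there is an oracle $A$ relative to which $d_{r,i} \mapsto p_{\mu,E}(d_{r,i})$ is lower semi-computable.

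For (i), the key input is that $\mu$ is a metric outer measure, which forces every Borel set---in particular every dyadic cube $Q_{r,i}$---to be $\mu$-measurable. Since the cubes of $\mathcal{Q}^n_r$ form a finite partition of $[0,1)^n$ into $\mu$-measurable pieces, finite additivity (equivalently, Observation~\ref{obs:dyadicSum} applied with the outer cube taken to be $[0,1)^n$) gives $\sum_i \mu(E \cap Q_{r,i}) = \mu(E)$. Dividing by $r^2\mu(E)$ yields $\sum_i p_{\mu,E}(d_{r,i}) = 1/r^2$ for each $r \geq 1$, so $\sum_{r,i} p_{\mu,E}(d_{r,i}) \leq \sum_{r\geq 1} 1/r^2 < \infty$. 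In particular $0 \leq p_{\mu,E}(d_{r,i}) \leq 1/r^2 \leq 1$, so each value does lie in $[0,1]$.

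For (ii), my plan is to encode the needed real numbers directly into the oracle. Take $A$ to be an oracle that, uniformly in $(r,i)$, contains a Cauchy name (a computable sequence of rationals converging with effective error estimates) for the real number $\mu(E \cap Q_{r,i})$, together with a Cauchy name for $\mu(E)$. Relative to $A$, on input $d_{r,i}$ and stage $k$, a Turing machine can combine a sufficiently close lower rational approximation of $\mu(E \cap Q_{r,i})$ with a sufficiently close upper rational approximation of $r^2\mu(E)$ to obtain a rational lower approximation of $p_{\mu,E}(d_{r,i})$ correct to within $2^{-k}$. Arranging these into a non-decreasing enumeration gives the required lower semi-computation.

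The one point worth flagging is that lower semi-computability is not preserved under division, so encoding only lower enumerations of each measure value would not suffice: effective upper bounds on $\mu(E)$ are also required. Since an oracle may contain arbitrary information, this requirement is costless and the argument is otherwise routine.
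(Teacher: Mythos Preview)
Your proof is correct and follows essentially the same approach as the paper: use Observation~\ref{obs:dyadicSum} to get $\sum_i \mu(E\cap Q_{r,i}) = \mu(E)$ for each $r$ and hence $\sum_{r,i} p_{\mu,E}(d_{r,i}) = \sum_r 1/r^2 < \infty$, and handle effectivity by encoding the relevant reals into the oracle. The only minor difference is that the paper encodes the final values $p_{\mu,E}(d)$ themselves (so $p_{\mu,E}$ is outright computable relative to $A$), whereas you encode the intermediate quantities $\mu(E\cap Q_{r,i})$ and $\mu(E)$ and then note that the quotient is lower semi-computable; either route is fine.
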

\begin{proof}
We can encode the real numbers $p_{\mu,E}(d)$ into an oracle $A$, relative to which $p_{\mu,E}$ is clearly computable.

To see that $p_{\mu, E}$ is indeed a discrete semimeasure, by Observation \ref{obs:dyadicSum},
\begin{align*}
\sum\limits_{r,i} p_{\mu,E}(d_{r,i}) &= \sum\limits_{r} \sum\limits_{i=1}^{2^{2r}} \frac{\mu(E \cap Q_{r,i})}{r^2\mu(E)}\\
&= \sum\limits_{r} \frac{1}{r^2\mu(E)}\sum\limits_{i=1}^{2^{2r}} \mu(E \cap Q_{r,i})\\
&= \sum\limits_{r} \frac{\mu(E)}{r^2\mu(E)}\\
&< \infty.
\end{align*}
\end{proof}

In order to connect the existence of such an outer measure $\mu$ to the existence of optimal oracles, we need to relate the semimeasure $p_\mu$ and Kolmogorov complexity. We achieve this using a fundamental result in algorithmic information theory.

Levin's optimal lower semicomputable subprobability measure, relative to an oracle $A$, on the dyadic rationals $\mathbb{D}$ is defined by
\begin{center}
$\mathbf{m}^A(d) = \sum\limits_{\pi \, : \, U^A(\pi) = d} 2^{-|\pi|}$.
\end{center}

\begin{lem}\label{lem:LevinOptimal}
Let $E\subseteq\R^n$ and $\mu$ be a metric outer measure such that $0 < \mu(E) < \infty$. Let $A$ be an oracle relative to which $p_{\mu, E}$ is lower semi-computable. Then is a constant $\alpha > 0$ such that $\mathbf{m}^A(d) \geq  \alpha p_{\mu,E}(d)$, for every $d \in \D^n$.
\end{lem}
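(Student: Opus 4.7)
The plan is to recognize this as a direct invocation of the universality (or multiplicative dominance) property of Levin's optimal lower semi-computable semi-measure, which is one of the foundational theorems of algorithmic information theory. The preceding proposition already established that $p_{\mu, E}$ is a lower semi-computable discrete semimeasure relative to $A$, so the work is to quote (or re-derive) the fact that $\mathbf{m}^A$ multiplicatively dominates every lower semi-computable semimeasure computable relative to $A$, with a constant depending only on a chosen index for $p_{\mu, E}$.

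For a self-contained argument, I would proceed as follows. Fix an $A$-computable enumeration $(q_t(d))_{t \in \N}$ of nondecreasing rationals with $q_t(d) \nearrow p_{\mu, E}(d)$ for each $d \in \D^n$. Using this enumeration, I would build an $A$-computable request sequence: whenever the approximation $q_t(d)$ crosses a new threshold of the form $2^{-k}$, emit the request $(k+1, d)$. The total weight of these requests is bounded by $2 \sum_d p_{\mu, E}(d) < \infty$, since $p_{\mu, E}$ is a semimeasure. By the Kraft--Chaitin theorem (relativized to $A$), there is an $A$-computable prefix-free machine $M$ that, for each request $(k+1, d)$, supplies an input string of length $k+1$ which it maps to $d$. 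Thus for each $d$, the machine $M$ has an input of length at most $\lceil \log(1/p_{\mu, E}(d)) \rceil + 1$ producing $d$. Since the universal prefix-free oracle machine $U^A$ simulates $M$ with at most a constant additive overhead $c_M$ in program length, this yields $\mathbf{m}^A(d) \geq 2^{-c_M - 2} \, p_{\mu, E}(d)$ for every $d \in \D^n$.

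Setting $\alpha = 2^{-c_M - 2}$ completes the proof. The only subtlety is ensuring that the constant $\alpha$ is genuinely independent of $d$, which follows because $c_M$ depends only on the fixed $A$-program that computes the enumeration $(q_t(d))$, not on the particular dyadic rational. There is no real obstacle here: the argument is a textbook application of the universality of $\mathbf{m}^A$, and the role of the lemma is simply to transfer the geometric information encoded in $\mu$ (via $p_{\mu, E}$) into the algorithmic information framework via Kolmogorov complexity, which will be used in the subsequent sections.
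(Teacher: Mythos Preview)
Your proposal is correct and takes essentially the same approach as the paper: both reduce the claim to the universality (multiplicative optimality) of Levin's a priori semimeasure $\mathbf{m}^A$ among all $A$-lower-semicomputable discrete semimeasures. The paper simply cites this optimality (alongside the Case--Lutz coding theorem, which is not actually needed for the lemma as stated but is used downstream), whereas you give a self-contained derivation via the Kraft--Chaitin request construction; the two arguments are the same in substance.
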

\begin{proof}
Case and Lutz \cite{CasLut15}, generalizing Levin's coding theorem \cite{Levin73, Levin74}, showed that there is a constant $c$ such that
\begin{center}
$\mathbf{m}^A(d_{r,i}) \leq 2^{-K^A(d_{r,i}) + K^A(r) + c}$,
\end{center}
for every $r\in\N$ and $d_{r,i} \in \mathbb{D}^n$. The optimality of $\mathbf{m}^A$ ensures that, for every lower semicomputable (relative to $A$) discrete semimeasure $\nu$ on $\mathbb{D}^n$,
\begin{center}
$\mathbf{m}^A(d_{r,i}) \geq \alpha \nu(d_{r,i})$.
\end{center}
\end{proof}

The results of this section have dealt with the dyadic rationals. However, we ultimately deal with the Kolmogorov complexity of Euclidean points. A result of Case and Lutz \cite{CasLut15} relates the Kolmogorov complexity of Euclidean points with the complexity of dyadic rationals.
\begin{lem}[\cite{CasLut15}]\label{lem:CasLutDyadic}
Let $x \in [0,1)^n$, $A \subseteq \N$, and $r\in \N$. Let $Q_{r,i}$ be the (unique) dyadic cube at precision $r$ containing $x$. Then
\begin{center}
$K^A_r(x) = K^A(d_{r,i}) - O(\log r)$.
\end{center}
\end{lem}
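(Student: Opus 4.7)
The plan is to prove two one-sided bounds that together yield the claimed equality up to an $O(\log r)$ additive term. Both bounds hinge on the observation that the dyadic cube $Q_{r,i}$ containing $x$ and any rational $p\in B_{2^{-r}}(x)\cap\Q^n$ are separated by at most a bounded number of dyadic cubes at precision $r$ (a constant depending only on $n$), so moving between a witness rational for $K^A_r(x)$ and the center $d_{r,i}$ is essentially free of description complexity, modulo encoding the parameter $r$.

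For the upper bound $K^A_r(x)\le K^A(d_{r,i})+O(1)$, I would take any program $\pi$ witnessing $K^A(d_{r,i})$. From $\pi$ the universal machine can recover the dyadic rational $d_{r,i}$, and hence (since $d_{r,i}$ encodes both the precision $r$ and the index $i$) the cube $Q_{r,i}\ni x$. Because $x\in Q_{r,i}$ and $d_{r,i}$ is its center, $\|d_{r,i}-x\|\le \tfrac{\sqrt{n}}{2}\cdot 2^{-r}$. If $n\le 4$ this already places $d_{r,i}\in B_{2^{-r}}(x)\cap\Q^n$, and we are done; for larger fixed $n$ the machine instead outputs a dyadic rational chosen canonically inside $Q_{r,i}$ at precision $r+c_n$ with $c_n=\lceil\log_2\sqrt{n}\rceil$, which costs $O(1)$ additional bits since $n$ is fixed.

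For the lower bound $K^A(d_{r,i})\le K^A_r(x)+O(\log r)$, I would start from a shortest program $\pi$ producing some $p\in B_{2^{-r}}(x)\cap\Q^n$. Given $p$ and the number $r$, the machine can enumerate all dyadic cubes at precision $r$ whose closure meets $B_{2^{-r}}(p)$; there are at most $C_n=O(1)$ such cubes, so a constant number of advice bits suffices to single out the cube $Q_{r,i}$ that actually contains $x$, after which $d_{r,i}$ is determined. The only nontrivial overhead is a prefix-free description of $r$, which requires $K(r)=O(\log r)$ bits; prepending this and the constant selector bits to $\pi$ yields a description of $d_{r,i}$ of length $K^A_r(x)+O(\log r)$.

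Combining the two bounds, $|K^A_r(x)-K^A(d_{r,i})|=O(\log r)$, which is the content of the lemma. The only step that is not completely routine is the selector step in the lower bound: when $x$ lies near the boundary of $Q_{r,i}$ the witness $p$ may land in a neighboring cube, so one must argue carefully that the set of candidate cubes near $p$ has cardinality bounded independently of $r$; this is immediate from the fact that a ball of radius $2^{-r}$ meets only $O(1)$ dyadic cubes of side $2^{-r}$ in $\R^n$.
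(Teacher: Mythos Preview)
The paper does not supply its own proof of this lemma; it is quoted directly from Case and Lutz \cite{CasLut15} and used as a black box. Your argument is the standard one and is essentially what appears in that reference: pass between a near-optimal rational witness for $K^A_r(x)$ and the dyadic center $d_{r,i}$ using $K(r)=O(\log r)$ bits to specify the scale plus $O(1)$ selector bits to pick the correct cube among the boundedly many candidates.

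There is one small glitch in your upper bound for $n>4$. You write that ``the machine instead outputs a dyadic rational chosen canonically inside $Q_{r,i}$ at precision $r+c_n$,'' but refining the precision does not help: from $d_{r,i}$ alone you only know that $x$ lies somewhere in $Q_{r,i}$, and for $n>4$ no single point of $Q_{r,i}$ (at any precision) is guaranteed to lie in $B_{2^{-r}}(x)$. The clean repair is the same selector trick you use in the other direction: from $d_{r,i}$ you obtain a $2^{-(r-c_n)}$-approximation of $x$, and then with $O(1)$ advice bits (to choose among the $O_n(1)$ balls of radius $2^{-r}$ covering $B_{2^{-(r-c_n)}}(d_{r,i})$) plus $O(\log r)$ bits to encode $r$, you land inside $B_{2^{-r}}(x)$. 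This yields $K^A_r(x)\le K^A(d_{r,i})+O(\log r)$ rather than $+O(1)$, which is all the lemma asserts.
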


\begin{lem}\label{lem:mainTheoremEngine}
Let $E\subseteq\R^n$ and $\mu$ be a metric outer measure such that $0 < \mu(E) < \infty$. Let $A$ be an oracle relative to which $p_{\mu, E}$ is lower semi-computable. Then, for every oracle $B \subseteq\N$ and every $\epsilon > 0$, the set
\begin{center}
$N = \{x \in E \, | \, (\exists^\infty) \; K^{A,B}_r(x) < K^A_r(x) - \epsilon r\}$
\end{center}
has $\mu$-measure zero.
\end{lem}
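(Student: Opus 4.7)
The plan is to show that $N$ sits inside the $\limsup$ of a sequence of ``bad'' sets $U_r \subseteq [0,1)^n$ built out of dyadic cubes at precision $r$, whose $\mu$-measures decay geometrically. The conclusion $\mu(N) = 0$ will then follow from the outer-measure Borel-Cantelli lemma, which is immediate from countable subadditivity of $\mu$.

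First, I would use Lemma \ref{lem:CasLutDyadic} to pass from the real-valued complexities $K^A_r(x), K^{A,B}_r(x)$ to the dyadic complexities $K^A(d_{r,i}), K^{A,B}(d_{r,i})$, where $Q_{r,i}$ is the unique dyadic cube at precision $r$ containing $x$. This incurs only an additive $O(\log r)$ error, which I absorb by fixing $0 < \epsilon' < \epsilon$ and working for all $r$ beyond some threshold. Concretely, I define
\[I_r \;=\; \bigl\{\, i \,:\, K^{A,B}(d_{r,i}) \leq K^A(d_{r,i}) - \epsilon' r \,\bigr\} \quad\text{and}\quad U_r \;=\; \bigcup_{i \in I_r} Q_{r,i},\]
and observe that every $x \in N$ lies in $U_r$ for infinitely many $r$.

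Next, I would bound $\mu(E \cap U_r)$. For $i \in I_r$, the trivial inequality $\mathbf{m}^{A,B}(d_{r,i}) \geq 2^{-K^{A,B}(d_{r,i}) - O(1)}$, the Case-Lutz coding theorem $K^A(d_{r,i}) \leq -\log \mathbf{m}^A(d_{r,i}) + O(\log r)$ used in the proof of Lemma \ref{lem:LevinOptimal}, and Lemma \ref{lem:LevinOptimal} itself chain together to yield
\[\mathbf{m}^{A,B}(d_{r,i}) \;\geq\; \frac{2^{\epsilon' r}}{\mathrm{poly}(r)} \cdot p_{\mu,E}(d_{r,i}) \;=\; \frac{2^{\epsilon' r}}{\mathrm{poly}(r) \cdot \mu(E)} \cdot \mu(E \cap Q_{r,i}).\]
Rearranging, summing over $i \in I_r$, and using that $\sum_i \mathbf{m}^{A,B}(d_{r,i}) \leq 1$, we get
\[\mu(E \cap U_r) \;\leq\; \mathrm{poly}(r) \cdot \mu(E) \cdot 2^{-\epsilon' r}.\]

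Finally, since $\sum_r \mathrm{poly}(r) \cdot 2^{-\epsilon' r} < \infty$, countable subadditivity gives $\mu(\limsup_r (E \cap U_r)) = 0$, and hence $\mu(N) = 0$. The only real bookkeeping obstacle is tracking the various additive $O(1)$ and $O(\log r)$ slacks arising from Lemma \ref{lem:CasLutDyadic}, the coding theorem, and Lemma \ref{lem:LevinOptimal}; since we are subtracting the linear term $\epsilon r$, any polynomial-in-$r$ factor is harmless and is absorbed into the choice $\epsilon' < \epsilon$.
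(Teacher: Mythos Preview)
Your argument is correct and uses the same ingredients as the paper's proof: Lemma~\ref{lem:CasLutDyadic} to pass to dyadic rationals, the Case--Lutz coding theorem together with Lemma~\ref{lem:LevinOptimal} to bound $2^{-K^A(d_{r,i})}$ below by $p_{\mu,E}(d_{r,i})$ up to a $\mathrm{poly}(r)$ factor, and the fact that $\sum_d 2^{-K^{A,B}(d)} \leq 1$. The only organizational difference is in how the final estimate is packaged. The paper covers $N$ by a single family $\mathcal{C}_R$ of cubes drawn from \emph{all} precisions $r \geq R$, sums once via Kraft's inequality to obtain $1 \geq 2^{\epsilon R/4}\,\mu(N)/\mu(E)$, and lets $R \to \infty$. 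You instead work precision-by-precision, bounding $\mu(E \cap U_r) \leq \mathrm{poly}(r)\,\mu(E)\,2^{-\epsilon' r}$ and invoking Borel--Cantelli. Your stratified version is perhaps slightly cleaner to read, while the paper's single-sum version avoids the need to introduce the auxiliary $\epsilon' < \epsilon$; but the two are interchangeable repackagings of the same computation.
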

\begin{proof}
Let $B\subseteq\N$ and $\epsilon > 0$. For every $R\in\N$, there is a set $\mathcal{C}_R$ of dyadic cubes satisfying the following.
\begin{itemize}
\item The cubes in $\mathcal{C}_R$ cover $N$.
\item Every $Q_{r,i}$ in $\mathcal{C}_R$ satisfies $r\geq R$.
\item For every $Q_{r,i} \in \mathcal{C}_R$, 
\begin{center}
$K^{A,B}(d_{r,i})< K^A(d_{r,i}) - \epsilon r + O(\log r)$.
\end{center}
\end{itemize}
Note that the last item follows from our definition of $N$ by Lemma \ref{lem:CasLutDyadic}.

Since the family of cubes in $\mathcal{C}_R$ covers $N$, by the subadditive property of $\mu$, 
\begin{center}
$\sum\limits_{Q_{r,i} \in \mathcal{C}_R} \mu(E\cap Q_{r,i}) \geq \mu(N)$.
\end{center}
Thus, for every $R$, by Lemma \ref{lem:LevinOptimal} and Kraft's inequality,
\begin{align*}
1 &\geq \sum\limits_{Q_{r,i} \in \mathcal{C}_R} 2^{-K^{A,B}(d_{r,i})} \\ 
&\geq \sum\limits_{Q_{r,i} \in \mathcal{C}_R} 2^{\epsilon r - K^{A}(d_{r,i})}\\
&\geq \sum\limits_{Q_{r,i} \in \mathcal{C}_R} 2^{\epsilon r} \mathbf{m}^A(d_{r,i})\\
&\geq \sum\limits_{Q_{r,i} \in \mathcal{C}_R} 2^{\epsilon r - K^A(r) + c}\alpha p_{\mu,E}(d_{r,i})\\
&\geq \sum\limits_{Q_{r,i} \in \mathcal{C}_R} 2^{\epsilon r/2 } p_{\mu,E}(d_{r,i})\\
&\geq \sum\limits_{Q_{r,i} \in \mathcal{C}_R} 2^{\epsilon r/2} \frac{\mu(E \cap Q_{r,i})}{r^2\mu(E)}\\
&\geq \sum\limits_{Q_{r,i} \in \mathcal{C}_R} 2^{\epsilon r/4} \frac{\mu(E \cap Q_{r,i})}{r^2\mu(E)}\\
&\geq 2^{\epsilon R/4} \sum\limits_{Q_{r,i} \in \mathcal{C}_R} \frac{\mu(E\cap Q_{r,i})}{\mu(E)}\\
&\geq 2^{\epsilon R/4} \frac{\mu(N)}{\mu(E)}.
\end{align*}

Since $R$ can be arbitrarily large, and $0< \mu(E) < \infty$, the conclusion follows.
\end{proof}

We now have the machinery in place to prove the main theorem of this section. 
\begin{thm}\label{thm:mainTheorem}
Let $E\subseteq\R^n$ with $\dim_H(E) = s$. Suppose there is a metric outer measure $\mu$ such that
\begin{center}
$0<\mu(E)<\infty$,
\end{center}  
and either 
\begin{enumerate}
\item $\mu \ll \mathcal{H}^{s-\delta}$, for every $\delta > 0$, or
\item $\Hs \ll \mu$ and $\Hs(E) > 0$.
\end{enumerate}
Then $E$ has an optimal Hausdorff oracle $A$.
\end{thm}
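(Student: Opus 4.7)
The plan is to apply Lemma~\ref{lem:mainTheoremEngine} as the engine. First, using the preceding Proposition, fix an oracle $A_0$ relative to which $p_{\mu, E}$ is lower semi-computable, and, using the point-to-set principle, fix a Hausdorff oracle $A_1$ for $E$. Set $A = (A_0, A_1)$. Because $A$ encodes $A_1$, adding $A_0$ can only reduce Kolmogorov complexities, so $A$ is itself a Hausdorff oracle for $E$; this delivers condition (1) of the definition of optimality. Because $A$ also encodes $A_0$, $p_{\mu, E}$ is lower semi-computable relative to $A$, so Lemma~\ref{lem:mainTheoremEngine} applies to this choice of $A$.

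To establish condition (2), fix an arbitrary oracle $B \subseteq \N$ and $\epsilon > 0$, and let
\[N = \{x \in E \,|\, (\forall^\infty r)\, K^{A, B}_r(x) \geq K^A_r(x) - \epsilon r\}.\]
By Lemma~\ref{lem:mainTheoremEngine}, $\mu(E \setminus N) = 0$. The crux is to show $\dim_H(N) = s$; once this holds, the point-to-set principle applied to $N$ with the oracle $(A, B)$ gives
\[\sup_{x \in N} \dim^{A, B}(x) \geq \dim_H(N) = s,\]
so there is some $x \in N$ with $\dim^{A, B}(x) \geq s - \epsilon$, which together with membership $x \in N$ is exactly condition (2).

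It remains to verify $\dim_H(N) = s$ in each of the two cases, and this is where the two hypotheses get used. The upper bound $\dim_H(N) \leq s$ is immediate from $N \subseteq E$. For the lower bound in Case~(2), $\Hs \ll \mu$ together with $\mu(E \setminus N) = 0$ forces $\Hs(E \setminus N) = 0$, whence $\Hs(N) \geq \Hs(E) > 0$ by subadditivity, so $\dim_H(N) \geq s$. For the lower bound in Case~(1), $0 < \mu(E) < \infty$ and $\mu(E \setminus N) = 0$ give $\mu(N) > 0$, and then the contrapositive of $\mu \ll \mathcal{H}^{s - \delta}$ yields $\mathcal{H}^{s - \delta}(N) > 0$, hence $\dim_H(N) \geq s - \delta$, for every $\delta > 0$; letting $\delta \to 0$ finishes the proof. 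The main obstacle has already been packaged into Lemma~\ref{lem:mainTheoremEngine}: once the measure-theoretic fact that the ``bad'' set of complexity drops has $\mu$-measure zero is in hand, both halves of the hypothesis are merely the two natural conditions on $\mu$ that let us pass from $\mu$-null subsets of $E$ to subsets that do not lower its Hausdorff dimension.
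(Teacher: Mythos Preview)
Your argument is correct and follows essentially the same route as the paper: both construct $A$ as a Hausdorff oracle relative to which $p_{\mu,E}$ is computable, invoke Lemma~\ref{lem:mainTheoremEngine} to show the set of ``bad'' points has $\mu$-measure zero, and then use the two absolute-continuity hypotheses to force the complementary ``good'' set to have full Hausdorff dimension. The only cosmetic differences are that the paper argues by contradiction and uses $N$ to denote the bad set rather than the good one.
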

\begin{proof}
Let $A\subseteq\N$ be a Hausdorff oracle for $E$ such that $p_{\mu,E}$ is computable relative to $A$. Note that such an oracle exists by the point-to-set principle and routine encoding. We will show that $A$ is optimal for $E$. 

For the sake of contradiction, suppose that there is an oracle $B$ and $\epsilon > 0$ such that, for every $x \in E$ either
\begin{enumerate}
\item $\dim^{A,B}(x) < s - \epsilon$, or
\item there are infinitely many $r$ such that $K^{A,B}_r(x) < K^A_r(x) - \epsilon r$.
\end{enumerate}

Let $N$ be the set of all $x$ for which the second item holds. By Lemma \ref{lem:mainTheoremEngine}, $\mu(N) = 0$. We also note that, by the point-to-set principle, 
\begin{center}
$\dim_H(E - N) \leq s - \epsilon$,
\end{center}
and so $\Hs(E - N) = 0$.

To achieve the desired contradiction, we first assume that $\mu \ll \mathcal{H}^{s-\delta}$, for every $\delta > 0$. Since $\mu \ll \mathcal{H}^{s-\delta}$, and $\dim_H(E-N) < s-\epsilon$,
\begin{center}
$\mu(E - N) = 0$.
\end{center}
Since $\mu$ is a metric outer measure, 
\begin{align*}
0 &< \mu(E)\\
&\leq \mu(N) + \mu(E-N)\\
&= 0,
\end{align*}
a contradiction. 

Now suppose that $\Hs \ll \mu$ and $\Hs(E) > 0$. Then, since $\Hs$ is an outer measure, $\Hs(E) > 0$ and $\Hs(E - N) = 0$ we must have $\Hs(N) >0$. However this implies that  $\mu(N) > 0$, and we again have the desired contradiction. Thus $A$ is an optimal Hausdorff oracle for $E$ and the proof is complete.
\end{proof}

Recall that $E \subseteq [0,1)^n$ is called an $s$-set if
\begin{center}
$0 < \Hs(E) <\infty$.
\end{center}
Since $\Hs$ is a metric outer measure, and trivially absolutely continuous with respect to itself, we have the following corollary.
\begin{cor}
Let $E \subseteq [0,1)^n$ be an $s$-set. Then there is an optimal Hausdorff oracle for $E$.
\end{cor}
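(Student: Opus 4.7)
The plan is to deduce this corollary directly from Theorem \ref{thm:mainTheorem} by taking the metric outer measure $\mu$ to be the Hausdorff measure $\mathcal{H}^s$ itself. First I would verify the three hypotheses of the theorem in order.

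Since $E$ is an $s$-set, by definition $0 < \mathcal{H}^s(E) < \infty$. From this it follows immediately from the definition of Hausdorff dimension that $\dim_H(E) = s$, because for any $t > s$ we have $\mathcal{H}^t(E) = 0$ (as $\mathcal{H}^s(E) < \infty$) and for any $t < s$ we have $\mathcal{H}^t(E) = \infty$ (as $\mathcal{H}^s(E) > 0$). So the hypothesis $\dim_H(E) = s$ is satisfied.

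Next I would set $\mu = \mathcal{H}^s$. The remark in Section \ref{ssection:outerMeasure} notes that $\mathcal{H}^s$ is a metric outer measure, and the $s$-set condition gives $0 < \mu(E) < \infty$. For the second alternative of the theorem, I need $\mathcal{H}^s \ll \mu$ and $\mathcal{H}^s(E) > 0$. Both hold trivially: absolute continuity is immediate since $\mu = \mathcal{H}^s$, and $\mathcal{H}^s(E) > 0$ by the $s$-set assumption. Hence condition (2) of Theorem \ref{thm:mainTheorem} is satisfied.

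Applying Theorem \ref{thm:mainTheorem} then yields an optimal Hausdorff oracle for $E$, completing the proof. There is no real obstacle here: the work was done in establishing Theorem \ref{thm:mainTheorem}, and this corollary is just the specialization to the self-referential case $\mu = \mathcal{H}^s$, which is the most natural instance of the general sufficient condition.
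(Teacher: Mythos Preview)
Your proposal is correct and matches the paper's approach exactly: the paper deduces the corollary from Theorem \ref{thm:mainTheorem} by taking $\mu = \mathcal{H}^s$, noting that $\mathcal{H}^s$ is a metric outer measure and trivially absolutely continuous with respect to itself, which is precisely condition (2) of the theorem.
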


\subsection{Sets with optimal Hausdorff oracles}\label{ssec:setswithopt}
We now show that every analytic set has optimal Hausdorff oracles. 
\begin{lem}\label{lem:analyticHasOptimal}
Every analytic set $E$ has optimal Hausdorff oracles.
\end{lem}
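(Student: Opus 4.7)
The plan is to reduce, in three stages, from an arbitrary analytic set to $s$-sets (for which Theorem~\ref{thm:mainTheorem} and its corollary already give an optimal Hausdorff oracle), using the two closure properties \textbf{Observation~\ref{obs:optimalSubset}} (passing to a subset of full Hausdorff dimension) and \textbf{Proposition~\ref{prop:optimalOraclesClosedUnderUnion}} (countable unions) as the glue.

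First, by item~(2) of Theorem~\ref{thm:compactSset}, $E$ contains a $\Sigma^0_2$ subset $F \subseteq E$ with $\dim_H(F) = \dim_H(E)$. Since $\R^n$ is $\sigma$-compact, a $\Sigma^0_2$ (i.e., $F_\sigma$) set is a countable union of compact sets; write $F = \bigcup_n K_n$. By Observation~\ref{obs:optimalSubset} it suffices to produce an optimal Hausdorff oracle for $F$, and by Proposition~\ref{prop:optimalOraclesClosedUnderUnion} it therefore suffices to produce an optimal Hausdorff oracle for each compact $K_n$.

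So fix a compact set $K \subseteq \R^n$ and let $t = \dim_H(K)$. For every $\delta > 0$ we have $\mathcal{H}^{t-\delta}(K) = \infty > 0$, so item~(1) of Theorem~\ref{thm:compactSset} produces a compact subset $F_\delta \subseteq K$ with $0 < \mathcal{H}^{t-\delta}(F_\delta) < \infty$; that is, $F_\delta$ is a $(t-\delta)$-set. By the corollary to Theorem~\ref{thm:mainTheorem} (applied with $\mu = \mathcal{H}^{t-\delta}$), each $F_\delta$ has an optimal Hausdorff oracle. Taking $\delta_k = 1/k$ and forming $G = \bigcup_k F_{1/k} \subseteq K$, Proposition~\ref{prop:optimalOraclesClosedUnderUnion} yields an optimal Hausdorff oracle for $G$. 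Since
\[
\dim_H(G) = \sup_k \dim_H(F_{1/k}) = \sup_k (t - 1/k) = t = \dim_H(K),
\]
Observation~\ref{obs:optimalSubset} promotes the optimal Hausdorff oracle from $G$ to $K$. Applying Proposition~\ref{prop:optimalOraclesClosedUnderUnion} once more across the $K_n$ gives an optimal Hausdorff oracle for $F$, and then Observation~\ref{obs:optimalSubset} gives one for $E$.

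The main conceptual step is realizing that although $K$ itself need not have finite $\mathcal{H}^t$ measure (so Theorem~\ref{thm:mainTheorem} does not apply to $K$ directly), one can still reach it from below by a countable family of $s$-set approximations $F_{1/k}$ whose dimensions increase to $\dim_H(K)$; the rest is bookkeeping with the two closure properties. No finer analysis of Kolmogorov complexity beyond what is encapsulated in Theorem~\ref{thm:mainTheorem} is needed.
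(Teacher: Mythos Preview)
Your proof is correct and is essentially the same as the paper's own proof: both reduce analytic $\to$ $\Sigma^0_2$ via Theorem~\ref{thm:compactSset}(2), write the $\Sigma^0_2$ set as a countable union of compacts, approximate each compact set from below by $(t-1/k)$-sets via Theorem~\ref{thm:compactSset}(1), invoke Theorem~\ref{thm:mainTheorem} on those $s$-sets, and glue everything together with Proposition~\ref{prop:optimalOraclesClosedUnderUnion} and Observation~\ref{obs:optimalSubset}. The only difference is expository order (you work top-down, the paper bottom-up).
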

\begin{proof}
We begin by assuming that $E$ is compact, and let $s = \dim_H(E)$. Then for every $t < s$, $\mathcal{H}^t(E) > 0$. Thus, by Theorem \ref{thm:compactSset}(1), there is a sequence of compact subsets $F_1,F_2,\ldots$ of $E$ such that 
\begin{center}
$\dim_H(\bigcup_n F_n) = \dim_H(E)$,
\end{center}
and, for each $n$,
\begin{center}
$0 < \mathcal{H}^{s_n}(F_n) < \infty$,
\end{center}
where $s_n = s - 1/n$. Therefore, by Theorem \ref{thm:mainTheorem}, each set $F_n$ has optimal Hausdorff oracles. Hence, by Proposition \ref{prop:optimalOraclesClosedUnderUnion}, $E$ has optimal Hausdorff oracles and the conclusion follows.

We now show that every $\Sigma^0_2$ set has optimal Hausdorff oracles. Suppose $E = \cup_n F_n$ is $\Sigma^0_1$, where each $F_n$ is compact. As we have just seen, each $F_n$ has optimal Hausdorff oracles. Therefore, by Proposition \ref{prop:optimalOraclesClosedUnderUnion}, $E$ has optimal Hausdorff oracles and the conclusion follows. 

Finally, let $E$ be analytic. By Theorem \ref{thm:compactSset}(2), there is a $\Sigma^0_2$ subset $F$ of the same Hausdorff dimension as $E$. We have just seen that $F$ must have an optimal Hausdorff oracle. Since $\dim_H(F) = \dim_H(E)$, by Observation \ref{obs:optimalSubset} $E$ has optimal Hausdorff oracles, and the proof is complete
\end{proof}

Crone, Fishman and Jackson \cite{CroFisJac20} have recently shown that, assuming the Axiom of Determinacy (AD)\footnote{Note that AD is inconsistent with the axiom of choice.}, \textit{every} subset $E$ has a Borel subset $F$ such that $\dim_H(F) = \dim_H(E)$. This, combined with Lemma \ref{lem:analyticHasOptimal}, yields the following corollary. 
\begin{cor}
Assuming AD, every set $E\subseteq \R^n$ has optimal Hausdorff oracles.
\end{cor}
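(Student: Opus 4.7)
The plan is to chain together the three ingredients already on the page: the Crone--Fishman--Jackson theorem supplying Borel subsets of full Hausdorff dimension under AD, Lemma \ref{lem:analyticHasOptimal} giving optimal Hausdorff oracles for analytic sets, and Observation \ref{obs:optimalSubset} letting us pass an optimal oracle from a full-dimensional subset back up to the ambient set. There is no real obstacle to confront; the corollary is essentially a one-line reduction, and the work of the section is already done in proving these components.

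Concretely, I would fix an arbitrary $E \subseteq \R^n$ and invoke the Crone--Fishman--Jackson result (valid under AD) to obtain a Borel subset $F \subseteq E$ with $\dim_H(F) = \dim_H(E)$. Since every Borel set is analytic, Lemma \ref{lem:analyticHasOptimal} produces an optimal Hausdorff oracle $A$ for $F$. Applying Observation \ref{obs:optimalSubset} to the pair $F \subseteq E$ then transfers this optimal oracle from $F$ to $E$, completing the argument.

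The only conceptual point worth flagging is the direction of transfer in Observation \ref{obs:optimalSubset}: we need \emph{upward} passage of optimality along an equal-dimension inclusion, which is precisely what that observation provides. The equality $\dim_H(F) = \dim_H(E)$ is exactly the hypothesis required to invoke it, so the chain closes cleanly. No quantitative or algorithmic work beyond what the previous lemmas already supply is needed, which is why this statement is stated as a corollary rather than a theorem.
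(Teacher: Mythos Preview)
Your proposal is correct and matches the paper's approach exactly: the paper derives the corollary directly from the Crone--Fishman--Jackson result combined with Lemma~\ref{lem:analyticHasOptimal}, with Observation~\ref{obs:optimalSubset} (implicitly) closing the gap from $F$ back up to $E$. Your write-up is, if anything, slightly more explicit than the paper's one-line justification.
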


\begin{lem}\label{lem:regularHasOptimalHaus}
Suppose that $E\subseteq\R^n$ satisfies $\dim_H(E) = \dim_P(E)$. Then $E$ has an optimal Hausdorff oracle. Moreover, the join $(A,B)$ is an optimal Hausdorff oracle, where $A$ and $B$ are the Hausdorff and packing oracles, respectively, of $E$.
\end{lem}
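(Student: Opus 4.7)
The plan is to show that $(A,B)$, the join of a Hausdorff oracle $A$ and a packing oracle $B$ for $E$, satisfies both conditions of the definition of optimal Hausdorff oracle. Write $s = \dim_H(E) = \dim_P(E)$. The intuition is that under the joint oracle, the point-to-set principle pins the effective Hausdorff dimension at $s$ from above via $A$ and the effective packing dimension at $s$ from above via $B$, so the $\liminf$ and $\limsup$ of $K^{(A,B)}_r(x)/r$ are squeezed together. That squeeze is exactly what is needed to keep the complexity ``stiff'' against further relativization.

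First I verify condition (1). Since $A$ is already a Hausdorff oracle, $\sup_{x\in E}\dim^{(A,B)}(x) \leq \sup_{x\in E}\dim^A(x) = s$, while the point-to-set principle gives the reverse inequality. So $(A,B)$ is a Hausdorff oracle for $E$.

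For condition (2), fix an oracle $C \subseteq \N$ and $\varepsilon > 0$. Because $A$ is contained in the join $((A,B),C)$, the oracle $((A,B),C)$ is also a Hausdorff oracle for $E$, so by Theorem~\ref{thm:p2s} one can choose $x \in E$ with $\dim^{(A,B),C}(x) \geq s - \varepsilon/4$. From the $\liminf$ definition of $\dim$, for all sufficiently large $r$,
\[
K^{(A,B),C}_r(x) \;\geq\; (s - \varepsilon/2)\, r.
\]
On the other hand, since $B$ is a packing oracle and adding information only decreases effective strong dimension,
\[
\Dim^{(A,B)}(x) \;\leq\; \Dim^B(x) \;\leq\; \dim_P(E) \;=\; s,
\]
so from the $\limsup$ definition of $\Dim$, for all sufficiently large $r$,
\[
K^{(A,B)}_r(x) \;\leq\; (s + \varepsilon/2)\, r.
\]
Combining the two bounds, for almost every $r$,
\[
K^{(A,B),C}_r(x) \;\geq\; (s - \varepsilon/2)\, r \;=\; (s + \varepsilon/2)\, r - \varepsilon\, r \;\geq\; K^{(A,B)}_r(x) - \varepsilon\, r,
\]
while $\dim^{(A,B),C}(x) \geq s - \varepsilon/4 \geq s - \varepsilon$. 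This is exactly condition (2) for the oracle $(A,B)$, so $(A,B)$ is Hausdorff optimal for $E$.

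There is no real obstacle here: the argument is essentially a pigeonhole between the two definitions of effective dimension. The only subtle point is to commit to choosing $x$ using the Hausdorff dimension lower bound (which forces $\liminf K^{(A,B),C}_r(x)/r$ to be large) while using the packing dimension upper bound to control $\limsup K^{(A,B)}_r(x)/r$; the budget $\varepsilon/4$ versus $\varepsilon/2$ is chosen so that the two ``almost every $r$'' clauses match up with room to spare.
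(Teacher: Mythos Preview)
Your proof is correct and follows essentially the same approach as the paper: verify condition~(1) directly from the definition of Hausdorff oracle, then for condition~(2) pick $x$ via the point-to-set lower bound so that $K^{(A,B),C}_r(x)$ is forced up near $sr$, and use the packing-oracle upper bound on $\Dim^{(A,B)}(x)$ to cap $K^{(A,B)}_r(x)$ near $sr$, sandwiching the two. The only cosmetic differences are the $\varepsilon$-bookkeeping (you use $\varepsilon/2$ where the paper uses $\varepsilon/4$ in intermediate steps) and that the paper bounds $K^{A,B}_r(x)\le K^B_r(x)$ directly rather than passing through $\Dim^{(A,B)}(x)$; these are equivalent.
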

\begin{proof}
Let $A$ be a Hausdorff oracle for $E$ and let $B$ be a packing oracle for $E$. We claim that that the join $(A, B)$ is an optimal Hausdorff oracle for $E$. By the point-to-set principle, and the fact that extra information cannot increase effective dimension,
\begin{align*}
\dim_H(E) &= \sup\limits_{x\in E} \dim^{A}(x)\\
&\geq \sup\limits_{x\in E} \dim^{A,B}(x)\\
&\geq \dim_H(E).
\end{align*} 
Therefore  
\begin{center}
$\dim_H(E) = \sup\limits_{x\in E} \dim^{A,B}(x)$,
\end{center}
and the first condition of optimal Hausdorff oracles is satisfied.

Let $C\subseteq\N$ be an oracle and $\epsilon > 0$. By the point-to-set principle, 
\begin{center}
$\dim_H(E) \leq \sup\limits_{x\in E} \dim^{A,B,C}(x)$,
\end{center}
so there is an $x\in E$ such that 
\begin{center}
$\dim_H(E) - \epsilon /4 < \dim^{A,B,C}(x)$.
\end{center}

Let $r$ be sufficiently large. Then, by our choice of $B$ and the fact that additional information cannot increase the complexity of a point,
\begin{align*}
K^{A,B}_r(x) &\leq K^{B}_r(x)\\
&\leq \dim_P(E)r + \epsilon r/ 4\\
&= \dim_H(E)r + \epsilon r /4 \\
&< \dim^{A,B,C}(x)r + \epsilon r/ 2\\
&\leq K_r^{A,B,C}(x) + \epsilon r.
\end{align*}

Since the oracle $C$ and $\epsilon$ were arbitrarily, the proof is complete.
\end{proof}

\subsection{Sets without optimal Hausdorff oracles}\label{ssec:counterexamOptimal}
In the previous section, we gave general conditions for a set $E$ to have optimal Hausdorff oracles. Indeed, we saw that under the axiom of determinacy, every set has optimal Hausdorff oracles.

However, assuming the axiom of choice (AC) and the continuum hypothesis (CH), we are able to construct sets without optimal Hausdorff oracles. 

\begin{lem}\label{lem:setRwithoutOptimalOracles}
Assume AC and CH. Then, for every $s\in (0,1)$, there is a subset $E\subseteq\R$ with $\dim_H(E) = s$ such that $E$ does not have optimal Hausdorff oracles.
\end{lem}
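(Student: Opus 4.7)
The plan is a transfinite recursion of length $\omega_1$ using CH to enumerate all candidate pairs $(A,k)$ whose destroying role must be built into $E$.

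By CH, fix an enumeration $\{(A_\alpha, k_\alpha) : \alpha < \omega_1\}$ of $\mathcal{P}(\N)\times\N_+$. I would recursively construct a sequence $(x_\alpha, B_\alpha)_{\alpha<\omega_1}$ in $[0,1]\times\mathcal{P}(\N)$ and set $E = \{x_\alpha : \alpha < \omega_1\}$. At stage $\alpha$, first let $B_\alpha$ code the countable prior data $\langle x_\beta, A_\beta, k_\beta : \beta<\alpha\rangle$ together with, for each $\beta<\alpha$, a sparse computable set $R_{\alpha,\beta}\subseteq\N$ of scales (infinite, of asymptotic density zero, with the $R_{\alpha,\beta}$ pairwise disjoint across $\beta$) and, for each $r\in R_{\alpha,\beta}$, an effective enumeration $\mathcal{L}_{\alpha,\beta,r}$ of $\lfloor 2^{(s-1/k_\beta)r}\rfloor$ dyadic cubes of side $2^{-r}$, each cube being both $A_\alpha$- and $A_\beta$-incompressible in the sense $K^{A_\alpha}(Q),K^{A_\beta}(Q)\ge sr$ (such cubes are plentiful, since at most $2\cdot 2^{sr}\ll 2^r$ cubes can fail either bound). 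Then choose $x_\alpha\in[0,1]$ so that
\begin{itemize}
\item[(a)] $\dim(x_\alpha)=s$ and $\dim^{A_\alpha}(x_\alpha)=s$;
\item[(b)] for every $\beta<\alpha$ and sufficiently large $r\in R_{\alpha,\beta}$, the point $x_\alpha$ lies in some $Q\in\mathcal{L}_{\alpha,\beta,r}$ nested inside the cubes pinned at previous scales.
\end{itemize}
Condition (b) yields $K^{A_\beta,B_\beta}_r(x_\alpha)\le (s-1/k_\beta)r+O(\log r)$ (by specifying $Q$ via its index in the $B_\beta$-coded listing) and $K^{A_\beta}_r(x_\alpha)\ge sr-O(\log r)$ (by the $A_\beta$-incompressibility of $Q$), so
\[
K^{A_\beta,B_\beta}_r(x_\alpha)<K^{A_\beta}_r(x_\alpha)-\tfrac{1}{2k_\beta}r
\]
for cofinitely many $r\in R_{\alpha,\beta}$, whence $x_\alpha\notin N(A_\beta,B_\beta,\tfrac{1}{2k_\beta})$.

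The verification $\dim_H(E)=s$ follows from the point-to-set principle: (a) with the empty oracle gives $\dim_H(E)\le s$, and (a) together with the fact that every oracle appears in the enumeration as some $A_\beta$ (with $\dim^{A_\beta}(x_\beta)=s$) gives $\dim_H(E)\ge s$. To refute the optimality of any oracle $A$, pick any $k\in\N_+$ and let $\beta$ be such that $(A_\beta,k_\beta)=(A,k)$. By (b) at each stage $\alpha>\beta$, $x_\alpha\notin N(A,B_\beta,\tfrac{1}{2k})$, so
\[
N(A,B_\beta,\tfrac{1}{2k})\cap E \subseteq \{x_\gamma : \gamma\le\beta\},
\]
a countable set of Hausdorff dimension $0<s$. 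By the proposition preceding Lemma~\ref{lem:joinOptimalisOptimal} characterising optimality via $N$, $A$ is not Hausdorff optimal for $E$.

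The main obstacle is the existence of $x_\alpha$ at each stage. Condition (b) imposes a countable family of ``pinning'' constraints — at each $r\in\bigcup_{\beta<\alpha}R_{\alpha,\beta}$ the length-$r$ prefix of $x_\alpha$ must land in one of the listed cubes, nested inside the pinnings at smaller compression scales — while (a) requires both liminf rates $K_r(x_\alpha)/r$ and $K^{A_\alpha}_r(x_\alpha)/r$ to equal $s$. The hard part will be to carry this out concretely. I plan to build the binary expansion of $x_\alpha$ scale by scale: at each $r\in R_{\alpha,\beta}$ pick a suitably nested $Q\in\mathcal{L}_{\alpha,\beta,r}$ (which exists because the listing size $2^{(s-1/k_\beta)r}$ grows fast enough that sufficiently many listed cubes lie inside any previously pinned cube, provided the gaps between consecutive elements of $R_{\alpha,\beta}$ are large enough), and fix the length-$r$ prefix of $x_\alpha$ to be inside $Q$; at all other scales, append bits random at rate $s$ relative to the countable join of $A_\alpha$ and all prior oracles. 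The asymptotic density-zero condition on $\bigcup_{\beta<\alpha}R_{\alpha,\beta}$ together with the incompressibility of the pinned cubes ensures that the pinnings do not perturb the liminf rates required by (a).
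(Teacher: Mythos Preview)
Your overall architecture (transfinite recursion of length $\omega_1$, diagonalising against all oracles, then invoking the $N(A,B,\epsilon)$ characterisation) is the same as the paper's, but the mechanism you use for the diagonalisation has a timing problem that breaks the argument.

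You define $B_\alpha$ at stage $\alpha$ to encode the listings $\mathcal{L}_{\alpha,\beta,r}$ for $\beta<\alpha$, and you explicitly require these listings to consist of cubes that are $A_\alpha$-incompressible (so that condition (a) survives the pinnings). But when you go to refute a candidate oracle $A=A_\beta$, the witness you invoke is $B_\beta$, constructed back at stage $\beta$. That oracle only carries the listings $\mathcal{L}_{\beta,\gamma,r}$ for $\gamma<\beta$; it has no access to $\mathcal{L}_{\alpha,\beta,r}$ for $\alpha>\beta$, because those listings depend on $A_\alpha$ and were defined later. So the step ``specifying $Q$ via its index in the $B_\beta$-coded listing'' is unjustified: the listing pinning $x_\alpha$ is not in $B_\beta$. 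Dropping the $A_\alpha$-incompressibility requirement to make the listings $\beta$-only does not help, since then nothing prevents every cube in $\mathcal{L}_{\beta,r}$ from being $A_\alpha$-compressible (there are only $2^{(s-1/k_\beta)r}$ of them, far fewer than the $2^{sr}$ cubes that can have $K^{A_\alpha}(Q)<sr$), and condition (a) collapses. Defining $B_\beta$ after the whole recursion instead would force it to encode uncountably many $\alpha$-indexed listings.

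The paper avoids this by a different diagonalisation device. At stage $\alpha$ it picks a single point $x_\alpha$ random relative to $A_\alpha$ and everything built so far, and the actual element $y_\alpha\in E$ is obtained by splicing into the binary expansion of $x_\alpha$, on a fixed sparse schedule of intervals $(a_n,b_n]$, segments of the earlier $x_{f_\alpha(n)}$. To refute $A_\alpha$ one then takes $B$ to encode $x_\alpha$ together with the countably many $y_\beta$ for $\beta\le\alpha$; for every later $y_\beta$ there are infinitely many $n$ with $y_\beta[a_n\ldots b_n]=x_\alpha[a_n\ldots b_n]$, so $B$ crushes the complexity at those scales while $A_\alpha$ (for which $x_\alpha$ is random) does not. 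The point is that a \emph{single} real $x_\alpha$ plays the role your listings were meant to play, and it is fixed at stage $\alpha$, so the refuting oracle need only encode countably much data already available at that stage.
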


Let $s \in (0,1)$. We begin by defining two sequences of natural numbers, $\{a_n\}$ and $\{b_n\}$. Let $a_1 = 2$, and $b_1 = \lfloor 2/s\rfloor$. Inductively define $a_{n+1} = b_{n}^2$ and $b_{n+1} = \lfloor a_{n+1}/s\rfloor$. Note that 
\begin{center}
$\lim_n a_n / b_n = s$.
\end{center}

Using AC and CH, we order the subsets of the natural numbers such that every subset has countably many predecessors. For every countable ordinal $\alpha$, let $f_\alpha : \N \to \{\beta \, | \, \beta < \alpha\}$ be a function such that each ordinal $\beta$ strictly less than $\alpha$ is mapped to by infinitely many $n$. Note that such a function exists, since the range is countable assuming CH.

We will define real numbers $x_\alpha$, $y_\alpha$ via transfinite induction. Let $x_1$ be a real which is random relative to $A_1$. Let $y_1$ be the real whose binary expansion is given by
\begin{align*}
y_1[r] = \begin{cases}
0 &\text{ if } a_n < r \leq b_n \text{ for some } n \in \N \\
x_1[r] &\text{ otherwise}
\end{cases}
\end{align*}

For the induction step, suppose we have defined our points up to $\alpha$. Let $x_\alpha$ be a real number which is random relative to the join of $\bigcup_{\beta < \alpha} (A_\beta, x_\beta)$ and $A_\alpha$. This is possible, as we are assuming that this union is countable. Let $y_\alpha$ be the point whose binary expansion is given by
\begin{align*}
y_\alpha[r] = \begin{cases}
x_\beta[r] &\text{ if } a_n < r \leq b_n, \text{ where } f_\alpha(n) = \beta \\
x_\alpha[r] &\text{ otherwise}
\end{cases}
\end{align*}

Finally, we define our set $E = \{y_\alpha\}$. We now claim that $\dim_H(E) = s$, and that $E$ does not have an optimal Hausdorff oracle. 

\begin{lem}
The Hausdorff dimension of $E$ is $s$.
\end{lem}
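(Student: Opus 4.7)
My approach is to prove $\dim_H(E) \ge s$ and $\dim_H(E) \le s$ separately. The lower bound comes from the point-to-set principle together with the genericity of each $x_\alpha$, while the upper bound will require a covering argument that exploits the random/copied decomposition of $y_\alpha$.

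For the lower bound, I fix an arbitrary oracle $B \subseteq \N$. Using the AC+CH enumeration of the subsets of $\N$, write $B = A_\gamma$ for some countable ordinal $\gamma$, and pick any countable $\alpha > \gamma$. By construction $x_\alpha$ is random relative to the join of $A_\gamma$ and all prior data, so $K^B_r(x_\alpha) \ge r - O(1)$ for every $r$. Because $y_\alpha[r] = x_\alpha[r]$ at every $r \notin \bigcup_n (a_n, b_n]$, the complexity $K^B_r(y_\alpha)$ is bounded below, up to $O(\log r)$, by the count of such ``non-copy'' positions in $[1,r]$. At the critical precisions $r = b_n$, this count equals $b_n - \sum_{k=1}^n (b_k - a_k) = a_n - \sum_{k=1}^{n-1}(b_k - a_k)$. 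Since $a_n = b_{n-1}^2$ and the sequence $b_n$ grows doubly exponentially, $\sum_{k<n}(b_k - a_k) \le b_{n-1} = \sqrt{a_n}$, so the count is $a_n(1 - o(1))$. Hence $K^B_{b_n}(y_\alpha)/b_n \ge (a_n/b_n)(1 - o(1)) \to s$, which gives $\dim^B(y_\alpha) \ge s$. Since $B$ was arbitrary, the point-to-set principle yields $\dim_H(E) \ge s$.

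For the upper bound $\dim_H(E) \le s$, the idea is that at the precisions $b_n$ each prefix $y_\alpha[1..b_n]$ carries only $a_n = sb_n(1 + o(1))$ bits of fresh random content from $x_\alpha$, while the remaining $(1-s)b_n(1 + o(1))$ bits are copied verbatim from the $x_\beta$'s with $\beta < \alpha$. I would cover $E$ at scale $2^{-b_n}$ by dyadic cubes grouped according to their realized copy patterns, and aim to show that the $(s+\varepsilon)$-dimensional Hausdorff content tends to zero as $n \to \infty$. The main obstacle will be controlling the number of distinct copy-pattern tuples that actually arise across the uncountable family $\{y_\alpha : \alpha < \aleph_1\}$: a naive enumeration offers no savings, so the argument must exploit that each $y_\alpha$ draws its copied bits from the specific countable set $\{x_\beta : \beta < \alpha\}$, stratifying $E$ by countable initial segments of the enumeration to keep the effective cover subexponential in $b_n$ and thereby push the $(s+\varepsilon)$-content below any prescribed threshold.
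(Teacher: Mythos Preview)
Your lower-bound argument is essentially the paper's: pick $\alpha$ beyond the given oracle's ordinal and use that $x_\alpha$ is random relative to $B$. There is one gap, though. You establish $K^B_r(y_\alpha) \geq (\text{non-copy count in }[1,r]) - O(\log r)$ for all $r$, but then you evaluate this only at $r = b_n$ and jump to $\dim^B(y_\alpha) \geq s$. Since effective dimension is a $\liminf$ over \emph{all} $r$, you must verify that the ratio $(\text{non-copy count})/r$ is minimized near $r = b_n$. This is easy --- the count is constant on $(a_n,b_n]$ and grows with slope $1$ on $(b_n,a_{n+1}]$ --- but it has to be said. The paper handles this by treating the two ranges $a_n < r \leq b_n$ and $b_n < r \leq a_{n+1}$ separately.

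The upper bound is where your proposal has a genuine gap. You correctly identify the obstacle (uncountably many copy patterns) but your proposed fix, ``stratifying $E$ by countable initial segments of the enumeration,'' is not a proof and it is not clear how to turn it into one. What you are missing is a specific feature of the construction: each $f_\alpha$ is required to hit \emph{every} $\beta < \alpha$ infinitely often. In particular, for every $\alpha > 1$ there are infinitely many $n$ with $f_\alpha(n) = 1$, hence $y_\alpha[a_n{+}1 .. b_n] = x_1[a_n{+}1 .. b_n]$. A single oracle $A$ encoding $x_1$ therefore gives $K^A_{b_n}(y_\alpha) \leq a_n + o(b_n)$ along this subsequence, uniformly in $\alpha$, and the point-to-set principle yields $\dim_H(E) \leq s$ at once. (If you prefer a covering: for each large $n$, the $2^{a_n}$ dyadic intervals of length $2^{-b_n}$ whose bits on $(a_n,b_n]$ agree with $x_1$ catch every $y_\alpha$ with $f_\alpha(n)=1$; since each $\alpha$ is caught at infinitely many $n$, these give a cover with $(s+\varepsilon)$-content $\sum_{n\geq N} 2^{a_n - (s+\varepsilon)b_n} \to 0$.) Without this observation there is no uniform handle on the copy patterns, and your covering argument does not close.
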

\begin{proof}
We first upper bound the dimension. Let $A$ be an oracle encoding $x_1$. From our construction, for every element $y \in E$, there are infinitely many intervals $[a_n, b_n]$ such that $y[a_n,b_n] = x_1[a_n, b_n]$. Hence, for every $y \in E$, there are infinitely many $n$ such that
\begin{align*}
K^{A}_{b_n}(y) &= K^{A}_{a_n}(y) + K^{A}_{b_n, a_n}(y) + o(b_n)\\
&\leq K^{A}_{a_n}(y)+ o(b_n)\\
&\leq a_n+ o(b_n).
\end{align*}

Therefore, by the point to set principle,
\begin{align*}
\dim_H(E) &\leq \sup_y \dim^A(y)\\
&= \sup_y \liminf_r \frac{K^A_r(y)}{r}\\
&\leq \sup_y \liminf_n \frac{K^{A}_{b_n}(y)}{b_n}\\
&\leq \sup_y \liminf_n \frac{a_n + o(b_n)}{b_n}\\
&\leq \sup_y \liminf_n s\\
&= s,
\end{align*}
and the proof that $\dim_H(E) \leq s$ is complete.

For the lower bound, let $A$ be a Hausdorff oracle for $E$, and let $\alpha$ be the ordinal corresponding to $A$. By our construction of $y_\alpha$, for every $n$, 
\begin{align*}
K^{A_\alpha}_{a_n}(y_\alpha) &\geq K^{A_\alpha}_{a_n}(x_\alpha) - b_{n-1}\\
&\geq a_n - b_{n-1} - o(a_n)\\
&\geq a_n - a_{n}^{\frac{1}{2}} - o(a_n).
\end{align*}
Hence, for every $n$, and every $a_n < r \leq b_n$,
\begin{align*}
K^{A_\alpha}_{r}(y_\alpha) &\geq K^{A_\alpha}_{a_n}(y_\alpha)\\
&\geq a_n - a_{n}^{\frac{1}{2}} - o(a_n).
\end{align*}
This implies that
\begin{center}
$\frac{K^{A_\alpha}_{r}(y_\alpha)}{r} \geq s - o(1)$,
\end{center}
for every $n$, and every $a_n < r \leq b_n$.

We can also conclude that, for every $n$ and every $b_n < r \leq a_{n+1}$,
\begin{align*}
K^{A_\alpha}_{r}(y_\alpha) &= K^{A_\alpha}_{b_n}(y_\alpha) + K^{A_\alpha}_{r,b_n}(y_\alpha) - o(r)\\
&\geq a_n - a_{n}^{\frac{1}{2}} + r - b_n - o(r).
\end{align*}
This implies that
\begin{align*}
\frac{K^{A_\alpha}_{r}(y_\alpha)}{r} &= 1 + \frac{a_n}{r} - \frac{b_n}{r} - o(1)\\
&= 1 - \frac{a_n(1/s - 1)}{r} - o(1)\\
&\geq 1 - s(1/s - 1) - o(1)\\
&= s - o(1).
\end{align*}
for every $n$, and every $a_n < r \leq b_n$.

Together, these inequalities and the point-to-set principle show that
\begin{align*}
\dim_H(E) &= \sup_x \dim^A(x)\\
&\geq \dim^A(y_\alpha)\\
&= \liminf_{r} \frac{K^A(y_\alpha)}{r}\\
&\geq \liminf_{r} s - o(1)\\
&= s,
\end{align*}
and the proof is complete.
\end{proof}

\begin{lem}
$E$ does not have optimal Hausdorff oracles.
\end{lem}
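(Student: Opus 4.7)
The plan is to show that no Hausdorff oracle $A$ for $E$ can be optimal. Given such an $A$, let $\alpha_A < \omega_1$ be the unique countable ordinal with $A = A_{\alpha_A}$. I would take the witnessing oracle $B$ to be the join of the countable family $\{x_\beta : \beta \leq \alpha_A + 1\}$, which is a legitimate subset of $\N$ because $\alpha_A + 1$ is a countable ordinal, and fix any $\epsilon$ with $0 < \epsilon < \min(s, 1-s)$. The goal is to verify that no $y_\gamma \in E$ can simultaneously satisfy $\dim^{A,B}(y_\gamma) \geq s - \epsilon$ and $K^{A,B}_r(y_\gamma) \geq K^A_r(y_\gamma) - \epsilon r$ at almost every precision $r$, which is exactly the negation of clause (2) of the definition of optimal oracle relative to this choice of $B$ and $\epsilon$.

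The case $\gamma \leq \alpha_A + 1$ is immediate: every real $x_\beta$ appearing in the transfinite recipe for $y_\gamma$ (namely $x_\gamma$ itself together with each $x_{f_\gamma(n)}$, all with index $\leq \alpha_A + 1$) is coded into $B$, and the block endpoints $(a_n, b_n)$ together with $f_\gamma$ are computable, so $y_\gamma$ is $B$-computable up to $O(\log r)$. Hence $\dim^{A,B}(y_\gamma) = 0 < s - \epsilon$, and the dimension requirement of clause (2) already fails.

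For $\gamma > \alpha_A + 1$, the prescribed property of $f_\gamma$ furnishes infinitely many $n$ with $f_\gamma(n) = \alpha_A + 1$; for each such $n$, the block $y_\gamma[a_n+1..b_n]$ coincides with $x_{\alpha_A+1}[a_n+1..b_n]$ and is thus computable from $B$, which gives the easy upper bound $K^{A,B}_{b_n}(y_\gamma) \leq K^{A,B}_{a_n}(y_\gamma) + O(\log b_n)$. Coupled with the companion lower bound $K^A_{b_n}(y_\gamma) \geq K^A_{a_n}(y_\gamma) + (b_n - a_n) - O(\log b_n)$ and the observation $b_n - a_n \sim (1 - s) b_n$, this produces a gap $K^A_{b_n}(y_\gamma) - K^{A,B}_{b_n}(y_\gamma) \geq (1-s)b_n - o(b_n) > \epsilon b_n$ at the infinitely many precisions $r = b_n$, violating the complexity requirement of clause (2). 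The main obstacle is proving this companion lower bound: the construction only guarantees that each $x_\delta$ is random over everything preceding it in the enumeration, so to conclude that $x_{\alpha_A+1}[a_n+1..b_n]$ contributes nearly $b_n - a_n$ new bits of $A$-complexity even after conditioning on the prefix $y_\gamma[1..a_n]$, I would invoke van Lambalgen's theorem to upgrade the one-sided relative randomness produced by the transfinite construction into symmetric relative randomness of $x_{\alpha_A+1}$ with respect to the join of $A$, $x_\gamma$, and the other $x_{f_\gamma(k)}$'s appearing inside $y_\gamma[1..a_n]$. I would then observe that any bits of $x_{\alpha_A+1}$ directly exposed by $y_\gamma[1..a_n]$ come from earlier blocks and therefore lie at positions at most $b_{n-1}$, which is negligible compared to $a_n$ by the doubly exponential growth of $a_n$, so they can reduce the $A$-complexity of the target window $[a_n+1, b_n]$ by at most $O(\log b_n)$. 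Combining the two cases shows that the chosen $A$ is not optimal, and since $A$ was an arbitrary Hausdorff oracle, $E$ has no optimal Hausdorff oracle.
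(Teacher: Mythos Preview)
Your argument is essentially the paper's: pick the ordinal of the candidate oracle, put enough into $B$ to kill the low-index $y$'s, and for high-index $y$'s exploit the infinitely many blocks matching a fixed $x$ encoded in $B$ to create a $(1-s)b_n$ gap between $K^A_{b_n}$ and $K^{A,B}_{b_n}$. Two small remarks.

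First, a minor slip: you assert that ``$f_\gamma$ are computable,'' but the construction imposes no such requirement on the surjections $f_\gamma$. This is why the paper encodes the points $\{y_\beta : \beta < \alpha\}$ directly into $B$ rather than trying to recompute them from the $x_\beta$'s. Your fix is equally easy---just throw the finitely-indexed $y_\gamma$'s (or the functions $f_\gamma$) into $B$ as well---but as written the case $\gamma \le \alpha_A + 1$ has a gap.

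Second, your shift from $\alpha$ to $\alpha_A + 1$ is harmless but unnecessary: $x_\alpha$ itself is already random relative to $A_\alpha$ by construction, so the paper works directly with $x_\alpha$. Your explicit appeal to van Lambalgen for the conditional lower bound is a point on which you are actually more careful than the paper, which simply asserts $K^{A_\alpha}(x_\alpha[a_n..b_n]) \ge b_n - a_n - o(b_n)$ without discussing why the prefix $y_\beta[1..a_n]$ (built from later-indexed $x$'s) does not help.
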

\begin{proof}
Let $A_\alpha \subseteq \N$ be an oracle. It suffices to show that $A_\alpha$ is not optimal. With this goal in mind, let $B$ be an oracle encoding $x_\alpha$ and the set $\{y_\beta \, | \, \beta < \alpha\}$. Note that we can encode this information since this set is countable. 

Let $y_\beta \in E$. First, suppose that $\beta \leq \alpha$. Then by our choice of $B$, $\dim^{A_\alpha, B}(y_\beta) = 0$. So then suppose that $\beta > \alpha$. We first note that, since $x_\beta$ is random relative to $A_\alpha$
\begin{align*}
K^{A_\alpha}_{a_n}(y_\beta) &\geq K^{A_\alpha}(y_\beta[b_{n-1}\ldots a_n]) - O(\log a_n)\\
&= K^{A_\alpha}(x_\beta[b_{n-1}\ldots a_n]) - O(\log a_n)\\
&\geq a_n - b_{n-1} - O(\log a_n)\\
&\geq a_n - o(a_n),
\end{align*}
for every $n \in \N$.

By our construction, there are infinitely many $n$ such that
\begin{equation}\label{eq:ybetaEqualsXAlpha}
y_\beta[a_n\ldots b_n] = x_\alpha[a_n\ldots b_n]
\end{equation}
Since $x_\alpha$ is random relative to $A_\alpha$, for any $n$ such that (\ref{eq:ybetaEqualsXAlpha}) holds,
\begin{align*}
K^{A_\alpha}_{b_n}(y_\beta) &=  K^{A_\alpha}_{a_n}(y_\beta) + K^{A_\alpha}_{b_n,a_n}(y_\beta) \\
&\geq a_n - o(a_n) + K^{A_\alpha}(y_\beta[a_{n}\ldots b_n]) - O(\log b_n)\\
&= a_n - o(a_n) + K^{A_\alpha}(x_\alpha[a_{n}\ldots b_n]) \\
&\geq a_n - o(a_n) + b_n - a_n - o(b_n) \\
&= b_n - o(b_n).
\end{align*}

However, since we can compute $x_\alpha$ given $B$, 
\begin{align*}
K^{A_\alpha,B}_{b_n}(y_\beta) &= K^{A_\alpha, B}_{a_n}(y_\beta) + K^{A_\alpha, B}_{b_n,a_n}(y_\beta) \\
&=K^{A_\alpha,B}_{a_n}(y_\beta)\\
&\leq a_n - o(a_n)\\
&= sb_n  - o(a_n)\\
&= s K^{A_\alpha}_{b_n}(y_\beta) - o(a_n).
\end{align*}
Therefore $A_\alpha$ is not optimal, and the claim follows.
\end{proof}

\subsubsection{Generalization to $\R^n$}
In this section, we use Lemma \ref{lem:setRwithoutOptimalOracles} to show that there are sets without optimal Hausdorff oracles in $\R^n$ of every possible dimension. We will need the following lemma on giving sufficient conditions for a product set to have optimal Hausdorff oracles. Interestingly, we need the product formula to hold for arbitrary sets, first proven by Lutz \cite{Lutz17}. Under the assumption that $F$ is regular, the product formula gives
\begin{center}
$\dim_H(F\times G) = \dim_H(F) + \dim_H(G) = \dim_P(F) + \dim_H(G)$,
\end{center}
for every set $G$.

\begin{lem}\label{lem:optimalProduct}
Let $F\subseteq \R^n$ be a set such that $\dim_H(F) = \dim_P(F)$, let $G \subseteq \R^m$ and let $E = F \times G$. Then $E$ has optimal Hausdorff oracles if and only if $G$ has optimal Hausdorff oracles.
\end{lem}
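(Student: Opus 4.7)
The plan is to establish both implications by combining Lutz's product formula $\dim_H(F\times G)=\dim_H(F)+\dim_H(G)$ (valid because $F$ is regular) with the two-sided precision chain rule $K_r^A(x,y)=K_r^A(y)+K_r^A(x|y)\pm O(\log r)$. Throughout I fix an optimal Hausdorff oracle $(A_F,B_F)$ for $F$ supplied by Lemma~\ref{lem:regularHasOptimalHaus}, with $A_F$ a Hausdorff oracle and $B_F$ a packing oracle for $F$.

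For the backward direction, given an optimal $A_G$ for $G$, I claim $A=(A_G,A_F,B_F)$ is optimal for $E$. That $A$ is a Hausdorff oracle for $E$ follows from the inequality $\dim^A(x,y)\le\Dim^A(x)+\dim^A(y)\le\dim_P(F)+\dim_H(G)=\dim_H(E)$ (using that $A$ computes both $B_F$ and $A_G$) together with the point-to-set principle. For the second optimality condition, given $C$ and $\epsilon>0$, I first invoke optimality of $A_G$ for $G$ with extra oracle $(A_F,B_F,C)$ and parameter $\epsilon/4$ to extract $y_0\in G$, and then invoke optimality of $(A_F,B_F)$ for $F$ with extra oracle $(A_G,C,y_0)$ and parameter $\epsilon/4$ to extract $x_0\in F$. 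The pair $(x_0,y_0)\in E$ then satisfies the dimension condition via $\dim^{A,C}(x_0,y_0)\ge\dim^{A,C}(y_0)+\dim^{A,C,y_0}(x_0)$, and the complexity condition by expanding $K_r^{A,C}(x_0,y_0)$ through the chain rule, inserting the two complexity lower bounds, and bounding $K_r^A(x_0,y_0)\le K_r^{A_F,B_F}(x_0)+K_r^{A_G}(y_0)+O(\log r)$ from above.

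For the forward direction I argue contrapositively. Assume $G$ has no optimal Hausdorff oracle and suppose, for contradiction, that $A_E$ is optimal for $E$. I set $A'=(A_E,A_F,B_F)$ and first check it is a Hausdorff oracle for $G$: for any $y\in G$, the oracle $(A',y)$ contains $A_F$, hence is itself a Hausdorff oracle for $F$, so $\sup_{x\in F}\dim^{A',y}(x)=\dim_H(F)$; the lower chain-rule inequality $\dim^{A'}(x,y)\ge\dim^{A'}(y)+\dim^{A',y}(x)$ together with $\dim^{A'}(x,y)\le\dim_H(E)$ then forces $\dim^{A'}(y)\le\dim_H(G)$. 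By the non-optimality hypothesis applied to $A'$ I pick witnesses $B'$ and $\epsilon>0$. Apply optimality of $A_E$ for $E$ with extra oracle $C=(A_F,B_F,B')$ and parameter $\epsilon'<\epsilon$, obtaining $(x_0,y_0)\in E$. Since $C$ computes $B_F$, we have $\Dim^{A_E,C}(x_0)\le\dim_P(F)$, which combined with $\dim^{A_E,C}(x_0,y_0)\ge\dim_H(E)-\epsilon'$ gives $\dim^{A',B'}(y_0)\ge\dim_H(G)-\epsilon'>\dim_H(G)-\epsilon$; hence the second disjunct of the non-optimality witness must hold, i.e., $K_r^{A',B'}(y_0)<K_r^{A'}(y_0)-\epsilon r$ for infinitely many $r$. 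Subtracting the chain-rule decompositions of $K_r^{A_E,C}(x_0,y_0)$ and $K_r^{A_E}(x_0,y_0)$, plugging in the complexity condition $K_r^{A_E,C}(x_0,y_0)\ge K_r^{A_E}(x_0,y_0)-\epsilon'r$ and the monotonicity $K_r^{A_E}(y_0)\ge K_r^{A'}(y_0)$, yields
\[K_r^{A_E,C}(x_0|y_0)>K_r^{A_E}(x_0|y_0)+(\epsilon-\epsilon')r-O(\log r)\]
on an infinite set of $r$, contradicting the trivial monotonicity bound $K_r^{A_E,C}(x_0|y_0)\le K_r^{A_E}(x_0|y_0)+O(1)$.

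The main obstacle will be the forward direction, specifically the chain-rule juggling that turns non-optimality of $G$ into a contradiction with optimality of $A_E$. It leans on both directions of the precision-$r$ chain rule (the lower bound $K_r^A(x,y)\ge K_r^A(y)+K_r^A(x|y)-O(\log r)$ is a nontrivial precision-$r$ symmetry of information) and on the regularity of $F$, which forces $\Dim^{A_E,C}(x_0)\le\dim_P(F)=\dim_H(F)$ and thereby prevents the $x$-coordinate from absorbing the compression provided by $B'$. Verifying that $A'$ is a Hausdorff oracle for $G$ in the first step of the forward direction is the other place where regularity of $F$ is essential, via the existence of $F$-random points relative to every extended oracle.
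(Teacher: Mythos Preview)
Your proof is correct and follows essentially the same strategy as the paper: both directions rely on the product formula $\dim_H(F\times G)=\dim_H(F)+\dim_H(G)$ (from regularity of $F$), the precision-$r$ chain rule for $K_r(x,y)$, and the regularity of $F$ to control the $F$-coordinate. Your explicit inclusion of a packing oracle $B_F$ for $F$ is in fact more careful than the paper's write-up, which only joins in a Hausdorff oracle for $F$ yet tacitly needs $\Dim^A(y)\le\dim_P(F)$ at the step $(\dim_P(F)-\epsilon/2)r\ge K^A_r(y)-\epsilon r/2$; your version closes that gap. The remaining differences are cosmetic: the paper simply joins in a Hausdorff oracle for $G$ rather than proving that $A'$ is one, and in the contrapositive it directly exhibits the drop $K^{A,B}_r(x)<K^A_r(x)-\epsilon r$ rather than rerouting through the contradiction $K^{A_E,C}_r(x_0\mid y_0)>K^{A_E}_r(x_0\mid y_0)+(\epsilon-\epsilon')r$, but these are two sides of the same chain-rule identity.
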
	
\begin{proof}
Assume that $G$ has an optimal Hausdorff oracle $A_1$. Let $A_2, A_3$ be Hausdorff oracles for $E$ and $F$, respectively. Let $A \subseteq \N$ be the join of all three oracles. We claim that $A$ is optimal for $E$. Let $B$ be any oracle and let $\epsilon > 0$. Since $A$ is optimal for $G$, by Lemma \ref{lem:joinOptimalisOptimal}, there is a point $z\in G$ such that $\dim^{A,B}(z) \geq \dim_H(G) - \epsilon / 2$ and 
\begin{center}
$K^{A,B}_r(z) \geq K^A_r(z) - \epsilon r / 2$,
\end{center}
for almost every $r$. By the point-to-set principle, we may choose a $y \in F$ such that 
\begin{center}
$\dim^{A, B, z}(y) \geq \dim_H(F) - \epsilon / 2$.
\end{center}
Let $x = (y,z) \in E$. Then
\begin{align*}
K^{A,B}_r(x) &= K^{A,B}_r(y,z)\\
&= K^{A,B}_r(z) + K^{A,B}_r(y \, | \, z)\\
&\geq K^A_r(z) - \epsilon r / 2 + K^{A,B,z}_r(y)\\
&\geq K^A_r(z) - \epsilon r / 2 + (\dim_H(F) - \epsilon / 2)r\\
&= K^A_r(z) - \epsilon r / 2 + (\dim_P(F) - \epsilon / 2)r\\
&\geq K^A_r(z) - \epsilon r / 2 + K^A_r(y) - \epsilon r / 2\\
&\geq K^A_r(z) - \epsilon r / 2 + K^A_r(y \, | \, z) - \epsilon r / 2\\
&\geq K^A_r(y,z) - \epsilon r \\
&= K^A_r(x) - \epsilon r.
\end{align*}
Since $B$ and $\epsilon$ were arbitrary, $A$ is optimal for $E$.

Suppose that $G$ does not have optimal Hausdorff oracles. Let $A$ be a Hausdorff oracle for $E$. It suffices to show that $A$ is not optimal for $E$. Since optimal Hausdorff oracles are closed under the join operation, we may assume that $A$ is a Hausdorff oracle for $F$ and $G$ as well. Since $G$ does not have optimal Hausdorff oracles, there is an oracle $B$ and $\epsilon > 0$ such that, for every $z\in G$, either $\dim^{A,B}(z) < \dim_H(G) - \epsilon$ or
\begin{center}
$K^{A,B}_r(z) < K^A_r(z) - \epsilon r/2$,
\end{center} 
for infinitely many $r$. Let $x \in E$, such that $\dim^{A,B}(x) \geq \dim_H(E) - \epsilon/2$. Let $x = (y,z)$ for some $y\in F$ and $z\in G$. Then we have 
\begin{align*}
\dim_H(F) + \dim_H(G) &= \dim_H(E)\\
&\leq \dim^{A,B}(x) + \epsilon/2\\
&= \dim^{A,B}(y) + \dim^{A,B}(z \, | \, y)+ \epsilon /2\\
&\leq \dim_H(F)  + \dim^{A,B}(z)+ \epsilon/2.
\end{align*}
Hence, $\dim^{A,B}(z) \geq \dim_H(G) - \epsilon/2$. 

We conclude that there are infinitely many $r$ such that
\begin{align*}
K^{A,B}_r(x) &= K^{A,B}_r(z) + K^{A,B}_r(y \, | \, z)\\
&< K^A_r(z) - \epsilon r/2+ K^{A,B}_r(y \, | \, z)\\
&\leq K^A_r(z) - \epsilon r/2+ K^{A}_r(y \, | \, z)\\
&= K^{A,B}_r(x)- \epsilon r/2.
\end{align*}
Thus $E$ does not have optimal Hausdorff oracles.
\end{proof}

\begin{thm}\label{thm:setRNwithoutOptimalOracles}
Assume AC and CH. Then for every $n\in\N$ and $s \in (0, n)$, there is a subset $E\subseteq\R^n$ with $\dim_H(E) = s$ such that $E$ does not have optimal Hausdorff oracles.
\end{thm}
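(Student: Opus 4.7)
The plan is to reduce the higher-dimensional case to Lemma \ref{lem:setRwithoutOptimalOracles} by taking a product with a regular set of the appropriate complementary dimension, and then invoking Lemma \ref{lem:optimalProduct}. For $n=1$ the claim is exactly Lemma \ref{lem:setRwithoutOptimalOracles}, so fix $n\geq 2$ and $s\in(0,n)$.

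First I would split $s=s_1+s_2$ with $s_1\in(0,1)$ and $s_2\in[0,n-1]$; a convenient uniform choice is $s_1=s/n$ and $s_2=s(n-1)/n$, which puts $s_1$ in $(0,1)$ and $s_2$ in $(0,n-1)$ precisely when $s\in(0,n)$. Using AC and CH, apply Lemma \ref{lem:setRwithoutOptimalOracles} at dimension $s_1$ to obtain a set $G\subseteq\R$ with $\dim_H(G)=s_1$ that has no optimal Hausdorff oracle. For the other factor, take any standard compact ``regular'' set $F\subseteq\R^{n-1}$ satisfying $\dim_H(F)=\dim_P(F)=s_2$ (for instance a product of self-similar Cantor sets at the appropriate contraction ratios, which is routine to build for any $s_2\in[0,n-1]$). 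Set $E:=F\times G\subseteq\R^n$.

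Next I would check the two required properties of $E$. Since $F$ is regular, Lutz's general product formula \cite{Lutz17} gives
\[
\dim_H(E)=\dim_H(F\times G)=\dim_H(F)+\dim_H(G)=s_2+s_1=s,
\]
as needed. Moreover, because $\dim_H(F)=\dim_P(F)$, Lemma \ref{lem:optimalProduct} applies to $E=F\times G$ and says that $E$ has optimal Hausdorff oracles if and only if $G$ does. By the choice of $G$ it does not, so neither does $E$.

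There is really no substantive obstacle here: once Lemma \ref{lem:setRwithoutOptimalOracles} and Lemma \ref{lem:optimalProduct} are in hand, the only delicate point is making sure the splitting $s=s_1+s_2$ puts $s_1$ strictly inside $(0,1)$ (so Lemma \ref{lem:setRwithoutOptimalOracles} is applicable) and $s_2$ inside $[0,n-1]$ (so a regular $F\subseteq\R^{n-1}$ of that dimension exists); the choice $s_1=s/n$, $s_2=s(n-1)/n$ handles every $s\in(0,n)$ uniformly.
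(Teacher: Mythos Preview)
Your proposal is correct and follows the same core strategy as the paper: reduce to the one-dimensional Lemma \ref{lem:setRwithoutOptimalOracles} by taking a product with a regular set and invoking Lemma \ref{lem:optimalProduct}, together with the product formula for Hausdorff dimension.

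The only structural difference is that the paper argues by induction on $n$, using the trivially regular sets $\{0\}$ (when $s<n-1$) and $[0,1]$ (when $s\ge n-1$) as the factor $F$, whereas you go in one step by splitting $s=s/n+s(n-1)/n$ and taking $F\subseteq\R^{n-1}$ to be a self-similar Cantor-type set with $\dim_H(F)=\dim_P(F)=s(n-1)/n$. Your route is slightly more direct and avoids the case split, at the cost of appealing to the (still standard) existence of regular sets of arbitrary dimension in $\R^{n-1}$; the paper's choice keeps $F$ completely explicit. Either way, the content is the same.
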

\begin{proof}
We will show this via induction on $n$. For $n = 1$, the conclusion follows from Lemma \ref{lem:setRwithoutOptimalOracles}. 

Suppose the claim holds for all $m < n$. Let $s \in (0, n)$. First assume that $s < n - 1$. Then by our induction hypothesis, there is a set $G \subseteq \R^{n-1}$ without optimal Hausdorff oracles such that $\dim_H(G) = s$. Let $E = \{0\} \times G$. Note that, since $\{0\}$ is a singleton, $\dim_H(\{0\}) =\dim_P(\{0\}) = 0$. Therefore, by Lemma \ref{lem:optimalProduct}, $E$ does not have optimal Hausdorff oracles. By the well-known product formula for Hausdorff dimension,
\begin{align*}
\dim_H(G) &\leq \dim_H(\{0\}) + \dim_H(G)\\ 
&\leq \dim_H(E)\\
&\leq \dim_P(\{0\}) + \dim_H(G)\\
&= \dim_H(G),
\end{align*}
and the claim follows.

We now assume that $s \geq n - 1$. Let $d = s - 1$. By our induction hypothesis, there is a set $G \subseteq \R^{n-1}$ without optimal Hausdorff oracles such that $\dim_H(G) = d$. Let $E = [0,1] \times G$. Note that, since $[0,1]$ has (Lebesgue) measure one, $\dim_H([0,1]) =\dim_P([0,1]) = 1$. Thus, by Lemma \ref{lem:optimalProduct}, $E$ is a set without optimal Hausdorff oracles. By the product formula, 
\begin{align*}
1 + \dim_H(G) &\leq \dim_H([0,1]) + \dim_H(G)\\ 
&\leq \dim_H(E)\\
&\leq \dim_P([0,1]) + \dim_H(G)\\
&= 1 + \dim_H(G),
\end{align*}
and the claim follows.
\end{proof}

\section{Marstrand's Projection Theorem}\label{sec:Marstrand}

The following theorem, due to Lutz and Stull \cite{LutStu18}, gives sufficient conditions for strong lower bounds on the complexity of projected points.
\begin{thm}\label{thm:mainengine}
Let $z \in \R^2$, $\theta \in [0,\pi]$, $C \subseteq \N$, $\eta \in \Q \cap (0, 1) \cap (0, \dim(z))$, $\ve > 0$, and $r \in \N$. Assume the following are satisfied.
\begin{enumerate}
\item For every $s \leq r$, $K_{s}(\theta) \geq s - \log(s)$.
\item $K^{C, \theta}_r(z) \geq K_r(z) - \ve r$.
\end{enumerate}
Then,
\[K^{C, \theta}_r(p_\theta z ) \geq \eta r - \ve r -\frac{4\ve}{1-\eta}r- O(\log r)\,.\]
\end{thm}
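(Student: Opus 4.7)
The plan is to proceed by contradiction: assume
\[
K^{C,\theta}_r(p_\theta z) < \eta r - \ve r - \tfrac{4\ve}{1-\eta}r - O(\log r),
\]
and derive an upper bound on $K^{C,\theta}_r(z)$ that contradicts hypothesis~(2).

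The technical workhorse will be a tube lemma. Given $\theta$, a rational $w$ with $|w-p_\theta z|\le 2^{-r}$, and a $2^{-t}$-approximation of $z$ for some $t\le r$, the point $z$ lies in the intersection of the strip $\{y:|p_\theta y - w|\le 2^{-r}\}$ with a dyadic cube of side $2^{-t}$. This intersection is a parallelogram covered by $O(2^{r-t})$ dyadic cubes of side $2^{-r}$, so selecting the correct one costs $r-t+O(\log r)$ extra bits, yielding an inequality of the shape
\[
K^{C,\theta}_r(z) \le K^{C,\theta}_r(p_\theta z) + K^{C,\theta}_t(z) + (r-t) + O(\log r).
\]
Hypothesis~(1) enters via (relativized) symmetry of information: the near-maximal value $K_s(\theta)\ge s-\log s$ at every scale $s\le r$ implies $K^{C,\theta}_t(z) = K_t(z)\pm O(\log r)$, so $\theta$ can be "stripped" from the $t$-precision term without asymptotic loss.

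The remainder of the proof is a numerical optimization over the intermediate precision $t$. Decreasing $t$ lowers the term $K_t(z)$ but raises the line cost $(r-t)$; the right choice balances these against the slack in the contradiction hypothesis. Substituting the chosen $t$ back into the tube inequality and applying the trivial chain $K_r(z)\le K_t(z)+(r-t)+O(\log r)$, one forces $K^{C,\theta}_r(z) < K_r(z)-\ve r$, contradicting hypothesis~(2). The main obstacle is this optimization, and the coefficient $\tfrac{4\ve}{1-\eta}$ is exactly what emerges: the denominator $(1-\eta)$ appears because the effective rate at which one can recover complexity of $z$ from its $t$-precision approximation along the preimage line is $1-\eta$ (the $\eta$ component being accounted for by the projection itself). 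As $\eta\to 1$ this rate vanishes and there is no margin to absorb the $\ve r$ slack, which is why the hypothesis $\eta<1$ is essential.
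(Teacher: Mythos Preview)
The paper does not itself prove this theorem; it is quoted from \cite{LutStu18}. Your contradiction framing and the tube inequality
\[
K^{C,\theta}_r(z)\le K^{C,\theta}_r(p_\theta z)+K^{C,\theta}_t(z)+(r-t)+O(\log r)
\]
are correct, but the remainder has a genuine gap. First, the assertion that hypothesis~(1) yields $K^{C,\theta}_t(z)=K_t(z)\pm O(\log r)$ via symmetry of information is unjustified: hypothesis~(1) says nothing about $C$, and hypothesis~(2) only controls precision $r$, so nothing prevents $C$ from encoding $z$ at precision $t$. (Only the trivial direction $K^{C,\theta}_t(z)\le K_t(z)$ holds, and that requires no hypothesis.) Second---and this is the real obstruction---even with $K^{C,\theta}_t(z)\le K_t(z)$, the optimization over $t$ does not close. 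Combining the tube inequality with hypothesis~(2) and the contradiction hypothesis, you would need some $t\le r$ with
\[
K_r(z)-K_t(z)-(r-t)\ \ge\ \eta r - O\!\Bigl(\tfrac{\ve}{1-\eta}\,r\Bigr).
\]
But $\dim(z)>\eta$ forces $K_t(z)\ge\eta t$ at every large scale, so the left side is at most $K_r(z)-\eta t-(r-t)\le K_r(z)-\eta r +(1-\eta)(r-t)\le K_r(z)-\eta r+(1-\eta)r$. When $K_r(z)$ is near $\eta r$ (permitted by $\dim(z)>\eta$), this never reaches $\eta r$. No single intermediate precision $t$ can do the job; your chain $K_r(z)\le K_t(z)+(r-t)$ is also off by a factor of $2$ in $\R^2$ and in any case points the wrong way.

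The Lutz--Stull argument replaces ``$z$ at precision $t$'' by a constructed oracle $D$ with $K^D_r(z)\approx\eta r$ and controlled complexity at every scale $s\le r$; the overhead in building $D$ is where the coefficient $\tfrac{4\ve}{1-\eta}$ actually arises. One then proves $K^{D,\theta}_r(z\mid p_\theta z)=O(\log r)$ by \emph{enumeration}: list all $w$ with $K^D_r(w)\le\eta r$ and $p_\theta w\approx p_\theta z$; any such $w$ at distance $2^{-s}$ from $z$ determines the direction $\theta$ to precision roughly $r-s$, whence $K_{r-s}(\theta)\lesssim K^D_s(w)+K^D_s(z)$, and hypothesis~(1) forces $s$ to be large. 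So the randomness of $\theta$ is used geometrically, to rule out spurious low-complexity preimages of $p_\theta z$---not to ``strip'' $\theta$ from a conditional via symmetry of information.
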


The second condition of this theorem requires the oracle $(C, \theta)$ to give essentially no information about $z$. The existence of optimal Hausdorff oracles gives a sufficient condition for this to be true, for all sufficiently large precisions. Thus we are able to show that Marstrands projection theorem holds for any set with optimal Hausdorff oracles.
\begin{thm}
Suppose $E \subseteq \R^2$ has an optimal Hausdorff oracle. Then for almost every $\theta \in [0, \pi]$,
\begin{center}
$\dim_H(p_\theta E) = \min\{\dim_H(E), 1\}$.
\end{center}
\end{thm}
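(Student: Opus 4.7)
The plan is to establish the upper bound $\dim_H(p_\theta E)\le \min\{\dim_H(E),1\}$ trivially ($p_\theta$ is $1$-Lipschitz and its image lies in $\R$), and to obtain the matching lower bound for almost every $\theta$ via Theorem~\ref{thm:mainengine} applied to a carefully chosen witness point supplied by optimality. To this end, set $s=\dim_H(E)$ (assuming $s>0$, as the target lower bound is vacuous otherwise), fix an optimal Hausdorff oracle $A$ for $E$, and let $\Theta_A\subseteq[0,\pi]$ be the conull set of angles $\theta$ that are Martin-L\"of random relative to $A$; every such $\theta$ satisfies $K_t^A(\theta)\ge t-O(1)\ge t-\log t$ for all large $t$, which comfortably implies the randomness hypothesis of Theorem~\ref{thm:mainengine} after relativization. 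It will suffice to show $\dim_H(p_\theta E)\ge\min\{s,1\}$ for each fixed $\theta\in\Theta_A$.

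Toward that goal, by the point-to-set principle (Theorem~\ref{thm:p2s}) I need only exhibit, for every oracle $A'\subseteq\N$ and every $\delta>0$, a point $x\in p_\theta E$ with $\dim^{A'}(x)\ge\min\{s,1\}-\delta$. Given such an $A'$ and a small rational $\ve>0$, I would choose a rational $\eta$ with $\eta<\min\{s-\ve,1\}$ as close to $\min\{s,1\}$ as desired (for example $\eta=1-\sqrt{\ve}$ when $s\ge 1$), and then apply the second clause of the definition of optimal Hausdorff oracle to $B=(A',\theta)$ and parameter $\ve$. This produces a point $z\in E$ with $\dim^{A,A',\theta}(z)\ge s-\ve$ and
\[K_r^{A,A',\theta}(z)\ \ge\ K_r^A(z)-\ve r\]
for all sufficiently large $r$. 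Since adding oracle information cannot increase effective dimension, $\dim^A(z)\ge s-\ve>\eta$.

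Next I would invoke Theorem~\ref{thm:mainengine} relativized to base oracle $A$, with $A'$ in the role of $C$, the point $z$, and the chosen $\eta$ and $\ve$: hypothesis~(1) holds because $\theta\in\Theta_A$, hypothesis~(2) is exactly the displayed inequality, and $\eta<\dim^A(z)$ as just observed. The conclusion gives
\[K_r^{A,A',\theta}(p_\theta z)\ \ge\ \eta r-\ve r-\frac{4\ve}{1-\eta}r-O(\log r)\]
for all sufficiently large $r$. Because extra oracle information cannot increase Kolmogorov complexity, $K_r^{A'}(p_\theta z)\ge K_r^{A,A',\theta}(p_\theta z)-O(1)$, hence
\[\dim^{A'}(p_\theta z)\ \ge\ \eta-\ve-\frac{4\ve}{1-\eta}.\]
Letting $\ve\to 0$ (with $\eta$ tracking $\min\{s,1\}$) shows $\sup_{x\in p_\theta E}\dim^{A'}(x)\ge\min\{s,1\}$ for every $A'$, and Theorem~\ref{thm:p2s} then closes the argument.

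The main thing I expect to need to check carefully is that Theorem~\ref{thm:mainengine} relativizes uniformly in the base oracle, so that $A$ may be installed as the base and the hypotheses become $K^A_s(\theta)\ge s-\log s$ and $K_r^{A,A',\theta}(z)\ge K_r^A(z)-\ve r$ rather than their unrelativized counterparts. Since the proof in \cite{LutStu18} uses only oracle-invariant machinery, this should go through without change. Once it does, the complexity-preservation clause in the definition of optimal Hausdorff oracle lines up exactly with the relativized hypothesis~(2), and this alignment is the entire reason the definition was made: it produces, for each auxiliary oracle $A'$ and each direction $\theta$, a single witness whose projected complexity the engine can handle.
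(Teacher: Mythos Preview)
Your proposal is correct and follows essentially the same approach as the paper: fix an optimal Hausdorff oracle $A$, restrict to $\theta$ random relative to $A$, and for each auxiliary oracle use optimality (with $B=(A',\theta)$) to produce a witness $z\in E$ whose complexity is preserved, then feed this into Theorem~\ref{thm:mainengine} relativized to $A$ and let the error parameter tend to zero. The only cosmetic differences are that the paper fixes a single Hausdorff oracle $B$ for $p_\theta E$ rather than quantifying over all $A'$, and indexes the witnesses by a sequence $z_n$ with $\ve=1/2n$ rather than letting $\ve\to 0$ directly.
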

\begin{proof}
Let $A$ be an optimal Hausdorff oracle for $E$. Let $\theta$ be random relative to $A$. Let $B$ be oracle testifying to the point-to-set principle for $p_\theta E$. It suffices to show that
\begin{center}
$\sup\limits_{z\in E} \dim^{A, B}(p_\theta z) = \min\{1, \dim_H(E)\}$.
\end{center}

Since $E$ has optimal Hausdorff oracles, for each $n\in\N$, we may choose a point $z_n \in E$ such that 
\begin{itemize}
\item $\dim^{A,B,\theta}(z_n) \geq \dim_H(E) - \frac{1}{2n}$, and
\item $K^{A,B,\theta}_r(z_n) \geq K^A_r(z_n) - \frac{r}{2n}$ for almost every $r$.
\end{itemize}

Fix a sufficiently large $n$, and let $\ve = 1/2n$. Let $\eta \in \Q$ be a rational such that 
\begin{center}
$\min\{1,\dim_H(E)\} - 5\ve^{1/2} < \eta < 1 - 4\ve^{1/2}$. 
\end{center}

We now show that the conditions of Theorem \ref{thm:mainengine} are satisfied for $\eta, \ve$, relative to $A$. By our choice of $\theta$, 
\begin{center}
$K^A_r(\theta) \geq r - O(\log r)$,
\end{center}
for every $r\in \N$. By our choice of $z_n$ and the Hausdorff optimality of $A$, 
\begin{center}
$K^{A,B, \theta}_r(z_n) \geq K_r(z_n) - \ve r$,
\end{center}
for all sufficiently large $r$. We may therefore apply Theorem \ref{thm:mainengine}, to see that, for all sufficiently large $r$,
\[K^{A, B, \theta}_r(p_\theta z_n ) \geq \eta r - \ve r -\frac{4\ve}{1-\eta}r-O(\log r)\,.\]
Thus,
\begin{align*}
\dim^{A, B}(p_\theta z_n) &\geq \dim^{A, B,\theta}(p_\theta z_n)\\
&= \limsup_r \frac{K^{A, B, \theta}_r(p_\theta z_n )}{r}\\
&\geq \limsup_r \frac{\eta r - \ve r -\frac{4\ve}{1-\eta}r-O(\log r)}{r}\\
&= \limsup_r \eta - \ve -\frac{4\ve}{1-\eta} - o(1)\\
&> \eta - \ve - \ve^{1/2} - o(1)\\
&> \min\{1,\dim_H(E)\} - \ve - 6\ve^{1/2} - o(1).
\end{align*}
Hence,  
\begin{center}
$\lim_{n} \dim^{A, B}(p_\theta z_n) = \min\{1,\dim_H(E)\}$,
\end{center}
and the proof is complete.
\end{proof}

This shows that Marstrand's theorem holds for every set $E$ with $\dim_H(E) = s$ satisfying any of the following:
\begin{enumerate}
\item $E$ is analytic.
\item $\dim_H(E) = \dim_P(E)$.
\item $\mu \ll \mathcal{H}^{s-\delta}$, for every $\delta > 0$ for some metric outer measure $\mu$ such that $0<\mu(E)<\infty$.
\item $\Hs \ll \mu$ and $\Hs(E) > 0$, for some metric outer measure $\mu$ such that $0<\mu(E)<\infty$.
\end{enumerate}
For example, the existence of exact gauged Hausdorff measures on $E$ guarantee the existence of optimal Hausdorff oracles.

\begin{ex}
Let $E$ be a set with $\dim_H(E) = s$ and $\Hs(E) = 0$. Suppose that $0<\mathcal{H}^\phi(E) < \infty$, where $\phi(t) = \frac{t^s}{\log \frac{1}{t}}$. Since $\mathcal{H}^{\phi} \ll  \mathcal{H}^{s-\delta}$ for every $\delta > 0$, Theorem \ref{thm:mainTheorem} implies that $E$ has optimal Hausdorff oracles, and thus Marstrand's theorem holds for $E$.
\end{ex}

\begin{ex}
Let $E$ be a set with $\dim_H(E) = s$ and $\Hs(E) = \infty$. Suppose that $0<\mathcal{H}^\phi(E) < \infty$, where $\phi(t) = t^s \log \frac{1}{t}$. Since $\Hs \ll \mathcal{H}^{\phi}$, Theorem \ref{thm:mainTheorem} implies that $E$ has optimal Hausdorff oracles, and thus Marstrand's theorem holds for $E$.
\end{ex}

\subsection{Counterexample to Marstrand's theorem}\label{ssec:daviesCounter}

In this section we show that there are sets for which Marstrand's theorem does not hold. While not explicitly mentioning optimal Hausdorff oracles, the construction is very similar to the construction in Section \ref{ssec:counterexamOptimal}. 
\begin{thm}
Assuming AC and CH, for every $s\in (0,1)$ there is a set $E$ such that $\dim_H(E) = 1 + s$ but 
\begin{center}
$\dim_H(p_\theta E) = s$
\end{center}
for every $\theta \in (\pi/4, 3\pi/4)$.
\end{thm}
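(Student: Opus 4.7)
The plan is to carry out the transfinite induction from Section \ref{ssec:counterexamOptimal} in $\R^2$, with the bit-copying recursion replaced by a geometric ``perpendicular motion'' recursion keyed to projection directions. Assuming AC and CH, I will enumerate $(\pi/4, 3\pi/4)$ as $(\theta_\alpha)_{\alpha < \omega_1}$ and all oracles as $(A_\alpha)_{\alpha < \omega_1}$, keep the sequences $a_n, b_n$ with $a_n/b_n \to s$, and for each countable ordinal $\alpha$ fix a function $f_\alpha : \N \to \{\beta : \beta \leq \alpha\}$ hitting every value in its range infinitely often. At stage $\alpha$, pick $x_\alpha \in \R^2$ algorithmically random relative to the countable join of $A_\alpha$ and $\bigcup_{\beta < \alpha}(A_\beta, x_\beta, y_\beta)$.

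Define $y_\alpha \in [0,1)^2$ by refining dyadic cells block by block. On each ``random'' block $[b_{n-1}, a_n]$, the refinement $y_\alpha|_r \to y_\alpha|_{r+1}$ selects one of the four sub-cells of $y_\alpha|_r$ using two fresh bits of $x_\alpha$, so the cell shrinks isotropically. On each ``perpendicular'' block $[a_n, b_n]$ with $f_\alpha(n) = \beta$, the refinement instead picks the sub-cell whose center is closest to the line through $y_\alpha|_{a_n}$ in direction $(-\sin\theta_\beta, \cos\theta_\beta)$, with the one-dimensional position along that line chosen by a single fresh bit of $x_\alpha$. A scale-by-scale Kolmogorov complexity computation exactly parallel to Section \ref{ssec:counterexamOptimal} then yields $K_{b_n}(y_\alpha) = (1+s) b_n + o(b_n)$ (the liminf value) and $K_{a_n}(y_\alpha) = 2 a_n + o(a_n)$, so $\dim_H(E) = 1 + s$ for $E = \{y_\alpha : \alpha < \omega_1\}$.

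For the projection, fix $\theta \in (\pi/4, 3\pi/4)$ and let $\gamma < \omega_1$ satisfy $\theta_\gamma = \theta$. For every $\alpha \geq \gamma$, the set $f_\alpha^{-1}(\gamma)$ is infinite; on each such block the motion of $y_\alpha$ is perpendicular to $\theta$ up to dyadic rounding error $O(2^{-b_n})$, and since subsequent motion is confined to the scale-$b_n$ cell the tail also contributes only $O(2^{-b_n})$ in the $\theta$ direction. Hence $|p_\theta y_\alpha - p_\theta y_\alpha|_{a_n}| = O(2^{-b_n})$, so
\begin{equation*}
K_{b_n}(p_\theta y_\alpha) \leq K(p_\theta y_\alpha|_{a_n}) + O(\log b_n) \leq a_n + O(\log b_n),
\end{equation*}
giving $\dim(p_\theta y_\alpha) \leq s$ unconditionally for all $\alpha \geq \gamma$. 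The remaining countable set $\{y_\alpha : \alpha < \gamma\}$ can be coded into an oracle $B_\theta$, so the point-to-set principle gives $\dim_H(p_\theta E) \leq s$. The matching lower bound follows by noting that for any oracle $B$, choosing $\alpha$ via the enumeration so that $A_\alpha$ codes $B$ makes $x_\alpha$ random relative to $B$, whence $K^B_{a_n}(p_\theta y_\alpha) \approx a_n$ and the liminf of $K^B_r(p_\theta y_\alpha)/r$ bottoms out at exactly $s$ at $r = b_n$.

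The main obstacle I anticipate is controlling the dyadic-grid rounding error on perpendicular blocks: each refinement can miss the ideal perpendicular line by up to one cell, so a priori the cumulative deviation over a block of length $b_n - a_n$ could approach $2^{-a_n}$ rather than $2^{-b_n}$. Showing that the ``nearest sub-cell'' rule keeps the accumulated error bounded by $O(2^{-b_n})$ uniformly on the block is essential for the projection complexity bound above. A secondary issue is verifying that spending only one bit per scale on perpendicular blocks is compatible with $x_\alpha$-randomness, so the lower bound $\dim^C(y_\alpha) \geq 1+s$ holds relative to every sufficient oracle $C$ and not only on average.
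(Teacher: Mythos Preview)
Your approach is correct in outline but takes a genuinely different route from the paper's. The paper does \emph{not} build $y_\alpha$ by perpendicular dyadic motion. Instead it uses a product structure: the points of $E$ are pairs $(x_\alpha, z_\alpha)$ where $x_\alpha \in \R$ is fully random (contributing dimension $1$ outright) and only the second coordinate $z_\alpha$ carries the block structure. On each block $(a_n, b_n]$ the paper sets $z_\alpha$ equal to $g_{\theta_\beta}(x_\alpha, y_\alpha)$, where $g_\theta$ is defined abstractly by the observation that any $2^{-r} \times 2^{-sr}$ rectangle projects to an interval of length $\gtrsim 2^{-sr}$ and hence contains a point whose projection has complexity $\leq sr + o(r)$ relative to $\theta$. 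So the projection bound $K^\theta_{b_n}(p_\theta(x_\alpha, z_\alpha)) \leq s b_n + o(b_n)$ is baked in by definition of $g_\theta$, and the $\dim_H(E) = 1+s$ calculation decouples completely: $x_\alpha$ gives $1$, $z_\alpha$ gives $s$ via the same one–dimensional argument as in Section~\ref{ssec:counterexamOptimal}.

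What each approach buys: the paper's product-plus-$g_\theta$ trick sidesteps both of the obstacles you flag. There is no dyadic rounding along a slanted line because the second coordinate is chosen directly to make the projection simple, not by geometric motion; and the $\dim_H(E) \geq 1+s$ lower bound needs no analysis of how many bits the perpendicular block encodes, since $K^A_r(x_\alpha) \geq r - o(r)$ and $K^A_r(z_\alpha \mid x_\alpha) \geq s r - o(r)$ are handled separately. Your construction is more geometric and arguably more transparent about \emph{why} the projection stays simple, but you pay for it in the two technical debts you identify. Both are manageable: for the rounding, use the rule ``choose a sub-cell that $L$ actually meets'' (this maintains the invariant that $L$ intersects the current cell, so the transverse deviation is automatically $O(2^{-r})$, not cumulative). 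For the secondary issue, note that since $\theta_\beta \in (\pi/4, 3\pi/4)$ the direction $\theta_\beta^\perp$ always has positive first component, so the two candidate sub-cells always differ in their first coordinate; hence the first coordinate of $y_\alpha$ alone recovers the $b_n - a_n$ fresh bits of $x_\alpha$ regardless of $\theta_\beta$, giving $K^B_{b_n}(y_\alpha \mid y_\alpha|_{a_n}) \geq b_n - a_n - O(\log b_n)$ without needing $x_\alpha$ random relative to the angles.
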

This is a modest generalization of Davies' theorem to sets with Hausdorff dimension strictly greater than one. In the next section we give a new proof of Davies' theorem by generalizing this construction to the endpoint $s = 0$.

We will need the following simple observation.
\begin{obs}
Let $r \in \N$, $s \in (0, 1)$, and $\theta \in (\pi / 8, 3\pi /8)$. Then for every dyadic rectangle
\begin{center}
$R = [d_x - 2^{-r}, d_x + 2^{-r}] \times [d_y - 2^{-sr}, d_y + 2^{-sr}]$,
\end{center}
there is a point $z \in R$ such that $K^\theta_r(p_\theta z) \leq sr + o(r)$.
\end{obs}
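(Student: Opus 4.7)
The plan is to locate a point $z\in R$ whose projection $p_\theta z$ lands exactly on a dyadic rational of precision approximately $sr$; such a rational then serves as its own low-complexity approximant. First I would compute $p_\theta R$. Because $p_\theta$ is linear and both $\cos\theta,\sin\theta>0$ on $(\pi/8,3\pi/8)$, the image $p_\theta R$ is an interval whose length is
$$2\cdot 2^{-r}\cos\theta + 2\cdot 2^{-sr}\sin\theta \;\geq\; 2\sin(\pi/8)\cdot 2^{-sr}.$$
Set $c = 2\sin(\pi/8) > 0$; this is an absolute positive constant.

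Next I would choose an integer precision $m = \lceil sr - \log c\rceil + 1 = sr + O(1)$. Since the dyadic rationals at precision $m$ are spaced by $2^{-m} < c\cdot 2^{-sr} \leq |p_\theta R|$, the interval $p_\theta R$ contains at least one dyadic rational $q = k/2^m$. Because $q\in p_\theta R$, there exists $z\in R$ with $p_\theta z = q$ exactly.

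Finally I would estimate $K^\theta_r(p_\theta z)$. Since $q$ is a rational at distance $0\leq 2^{-r}$ from $p_\theta z$, the definition of $K^\theta_r$ gives $K^\theta_r(p_\theta z) \leq K^\theta(q) + O(1) \leq K(q) + O(1)$. To describe $q$ we describe the pair $(k,m)$. Because $R$ sits in the ambient bounded region (as is the case in the construction of Section \ref{ssec:daviesCounter}), $|q|=O(1)$, so $|k|\leq O(2^m)$ and $\log|k|\leq m + O(1) = sr + O(1)$. The precision $m$ itself costs $O(\log r)$ bits to encode, yielding $K(q)\leq sr + O(\log r)$, i.e., $K^\theta_r(p_\theta z)\leq sr + o(r)$.

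There is no serious obstacle: the only things to verify are the length bound on $p_\theta R$, which uses only $\sin\theta\geq \sin(\pi/8)$, and the counting of dyadic rationals of precision $sr+O(1)$ in a bounded region, which gives the $sr+O(\log r)$ complexity bound. The argument is essentially a pigeonhole on the dyadic grid followed by the standard $(k,m)$-description of a dyadic rational.
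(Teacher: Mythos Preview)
Your argument is correct and follows essentially the same route as the paper's proof: both observe that $p_\theta R$ is an interval of length at least $c\,2^{-sr}$ and then use the elementary fact that such an interval contains a point of Kolmogorov complexity at most $sr + o(r)$, pulling that point back to some $z\in R$. You simply unpack the ``well-known'' step by explicitly locating a dyadic rational $q=k/2^m$ with $m=sr+O(1)$ inside $p_\theta R$ and bounding $K(q)$ via the description $(k,m)$; the paper leaves this implicit.
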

\begin{proof}
Note that $p_\theta$ is a Lipschitz function. Thus, for any rectangle 
\begin{center}
$R = [d_x - 2^{-r}, d_x + 2^{-r}] \times [d_y - 2^{-sr}, d_y + 2^{-sr}]$,
\end{center}
The length of its projection (which is an interval) is 
\begin{center}
$\vert p_\theta R \vert \geq c2^{-sr}$
\end{center}
for some constant $c$. It is well known that any interval of length $2^{-\ell}$ contains points $x$ such that
\begin{center}
$K_r(x) \leq \ell r + o(r)$.
\end{center}
\end{proof}
For every $r\in \N$, $\theta \in (\pi / 4, 3\pi / 4)$, binary string $x$ of length $r$ and string $y$ of length $sr$, let $g_\theta(x,y) \mapsto z$ be a function such that 
\begin{center}
$K^\theta_r(p_\theta \, (x,z)) \leq sr + o(r)$.
\end{center}
That is, $g_\theta$, given a rectangle 
\begin{center}
$R = [d_x - 2^{-r}, d_x + 2^{-r}] \times [d_y - 2^{-sr}, d_y + 2^{-sr}]$,
\end{center}
outputs a value $z$ such that $K_r(p_\theta(x, z))$ is small.

Let $s \in (0,1)$. We begin by defining two sequences of natural numbers, $\{a_n\}$ and $\{b_n\}$. Let $a_1 = 2$, and $b_1 = \lfloor 2 / s \rfloor$. Inductively define $a_{n+1} = b_{n}^2$ and $b_{n+1} = \lfloor a_{n+1}/s\rfloor$. We will also need, for every ordinal $\alpha$, a function $f_\alpha : \N \to \{\beta \, | \, \beta < \alpha\}$ such that each ordinal $\beta < \alpha$ is mapped to by infinitely many $n$. Note that such a function exists, since the range is countable assuming CH.

Using AC and CH, we first order the subsets of the natural numbers and we order the angles $\theta \in (\pi/4, 3\pi / 4)$ so that each has at most countably many predecessors.

We will define real numbers $x_\alpha$, $y_\alpha$ and $z_\alpha$ inductively. Let $x_1$ be a real which is random relative to $A_1$. Let $y_1$ be a real which is random relative to $(A_1, x_1)$. Define $z_1$ to be the real whose binary expansion is given by
\begin{align*}
z_1[r] = \begin{cases}
g_{\theta_1}(x_1,y_1)[r] &\text{ if } a_n < r \leq b_n \text{ for some } n \in \N \\
y_1[r] &\text{ otherwise}
\end{cases}
\end{align*}

For the induction step, suppose we have defined our points up to ordinal $\alpha$. Let $x_\alpha$ be a real number which is random relative to the join of $\bigcup_{\beta < \alpha} (A_\beta, x_\beta)$ and $A_\alpha$. Let $y_\alpha$ be random relative to the join of $\bigcup_{\beta < \alpha} (A_\beta, x_\beta)$, $A_\alpha$ and $x_\alpha$. This is possible, as we are assuming CH, and so this union is countable. Let $z_\alpha$ be the point whose binary expansion is given by
\begin{align*}
z_\alpha[r] = \begin{cases}
g_{\theta_\beta}(x_\alpha, y_\alpha)[r] &\text{ if } a_n < r \leq b_n, \text{ for } f_\alpha(n) = \beta \\
y_\alpha[r] &\text{ otherwise}
\end{cases}
\end{align*}

Finally, we define our set $E = \{(x_\alpha, z_\alpha)\}$. 

\begin{lem}
For every $\theta \in (\pi/4, 3\pi/4)$,
\begin{center}
$\dim_H(p_\theta E) \leq s$
\end{center}
\end{lem}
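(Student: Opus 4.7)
My plan is to apply the point-to-set principle: I will construct an oracle $A$ such that every point of $p_\theta E$ has effective dimension at most $s$ relative to $A$. Fix $\theta \in (\pi/4, 3\pi/4)$ and, using the given ordering, write $\theta = \theta_\gamma$. Under CH the ordinal $\gamma$ is countable, so I may let $A$ encode $\theta_\gamma$ together with all of the construction data $\{(A_\delta, x_\delta, y_\delta, z_\delta, \theta_\delta) : \delta \le \gamma\}$. Points $w = (x_\alpha, z_\alpha) \in E$ with $\alpha \le \gamma$ are then $A$-computable, so $\dim^A(p_\theta w) = 0$; the content of the lemma is to bound $\dim^A(p_\theta w)$ for $\alpha > \gamma$.

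For such $\alpha$, since $f_\alpha(n) = \gamma$ for infinitely many $n$, I would bound the complexity of $p_{\theta_\gamma}(x_\alpha, z_\alpha)$ along the precision subsequence $r = b_n$ with $f_\alpha(n) = \gamma$. On such an $n$, the construction forces the bits of $z_\alpha$ on $(a_n, b_n]$ to coincide with those of $z^* := g_{\theta_\gamma}(x_\alpha, y_\alpha)$ at precision $b_n$, and the defining property of $g$ gives $K^{\theta_\gamma}_{b_n}(p_{\theta_\gamma}(x_\alpha, z^*)) \le s b_n + o(b_n)$. Writing
\[p_{\theta_\gamma}(x_\alpha, z_\alpha) = p_{\theta_\gamma}(x_\alpha, z^*) + (z_\alpha - z^*)\sin\theta_\gamma,\]
it then suffices to describe the correction to precision $b_n$ with only $o(b_n)$ additional bits. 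Since $z_\alpha$ and $z^*$ agree on $(a_n, b_n]$, this correction is concentrated on the first $a_n$ bits, and the only deviations of $z_\alpha$ from the scalar $0.y_\alpha[1..a_n]$ (which $z^*$ itself approximates to within $2^{-a_n}$) come from the earlier intervals $(a_m, b_m]$ for $m < n$, contributing at most $\sum_{m<n}(b_m - a_m) \le b_{n-1} = a_n^{1/2} = o(b_n)$ bits. Combining yields $K^A_{b_n}(p_\theta w) \le s b_n + o(b_n)$; taking $\liminf$ along this subsequence gives $\dim^A(p_\theta w) \le s$, and the point-to-set principle yields $\dim_H(p_\theta E) \le s$.

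The main obstacle is making the correction bookkeeping rigorous: although $|z_\alpha - z^*|$ can be of order $\Omega(1)$ as a real number, I must show that, at precision $b_n$ and relative to $A$, it carries only $o(b_n)$ bits of information. The rapid growth $a_{n+1} = b_n^2$ is precisely what enables this, ensuring that the union of all earlier intervals has total bit-length $b_{n-1} = o(b_n)$. I also need to handle the cases $f_\alpha(m) > \gamma$ for $m < n$, where $\theta_{f_\alpha(m)}$ is not directly available from $A$; in those cases the $(b_m - a_m)$ bits of $z_\alpha$ on the interval must be transmitted as raw data rather than recomputed, but this is still absorbed by the $o(b_n)$ budget.
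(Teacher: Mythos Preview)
Your argument is correct and follows essentially the same route as the paper: fix the ordinal $\gamma$ of $\theta$, encode $\theta$ together with the countably many construction data at ordinals $\le \gamma$ into an oracle $A$, dispose of points with $\alpha \le \gamma$ trivially, and for $\alpha > \gamma$ bound $K^A_{b_n}(p_\theta(x_\alpha,z_\alpha))$ along the subsequence of $n$ with $f_\alpha(n)=\gamma$ using the defining property of $g_{\theta_\gamma}$. The paper simply asserts $K^\theta_{b_n}(p_\theta z)\le s b_n+o(b_n)$ ``recalling the definition of $g_{\theta_\alpha}$'' without spelling out the correction; your explicit bookkeeping---that $z_\alpha$ and $z^*=g_{\theta_\gamma}(x_\alpha,y_\alpha)$ agree on $(a_n,b_n]$, that $z^*$ agrees with $y_\alpha$ on $[1,a_n]$ up to $2^{-a_n}$, and that the residual discrepancy lives on $\bigcup_{m<n}(a_m,b_m]$ of total length $\le b_{n-1}=\sqrt{a_n}=o(b_n)$---fills in exactly the step the paper suppresses.
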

\begin{proof}
Let $\theta \in (\pi / 4, 3\pi /4)$ and $\alpha$ be its corresponding ordinal. Let $A$ be an oracle encoding $\theta$ and 
\begin{center}
$\bigcup\limits_{\beta \leq \alpha} (x_\beta, y_\beta, z_\beta)$.
\end{center}
Note that, since we assumed CH, this is a countable union, and so the oracle is well defined.

Let $z = (x_\beta, z_\beta) \in E$. First assume that $\beta \leq \alpha$. Then, by our construction of $A$, all the information of $p_\theta z$ is already encoded in our oracle, and so
\begin{center}
$K^A_r(p_\theta z) = o(r)$.
\end{center}

Now assume that $\beta > \alpha$. Then by our construction of $E$, there are infinitely many $n$ such that $f_\beta(n) = \alpha$. Therefore there are infinitely many $n$ such that
\begin{center}
$z_\beta [r] = g_{\theta_\alpha}(x_\beta, y_\beta)[r]$,
\end{center}
for $a_n < r \leq b_n$. Recalling the definition of $g_{\theta_\alpha}$, this means that, for each such $n$,
\begin{center}
$K^\theta_{b_n}(p_\theta z) = sb_n + o(r)$.
\end{center}

Therefore, by the point-to-set principle,
\begin{align*}
\dim_H(p_\theta E) &\leq \sup_{z\in E} \dim^A(p_\theta z)\\
&\leq \sup_{\beta > \alpha} \liminf_n \frac{K^A_{b_n}(p_\theta z)}{b_n}\\
&\leq \sup_{\beta > \alpha} \liminf_n \frac{sb_n}{b_n}\\
&= s,
\end{align*}
and the proof is complete.
\end{proof}

\begin{lem}
The Hausdorff dimension of $E$ is $1+s$.
\end{lem}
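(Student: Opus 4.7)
The plan is to prove both $\dim_H(E) \leq 1+s$ and $\dim_H(E) \geq 1+s$ via the point-to-set principle, exploiting the doubly exponential growth $a_{n+1} = b_n^2$ together with the randomness properties built into the construction.

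For the upper bound, I would let $A$ be an oracle encoding $\theta_1$. For every $\beta \geq 1$ the schedule $f_\beta$ sends infinitely many $n$ to the ordinal $1$ (and at $\beta=1$ the construction uses $\theta_1$ at every bad interval directly). Along this subsequence of $n$, at precision $r=b_n$, I would describe $z_\beta[1..b_n]$ from $x_\beta$ and $A$ by: (i) writing $z_\beta[1..a_n]$ out explicitly, which is $a_n \approx sb_n$ bits; (ii) supplying the $O(b_{n-1})=o(a_n)$ extra bits of $y_\beta$ at the bad positions inside $[1..a_n]$ so that $y_\beta[1..sb_n]$ becomes recoverable from $z_\beta[1..a_n]$; and (iii) computing $z_\beta[a_n+1..b_n] = g_{\theta_1}(x_\beta, y_\beta)[a_n+1..b_n]$. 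This yields $K^{A,x_\beta}_{b_n}(z_\beta) \leq sb_n + o(b_n)$ and therefore $K^A_{b_n}(x_\beta, z_\beta) \leq (1+s)b_n + o(b_n)$, giving $\dim^A(x_\beta, z_\beta) \leq 1+s$ for every $(x_\beta, z_\beta) \in E$.

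For the lower bound, I would fix an arbitrary oracle $A$, let $\alpha_0$ be its ordinal in the enumeration of $\mathcal{P}(\N)$, and choose any $\beta > \alpha_0$. By construction both $x_\beta$ and $y_\beta$ are random relative to $A$, so in particular $K^A_r(x_\beta) \geq r - o(r)$. Via the chain rule $K^A_r(x_\beta, z_\beta) = K^A_r(x_\beta) + K^{A,x_\beta}_r(z_\beta) + O(\log r)$, it then suffices to lower bound $K^{A,x_\beta}_r(z_\beta)$ by the number $L(r)$ of positions in $[1..r]$ lying outside every bad interval $(a_m, b_m]$. Because $z_\beta$ agrees with $y_\beta$ on all such positions, and because the restriction of a relatively random sequence to a computable set of indices remains random relative to the same oracle, one reads off from $z_\beta[1..r]$ a random substring of $y_\beta$ of length $L(r)$, so $K^{A,x_\beta}_r(z_\beta) \geq L(r) - o(r)$. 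A direct calculation using $a_{n+1}=b_n^2$ shows that $L(r)/r$ achieves its infimum $s$ exactly at $r=b_n$ and asymptotically as $r \to b_{n-1}^+$, which yields $\liminf_r K^A_r(x_\beta, z_\beta)/r \geq 1+s$ and hence $\dim_H(E) \geq 1+s$ by the point-to-set principle.

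The main obstacle I anticipate is verifying the lower bound uniformly over all precisions $r$, not only along the subsequence $\{b_n\}$. For $r$ slightly greater than $b_{n-1}$ the accumulated bad positions constitute a non-negligible fraction of $r$, and one must check that $L(r)/r \geq s - o(1)$ throughout this regime. The doubly exponential growth $a_{n+1} = b_n^2$ is precisely what makes past bad intervals negligible compared to the current random interval, so that the ratio degrades all the way down to $s$ only in the two boundary cases described above.
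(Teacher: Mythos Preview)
Your plan is sound and matches the paper's structure: upper bound via the oracle $\theta_1$ along the subsequence $\{b_n\}$, lower bound via a witness point whose auxiliary reals $x,y$ are random relative to the given oracle. For the lower bound you take $\beta > \alpha_0$ while the paper takes $\beta = \alpha$ itself; both choices work, since $x_\alpha, y_\alpha$ are already built random relative to $A_\alpha$. Your $L(r)$ bookkeeping is a clean repackaging of the paper's case split on $r \in (a_n, b_n]$ versus $r \in (b_n, a_{n+1}]$; both reduce to the same two facts, $a_n/b_n \to s$ and $(b_n - a_n)/r \leq 1-s$ for $r > b_n$, so the analysis you flag as the ``main obstacle'' goes through exactly as in the paper.

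The one genuine difference is in the upper bound. The paper does \emph{not} recompute $g$. Instead it observes that, given $\theta_1$ and $x_\alpha$, one can solve $z_\alpha = (p_{\theta_1}(x_\alpha, z_\alpha) - x_\alpha \cos\theta_1)/\sin\theta_1$, so
\[
K^{\theta_1}_{b_n}(z_\alpha \mid x_\alpha) \leq K^{\theta_1}_{b_n}\bigl(p_{\theta_1}(x_\alpha, z_\alpha)\bigr) + O(1),
\]
and the right-hand side is $\leq sb_n + o(b_n)$ by the defining property of $g$. Your route---recover $y_\beta[1..a_n]$ from $z_\beta[1..a_n]$ with $O(b_{n-1})$ correction bits and then recompute $g_{\theta_1}(x_\beta, y_\beta)$---also gives the bound, but it tacitly assumes that $g_\theta$ is \emph{computable} relative to $\theta$. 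The paper's definition of $g$ is purely existential and never asserts this. It is easy to arrange (e.g., let $g$ return the $z$ for which $p_\theta(x,z)$ is the leftmost dyadic of precision $\lfloor sr\rfloor$ in $p_\theta R$), but you should state and justify it explicitly; otherwise step (iii) of your description is not a legal Kolmogorov-complexity move. The paper's inversion argument buys you exactly this: it uses only the output guarantee on $K^\theta_r(p_\theta(x,g_\theta(x,y)))$, not any computability of $g$ itself.
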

\begin{proof}
We first give an upper bound on the dimension. Let $A$ be an oracle encoding $\theta_1$. Let $z = (x_\alpha, z_\alpha)$. By our construction of $E$, there are infinitely many $n$ such that $f_\alpha(n) = 1$. Therefore there are infinitely many $n$ such that
\begin{center}
$z_\beta [r] = g_{\theta_1}(x_\beta, y_\beta)[r]$,
\end{center}
for $a_n < r \leq b_n$. Recalling the definition of $g_{\theta_1}$, this means that, for each such $n$,
\begin{center}
$K^{\theta_1}_{b_n}(p_{\theta_1} z) = sb_n + o(r)$.
\end{center}
Moreover,
\begin{align*}
K^{\theta_1}_{b_n}(x_\alpha, z_\alpha) &\leq K^{\theta_1}_{b_n}(x_\alpha) + K^{\theta_1}_{b_n}(z_\alpha \mid x_\alpha)+ o(r)\\
&\leq b_n + K^{\theta_1}_{b_n}(p_{\theta_1} z)+ o(r)\\
&\leq b_n + sb_n + o(b_n)).
\end{align*}

Therefore, by the point-to-set principle,
\begin{align*}
\dim_H(E) &\leq \sup_{z\in E} \dim^A(z)\\
&\leq \sup_{z\in E} \liminf_n \frac{K^A_{b_n}(z)}{b_n}\\
&\leq \sup_{z\in E} \liminf_n \frac{(1 + s) b_n + o(b_n)}{b_n}\\
&= 1 + s.
\end{align*}

For the upper bound, let $A$ be a Hausdorff oracle for $E$, and let $\alpha$ be the ordinal corresponding to $A$. By construction of $z = (x_\alpha, z_\alpha)$, 
\begin{center}
$K^A_r(x_\alpha) \geq r - o(r)$,
\end{center}
for all $r\in \N$. We also have, for every $n$,
\begin{align*}
K^A_{a_n}(z_\alpha \, | \, x_\alpha) &\geq K^A_{a_n}(y_\alpha \, | \, x_\alpha) - b_{n-1} - o(a_n)\\
&\geq a_n - b_{n-1} - o(a_n)\\
&= a_n - a^{\frac{1}{2}}_n - o(a_n).
\end{align*}
Hence, for every $n$ and every $a_n < r \leq b_n$,
\begin{align*}
K^A_{r}(z_\alpha \, | \, x_\alpha) &\geq K^A_{a_n}(z_\alpha \, | \, x_\alpha) \\
&\geq a_n - a^{\frac{1}{2}}_n - o(a_n).
\end{align*}
This implies that
\begin{align*}
\frac{K^A_r(x_\alpha, z_\alpha)}{r} &= \frac{K^A_r(x_\alpha) +K^A_{r}(z_\alpha \, | \, x_\alpha)}{r}\\
&\geq \frac{r +a_n - a^{\frac{1}{2}}_n - o(a_n)}{r}\\
&= 1 + s- o(1).
\end{align*}

We can also conclude that, for every $n$ and every $b_n < r \leq a_{n+1}$,
\begin{align*}
K^A_{r}(z_\alpha \, | \, x_\alpha) &\geq K^A_{b_n}(z_\alpha \, | \, x_\alpha) K^A_{a_n,b_n}(z_\alpha \, | \, x_\alpha)-o(r)\\
&\geq a_n - a^{\frac{1}{2}}_n + r - b_n - o(r).
\end{align*}
This implies that
\begin{align*}
\frac{K^A_r(x_\alpha, z_\alpha)}{r} &= \frac{K^A_r(x_\alpha) +K^A_{r}(z_\alpha \, | \, x_\alpha)}{r}\\
&\geq \frac{r +a_n - a^{\frac{1}{2}}_n + r - b_n - o(r)}{r}\\
&\geq 1+s -o(1).
\end{align*}

These inequalities, combined with the point-to-set principle show that
\begin{align*}
\dim_H(E) &= \sup_{z\in E} \dim^A(z)\\
&\geq \sup_{z\in E} \liminf_r \frac{K^A_{r}(z)}{r}\\
&\geq \sup_{z\in E} \liminf_r 1+s\\
&= 1 + s,
\end{align*}
and the proof is complete.
\end{proof}

\subsection{Generalization to the endpoint $s = 0$}
\begin{thm}
Assuming AC and CH, there is a set $E$ such that $\dim_H(E) = 1$ but 
\begin{center}
$\dim_H(p_\theta E) = 0$
\end{center}
for every $\theta \in (\pi/4, 3\pi/4)$.
\end{thm}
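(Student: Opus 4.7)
The plan is to rerun the construction from Section \ref{ssec:daviesCounter} with modified block sequences. Choose $\{a_n\}, \{b_n\}$ satisfying $a_n / b_n \to 0$ and $b_{n-1}/a_n \to 0$; for instance $a_1 = 2$, $b_n = n \cdot a_n$, and $a_{n+1} = b_n^2$, so that $a_n / b_n = 1/n \to 0$ and $b_{n-1} / a_n = 1/b_{n-1} \to 0$. In place of the fixed-$s$ helper function, use the parametric version $g_\theta(x, y, n) \mapsto z$ which, by the same projection-interval argument applied at precision $b_n$ to the rectangle of width $2 \cdot 2^{-b_n}$ and height $2 \cdot 2^{-a_n}$ centered at $(x, y)$, returns a $z$ with $K^\theta_{b_n}(p_\theta(x, z)) \leq a_n + o(b_n)$.

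Using AC and CH, enumerate the subsets of $\N$ as $\{A_\alpha\}$ and the angles $\theta \in (\pi/4, 3\pi/4)$ as $\{\theta_\alpha\}$, each with at most countably many predecessors, and pick $f_\alpha : \N \to \{\beta : \beta < \alpha\}$ hitting every $\beta < \alpha$ infinitely often. By transfinite recursion take $x_\alpha$ random relative to $A_\alpha$ joined with all prior $(A_\beta, x_\beta, y_\beta)$, then $y_\alpha$ random relative to the same join together with $x_\alpha$, and set
\[
z_\alpha[r] = \begin{cases} g_{\theta_{f_\alpha(n)}}(x_\alpha, y_\alpha, n)[r] & \text{if } a_n < r \leq b_n, \\ y_\alpha[r] & \text{otherwise.} \end{cases}
\]
Let $E = \{(x_\alpha, z_\alpha)\}$.

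To show $\dim_H(p_\theta E) = 0$ for $\theta = \theta_\alpha$, let $A$ encode $\theta$ together with the (countable, by CH) set $\{(x_\beta, y_\beta, z_\beta) : \beta \leq \alpha\}$. Points $(x_\beta, z_\beta)$ with $\beta \leq \alpha$ satisfy $K^A_r(p_\theta(x_\beta, z_\beta)) = o(r)$, while for $\beta > \alpha$ the infinitely many $n$ with $f_\beta(n) = \alpha$ give $K^A_{b_n}(p_\theta(x_\beta, z_\beta)) \leq a_n + o(b_n)$ by construction of $g_{\theta_\alpha}$. Since $a_n/b_n \to 0$, this forces $\dim^A(p_\theta z) = 0$ for every $z \in E$, and the point-to-set principle closes the bound.

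For $\dim_H(E) = 1$, the upper bound runs along any fixed angle $\theta_1$: infinitely many $n$ satisfy $f_\alpha(n) = 1$, and at these precisions Lipschitz-recoverability of $z_\alpha$ from $x_\alpha$ and $p_{\theta_1}(x_\alpha, z_\alpha)$ yields $K^{\theta_1}_{b_n}(z_\alpha \mid x_\alpha) \leq a_n + o(b_n)$, hence $K^{\theta_1}_{b_n}(x_\alpha, z_\alpha) \leq b_n + a_n + o(b_n)$ and $\dim^{\theta_1}(x_\alpha, z_\alpha) \leq 1$. The lower bound is essentially free: any Hausdorff oracle for $E$ is some $A_\alpha$, and by construction $x_\alpha$ is random relative to $A_\alpha$, so $K^{A_\alpha}_r(x_\alpha, z_\alpha) \geq K^{A_\alpha}_r(x_\alpha) \geq r - o(r)$. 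The only real subtlety --- and essentially the one place the endpoint case departs from Section \ref{ssec:daviesCounter} --- is calibrating $(a_n, b_n)$ so that both $a_n/b_n \to 0$ and $b_{n-1}/a_n \to 0$ hold simultaneously; once the schedule is fixed, every estimate transfers verbatim, with the simplification that the Hausdorff lower bound is now carried entirely by the $x$-coordinate.
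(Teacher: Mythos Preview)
Your proposal is correct and follows essentially the same construction as the paper: the same transfinite enumeration, the same block-splicing of $z_\alpha$ via a projection-lowering helper function, and the same choice of growth rates (your $b_n = n a_n$, $a_{n+1}=b_n^2$ is equivalent to the paper's $b_{n+1}=(n+1)a_{n+1}$, $a_{n+1}=b_n^2$). Your one genuine simplification is in the lower bound $\dim_H(E)\geq 1$: the paper redundantly tracks $K^A_r(z_\alpha\mid x_\alpha)$ through the block structure, whereas you correctly observe that $K^{A_\alpha}_r(x_\alpha,z_\alpha)\geq K^{A_\alpha}_r(x_\alpha)\geq r-o(r)$ already suffices at the endpoint, which also means the condition $b_{n-1}/a_n\to 0$ is not actually needed for your argument.
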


For every $r\in \N$, $\theta \in (\pi / 4, 3\pi / 4)$, binary string $x$ of length $r$ and string $y$ of length $sr$, let $g^s_\theta(x,y) \mapsto z$ be a function such that 
\begin{center}
$K^\theta_r(p_\theta \, (x,z)) \leq sr + o(r)$.
\end{center}
That is, $g^s_\theta$, given a rectangle 
\begin{center}
$R = [d_x - 2^{-r}, d_x + 2^{-r}] \times [d_y - 2^{-sr}, d_y + 2^{-sr}]$,
\end{center}
outputs a value $z$ such that $K_r(p_\theta(x, z))$ is small.

We begin by defining two sequences of natural numbers, $\{a_n\}$ and $\{b_n\}$. Let $a_1 = 2$, and $b_1 = 4$. Inductively define $a_{n+1} = b_{n}^2$ and $b_{n+1} = (n+1)\lfloor a_{n+1}\rfloor$. We will also need, for every ordinal $\alpha$, a function $f_\alpha : \N \to \{\beta \, | \, \beta < \alpha\}$ such that each ordinal $\beta < \alpha$ is mapped to by infinitely many $n$. Note that such a function exists, since the range is countable assuming CH.

Using AC and CH, we first order the subsets of the natural numbers and we order the angles $\theta \in (\pi/4, 3\pi / 4)$ so that each has at most countably many predecessors.

We will define real numbers $x_\alpha$, $y_\alpha$ and $z_\alpha$ inductively. Let $x_1$ be a real which is random relative to $A_1$. Let $y_1$ be a real which is random relative to $(A_1, x_1)$. Define $z_1$ to be the real whose binary expansion is given by
\begin{align*}
z_1[r] = \begin{cases}
g^1_{\theta_1}(x_1,y_1)[r] &\text{ if } a_n < r \leq b_n \text{ for some } n \in \N \\
y_1[r] &\text{ otherwise}
\end{cases}
\end{align*}

For the induction step, suppose we have defined our points up to ordinal $\alpha$. Let $x_\alpha$ be a real number which is random relative to the join of $\bigcup_{\beta < \alpha} (A_\beta, x_\beta)$ and $A_\alpha$. Let $y_\alpha$ be random relative to the join of $\bigcup_{\beta < \alpha} (A_\beta, x_\beta)$, $A_\alpha$ and $x_\alpha$. This is possible, as we are assuming CH, and so this union is countable. Let $z_\alpha$ be the point whose binary expansion is given by
\begin{align*}
z_\alpha[r] = \begin{cases}
g^{1/n}_{\theta_\beta}(x_\alpha, y_\alpha)[r] &\text{ if } a_n < r \leq b_n, \text{ for } f_\alpha(n) = \beta \\
y_\alpha[r] &\text{ otherwise}
\end{cases}
\end{align*}

Finally, we define our set $E = \{(x_\alpha, z_\alpha)\}$. 

\begin{lem}
For every $\theta \in (\pi/4, 3\pi/4)$,
\begin{center}
$\dim_H(p_\theta E) = 0$.
\end{center}
\end{lem}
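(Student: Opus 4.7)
The plan is to mirror the structure of the analogous projection-upper-bound lemma in the previous subsection, with the only essential change being that the ``projection gadget'' is now invoked with the parameter $1/n \to 0$ rather than a fixed $s\in(0,1)$. Consequently, along the milestone precisions $b_n$ the projected Kolmogorov complexity is bounded by $b_n/n + o(b_n)$; dividing by $b_n$ and passing to the $\liminf$ therefore yields $0$ rather than $s$.

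Fix $\theta \in (\pi/4,3\pi/4)$ and let $\alpha$ be its ordinal in the chosen well-ordering. Let $A$ be an oracle encoding $\theta$ together with the countable family $\bigcup_{\beta \leq \alpha}(x_\beta,y_\beta,z_\beta)$; this collection is countable by CH, so $A$ is well-defined. By the point-to-set principle it suffices to prove that $\dim^A(p_\theta z) = 0$ for every $z \in E$, as this forces $\dim_H(p_\theta E)\leq 0$.

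For $z = (x_\beta,z_\beta)$ with $\beta \leq \alpha$, the oracle $A$ already computes $z$, so $K^A_r(p_\theta z) = o(r)$ and $\dim^A(p_\theta z) = 0$ trivially. For $\beta > \alpha$, the function $f_\beta$ hits $\alpha$ infinitely often, and for each such $n$ the construction forces
\[z_\beta[r] = g^{1/n}_{\theta_\alpha}(x_\beta,y_\beta)[r] \quad \text{for all } a_n < r \leq b_n.\]
By the defining property of $g^{1/n}_{\theta}$ applied to the rectangle determined by the first $b_n$ bits of $x_\beta$ and the first $b_n/n$ bits of $y_\beta$, we obtain
\[K^A_{b_n}(p_\theta (x_\beta,z_\beta)) \leq K^{\theta}_{b_n}(p_\theta(x_\beta,z_\beta)) \leq \frac{b_n}{n} + o(b_n).\]
Taking $\liminf$ along the subsequence of those $n$ with $f_\beta(n)=\alpha$ gives
\[\dim^A(p_\theta z) = \liminf_{r} \frac{K^A_r(p_\theta z)}{r} \leq \liminf_{n} \frac{K^A_{b_n}(p_\theta z)}{b_n} \leq \lim_{n} \frac{1}{n} = 0.\]

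The main subtlety, and the only place where the argument is not a mechanical transcription of the $s>0$ lemma, is verifying that fixing the bits of $z_\beta$ on the blocks $(a_n,b_n]$ via $g^{1/n}_{\theta_\alpha}$ genuinely yields a projection whose complexity at precision $b_n$ is at most $b_n/n + o(b_n)$ even though the parameter used on earlier blocks was $1/m$ for $m<n$. This follows because $p_\theta$ is Lipschitz and because $g^{s}_\theta$ is, by construction, effective uniformly in the rational parameter $s\in(0,1)$; so a rational approximation to $p_\theta z$ at precision $b_n$ can be produced from $\theta$, $x_\beta\!\upharpoonright\! b_n$ and the $\leq b_n/n$ bits of information identifying the block, giving the claimed bound. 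Everything else, including the transfinite induction, the encoding of $A$, and the handling of the $\beta\leq\alpha$ case, is identical to the proof in the previous subsection.
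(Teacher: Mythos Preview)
Your argument is correct and follows the paper's proof essentially verbatim: the same oracle $A$ encoding $\theta$ together with $\{(x_\beta,y_\beta,z_\beta):\beta\le\alpha\}$, the same case split on $\beta\le\alpha$ versus $\beta>\alpha$, and the same appeal to the defining property of $g^{1/n}_{\theta}$ along the precisions $b_n$ with $f_\beta(n)=\alpha$ to get $K^{\theta}_{b_n}(p_\theta z)\le b_n/n + o(b_n)$ and hence $\dim^A(p_\theta z)=0$. Your final paragraph goes beyond what the paper writes (the paper simply asserts the complexity bound), but the justification you sketch there is not quite right---listing $x_\beta{\upharpoonright}b_n$ among the inputs would give a bound of order $b_n$, not $b_n/n$; the bound instead comes directly from the defining property of $g^{1/n}_\theta$, exactly as in the $s>0$ lemma, so no new argument is needed.
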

\begin{proof}
Let $\theta \in (\pi / 4, 3\pi /4)$ and $\alpha$ be its corresponding ordinal. Let $A$ be an oracle encoding $\theta$ and 
\begin{center}
$\bigcup\limits_{\beta \leq \alpha} (x_\beta, y_\beta, z_\beta)$.
\end{center}
Note that, since we assumed CH, this is a countable union, and so the oracle is well defined.

Let $z = (x_\beta, z_\beta) \in E$. First assume that $\beta \leq \alpha$. Then, by our construction of $A$, all the information of $p_\theta z$ is already encoded in our oracle, and so
\begin{center}
$K^A_r(p_\theta z) = o(r)$.
\end{center}

Now assume that $\beta > \alpha$. Then by our construction of $E$, there are infinitely many $n$ such that $f_\beta(n) = \alpha$. Therefore there are infinitely many $n$ such that
\begin{center}
$z_\beta [r] = g^{1/n}_{\theta_\alpha}(x_\beta, y_\beta)[r]$,
\end{center}
for $a_n < r \leq b_n$. Recalling the definition of $g^{1/n}_{\theta_\alpha}$, this means that, for each such $n$,
\begin{center}
$K^\theta_{b_n}(p_\theta z) = \frac{b_n}{n} + o(r)$.
\end{center}

Therefore, by the point-to-set principle,
\begin{align*}
\dim_H(p_\theta E) &\leq \sup_{z\in E} \dim^A(p_\theta z)\\
&\leq \sup_{\beta > \alpha} \liminf_n \frac{K^A_{b_n}(p_\theta z)}{b_n}\\
&\leq \sup_{\beta > \alpha} \liminf_n \frac{\frac{b_n}{n}}{b_n}\\
&= \frac{1}{n},
\end{align*}
and the proof is complete.
\end{proof}

\begin{lem}
The Hausdorff dimension of $E$ is $1$.
\end{lem}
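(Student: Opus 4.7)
The plan is to establish the two bounds $\dim_H(E) \leq 1$ and $\dim_H(E) \geq 1$ separately via the point-to-set principle, closely mirroring the proof of the preceding lemma for the $1+s$ case but exploiting the fact that the varying parameter $1/n$ tends to $0$; this is precisely what pulls the dimension down from $1+s$ to $1$.

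For the upper bound, I would fix a specific oracle $A$ encoding the angle $\theta_1$ and show $\dim^A(z) \leq 1$ for every $z = (x_\gamma, z_\gamma) \in E$. The construction of $f_\gamma$ guarantees $f_\gamma(n) = 1$ for infinitely many $n$, so for each such $n$ the block of $z_\gamma$ on $(a_n, b_n]$ is produced by $g^{1/n}_{\theta_1}(x_\gamma, y_\gamma)$. By the defining property of $g^{1/n}_{\theta_1}$, we get $K^{\theta_1}_{b_n}(p_{\theta_1}(x_\gamma, z_\gamma)) \leq b_n/n + o(b_n)$. Since $\sin\theta_1$ is bounded away from $0$ and the projection is Lipschitz-invertible given the first coordinate, this transfers to $K^{\theta_1}_{b_n}(z_\gamma \mid x_\gamma) \leq b_n/n + o(b_n)$. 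Combined with the trivial $K^{\theta_1}_{b_n}(x_\gamma) \leq b_n + O(\log b_n)$, this yields $K^A_{b_n}(x_\gamma, z_\gamma) \leq b_n(1 + 1/n) + o(b_n)$ for each $n$ with $f_\gamma(n) = 1$. Taking $\liminf$ along this infinite subsequence of precisions gives $\dim^A(z) \leq 1$, and thus $\dim_H(E) \leq 1$.

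For the lower bound, let $A = A_\alpha$ be an arbitrary Hausdorff oracle for $E$, where $\alpha$ is the ordinal of $A$ in the well-ordering, and consider the point $z = (x_\alpha, z_\alpha) \in E$. The construction ensures $x_\alpha$ is random relative to $A_\alpha$ (joined with the prior data), so $K^A_r(x_\alpha) \geq r - o(r)$. Since the first coordinate is a computable projection of the pair, $K^A_r(x_\alpha, z_\alpha) \geq K^A_r(x_\alpha) - O(\log r) \geq r - o(r)$, giving $\dim^A(z) \geq 1$. The point-to-set principle then yields $\dim_H(E) \geq 1$, completing the proof.

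The main technical point is the subsequence bookkeeping in the upper bound: the expression $1 + 1/n$ does not decay to $1$ for generic $r$, but along the infinite set of precisions $\{b_n : f_\gamma(n) = 1\}$ the bound $K^A_{b_n}(z)/b_n \leq 1 + 1/n + o(1)$ does tend to $1$, which is exactly what $\liminf$ captures. No new ideas beyond those used in the preceding lemma are required; the only modification is that the parameter $s$ in $g^s_\theta$ is replaced by $1/n$, tied to the block index.
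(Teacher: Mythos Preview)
Your proposal is correct, and the upper bound argument is essentially identical to the paper's. For the lower bound, however, you take a shorter route than the paper does. The paper reproduces the full two-regime analysis from the $1+s$ case, bounding $K^A_r(z_\alpha \mid x_\alpha)$ separately on the intervals $(a_n, b_n]$ and $(b_n, a_{n+1}]$ to conclude $K^A_r(x_\alpha, z_\alpha)/r \geq 1 - o(1)$ for all $r$. You instead observe that since $x_\alpha$ is random relative to $A_\alpha$, the first coordinate alone already forces $K^A_r(x_\alpha, z_\alpha) \geq K^A_r(x_\alpha) - O(\log r) \geq r - o(r)$, which immediately gives $\dim^A(z) \geq 1$. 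This is a genuine simplification: the detailed conditional-complexity bookkeeping the paper carries over from the $1+s$ lemma is needed there to capture the extra $+s$ contribution from $z_\alpha$, but at the endpoint $s=0$ that contribution is zero and the first coordinate suffices. Your argument is cleaner for this particular lemma; the paper's argument is more uniform with the preceding one.
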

\begin{proof}
We first give an upper bound on the dimension. Let $A$ be an oracle encoding $\theta_1$. Let $z = (x_\alpha, z_\alpha)$. By our construction of $E$, there are infinitely many $n$ such that $f_\alpha(n) = 1$. Therefore there are infinitely many $n$ such that
\begin{center}
$z_\beta [r] = g^{1/n}_{\theta_1}(x_\beta, y_\beta)[r]$,
\end{center}
for $a_n < r \leq b_n$. Recalling the definition of $g^{1/n}_{\theta_1}$, this means that, for each such $n$,
\begin{center}
$K^{\theta_1}_{b_n}(p_{\theta_1} z) = \frac{b_n}{n} + o(r)$.
\end{center}
Moreover,
\begin{align*}
K^{\theta_1}_{b_n}(x_\alpha, z_\alpha) &\leq K^{\theta_1}_{b_n}(x_\alpha) + K^{\theta_1}_{b_n}(z_\alpha \mid x_\alpha)+ o(r)\\
&\leq b_n + K^{\theta_1}_{b_n}(p_{\theta_1} z)+ o(r)\\
&\leq b_n + \frac{b_n}{n} + o(b_n)).
\end{align*}

Therefore, by the point-to-set principle,
\begin{align*}
\dim_H(E) &\leq \sup_{z\in E} \dim^A(z)\\
&\leq \sup_{z\in E} \liminf_n \frac{K^A_{b_n}(z)}{b_n}\\
&\leq \sup_{z\in E} \liminf_n \frac{(b_n + b_n/n + o(b_n)}{b_n}\\
&= 1.
\end{align*}

For the upper bound, let $A$ be a Hausdorff oracle for $E$, and let $\alpha$ be the ordinal corresponding to $A$. By construction of $z = (x_\alpha, z_\alpha)$, 
\begin{center}
$K^A_r(x_\alpha) \geq r - o(r)$,
\end{center}
for all $r\in \N$. We also have, for every $n$,
\begin{align*}
K^A_{a_n}(z_\alpha \, | \, x_\alpha) &\geq K^A_{a_n}(y_\alpha \, | \, x_\alpha) - b_{n-1} - o(a_n)\\
&\geq a_n - b_{n-1} - o(a_n)\\
&= a_n - a^{\frac{1}{2}}_n - o(a_n).
\end{align*}
Hence, for every $n$ and every $a_n < r \leq b_n$,
\begin{align*}
K^A_{r}(z_\alpha \, | \, x_\alpha) &\geq K^A_{a_n}(z_\alpha \, | \, x_\alpha) \\
&\geq a_n - a^{\frac{1}{2}}_n - o(a_n).
\end{align*}
This implies that
\begin{align*}
\frac{K^A_r(x_\alpha, z_\alpha)}{r} &= \frac{K^A_r(x_\alpha) +K^A_{r}(z_\alpha \, | \, x_\alpha)}{r}\\
&\geq \frac{r +a_n - a^{\frac{1}{2}}_n - o(a_n)}{r}\\
&= 1- o(1).
\end{align*}

We can also conclude that, for every $n$ and every $b_n < r \leq a_{n+1}$,
\begin{align*}
K^A_{r}(z_\alpha \, | \, x_\alpha) &\geq K^A_{b_n}(z_\alpha \, | \, x_\alpha) + K^A_{a_n,b_n}(z_\alpha \, | \, x_\alpha)-o(r)\\
&\geq a_n - a^{\frac{1}{2}}_n + r - b_n - o(r).
\end{align*}
This implies that
\begin{align*}
\frac{K^A_r(x_\alpha, z_\alpha)}{r} &= \frac{K^A_r(x_\alpha) +K^A_{r}(z_\alpha \, | \, x_\alpha)}{r}\\
&\geq \frac{r +a_n - a^{\frac{1}{2}}_n + r - b_n - o(r)}{r}\\
&\geq 1 -o(1).
\end{align*}

These inequalities, combined with the point-to-set principle show that
\begin{align*}
\dim_H(E) &= \sup_{z\in E} \dim^A(z)\\
&\geq \sup_{z\in E} \liminf_r \frac{K^A_{r}(z)}{r}\\
&\geq \sup_{z\in E} 1\\
&= 1,
\end{align*}
and the proof is complete.
\end{proof}

\section{Optimal Packing Oracles}\label{sec:optimalPacking}
Similarly, we can define optimal \textit{packing} oracles for a set.
\begin{defn}
Let $E \subseteq\R^n$ and $A \subseteq \N$. We say that $A$ is an \textit{optimal packing oracle} (or \textit{packing optimal}) for $E$ if the following conditions are satisfied.
\begin{enumerate}
\setlength\itemsep{.5em}
\item $A$ is a packing oracle for $E$.
\item For every $B \subseteq \N$ and every $\epsilon > 0$ there is a point $x\in E$ such that $\Dim^{A,B}(x) \geq \dim_P(E) - \epsilon$ and for almost every $r\in \N$
\begin{center}
$K^{A,B}_r(x) \geq K^A_r(x) - \epsilon r$.
\end{center}
\end{enumerate}
\end{defn}

Let $E\subseteq \R^n$ and $A\subseteq\N$. For $B\subseteq \N$, $\epsilon > 0$ define the set
\begin{center}
$N(A, B,\epsilon) = \{x \in E \, | \, (\forall^\infty r) \, K^{A,B}_r(x) \geq K^A_r(x) - \epsilon r\}$
\end{center}

\begin{prop}
Let $E\subseteq\R^n$ be a set such that $\dim_P(E) > 0$ and let $A$ be an oracle. Then $A$ is packing optimal for $E$ if and only if $A$ is a packing oracle and for every $B\subseteq\N$ and $\epsilon > 0$, $\dim_P(N(A, B,\epsilon)) = \dim_P(E)$.
\end{prop}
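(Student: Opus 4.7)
The plan is to follow the proof given earlier for the Hausdorff analogue essentially verbatim, swapping $\dim$ for $\Dim$, $\dim_H$ for $\dim_P$, and invoking the packing half of the point-to-set principle (Theorem \ref{thm:p2s}) in place of the Hausdorff half. The argument uses only formal properties: monotonicity of Kolmogorov complexity under additional oracles, the definition of $N(A,B,\epsilon)$, and the point-to-set principle. No geometric measure theory enters, so the translation is mechanical.

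For the forward direction, suppose $A$ is packing optimal for $E$. Then (1) in the proposition holds by definition. For (2), fix $B$ and $\epsilon > 0$ and suppose for contradiction that $\dim_P(N(A,B,\epsilon)) < \dim_P(E) - \gamma$ for some $\gamma \in (0,\epsilon)$. Let $C$ be a packing oracle for $N(A,B,\epsilon)$; then by Theorem \ref{thm:p2s}, every $x \in N(A,B,\epsilon)$ satisfies
\begin{align*}
\Dim^{A,(B,C)}(x) &\leq \Dim^C(x) \leq \dim_P(N(A,B,\epsilon)) < \dim_P(E) - \gamma.
\end{align*}
Apply optimality of $A$ with oracle $(B,C)$ and parameter $\gamma$ to get $x \in E$ with $\Dim^{A,(B,C)}(x) \geq \dim_P(E) - \gamma$ and $K^{A,(B,C)}_r(x) \geq K^A_r(x) - \gamma r$ for almost every $r$. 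The dimension bound forces $x \notin N(A,B,\epsilon)$, so $K^{A,B}_r(x) < K^A_r(x) - \epsilon r$ for infinitely many $r$. Since adjoining $C$ only decreases complexity (up to an additive constant), $K^{A,(B,C)}_r(x) < K^A_r(x) - \epsilon r \leq K^A_r(x) - \gamma r$ for infinitely many $r$, contradicting the almost-everywhere lower bound. Hence $\dim_P(N(A,B,\epsilon)) \geq \dim_P(E)$, and the reverse inequality is automatic since $N(A,B,\epsilon) \subseteq E$.

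For the backward direction, assume $A$ is a packing oracle and $\dim_P(N(A,B,\epsilon)) = \dim_P(E)$ for every $B$ and $\epsilon > 0$. Condition (1) holds by assumption. For condition (2), fix $B$ and $\epsilon > 0$. By the point-to-set principle applied to $N(A,B,\epsilon)$ with oracle $(A,B)$,
\begin{align*}
\dim_P(E) \;=\; \dim_P(N(A,B,\epsilon)) \;\leq\; \sup_{x \in N(A,B,\epsilon)} \Dim^{A,B}(x),
\end{align*}
so there is some $x \in N(A,B,\epsilon)$ with $\Dim^{A,B}(x) \geq \dim_P(E) - \epsilon$. Membership in $N(A,B,\epsilon)$ gives $K^{A,B}_r(x) \geq K^A_r(x) - \epsilon r$ for almost every $r$, establishing condition (2).

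There is no significant obstacle in this argument; the only subtlety, present already in the Hausdorff version, is the choice $\gamma < \epsilon$ in the forward direction so that the failure of $x$ to lie in $N(A,B,\epsilon)$ actually contradicts the optimality-supplied lower bound with the smaller slack $\gamma$. Everything else is a direct rewriting of the earlier proposition's proof with packing notation.
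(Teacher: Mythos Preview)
Your proof is correct and follows essentially the same argument as the paper's: the forward direction via a packing oracle $C$ for $N(A,B,\epsilon)$ and a contradiction from optimality with parameter $\gamma<\epsilon$, and the backward direction via the point-to-set principle applied to $N(A,B,\epsilon)$. In fact your backward direction is more carefully stated than the paper's, which contains evident copy-paste slips (it reverts to $\dim_H$ and $\dim$ where $\dim_P$ and $\Dim$ are intended).
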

\begin{proof}
For the forward direction, let $A$ be an optimal packing oracle for $E$. Then by the first condition of the definition, $A$ is a packing oracle. Let $B\subseteq\N$ and $\epsilon > 0$. Let $C$ be a packing oracle for $N(A, B,\epsilon)$. For the sake of contradiction, suppose that 
\begin{center}
$\dim_P(N(A, B,\epsilon)) < \dim_P(E) - \gamma$,
\end{center}
for some $\gamma > 0$. We will, without loss of generality, assume that $\gamma < \epsilon$. Then, by the point to set principle, for every $x \in N(A,B,\epsilon)$,
\begin{align*}
\Dim^{A, (B, C)}(x) &\leq \Dim^{C}(x)\\
&\leq \dim_P(N(A,B,\epsilon))\\
&< \dim_P(E) - \gamma.
\end{align*}
Since, $A$ is an optimal packing oracle for $E$, there is a point $x \in E$ such that $\Dim^{A, (B,C)}(x) \geq \dim_P(E) - \gamma$ and for almost every $r\in \N$
\begin{center}
$K^{A,(B, C)}_r(x) \geq K^A_r(x) - \gamma r$.
\end{center}
By our previous discussion, any such point $x$ cannot be in $N(A,B,\epsilon)$. However, if $x \notin N(A,B,\epsilon)$, then for infinitely many $r$,
\begin{center}
$K^{A,(B, C)}_r(x) < K^A_r(x) - \epsilon r$.
\end{center}
Thus, no such $x$ exists, contradicting the fact that $A$ is packing optimal.

For the backward direction, let $A$ be an oracle satisfying the hypothesis. Then $A$ is a Hausdorff oracle for $E$ and the first condition of optimal Hausdorff oracles is satisfied. Let $B \subseteq \N$ and $\epsilon > 0$. By our hypothesis and the point-to-set principle,
\begin{align*}
\dim_H(E) &= \dim_H(N(A, B,\epsilon))\\
&\leq \sup\limits_{x \in N(A, B,\epsilon)} \dim^{A,B}(x).
\end{align*}
Therefore, there is certainly a point $x \in E$ such that $\dim^{A,B}(x) \geq \dim_H(E) - \epsilon$ and 
\begin{center}
$K^{A,B}_r(x) \geq K^A_r(x) - \epsilon r$,
\end{center}
for almost every $r\in\N$.
\end{proof}

\begin{lem}\label{lem:joinPackingOptimalisOptimal}
Let $E\subseteq \R^n$. If $A$ is packing optimal for $E$, then the join $C = (A,B)$ is packing optimal for $E$ for every oracle $B$.
\end{lem}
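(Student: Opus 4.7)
The proof will essentially mirror the argument for Lemma~\ref{lem:joinOptimalisOptimal}, substituting the packing dimension $\Dim$ for the Hausdorff dimension $\dim$ and invoking the packing-dimension half of the point-to-set principle (Theorem~\ref{thm:p2s}). The plan is to first verify condition (1) of optimal packing oracles, then use the packing optimality of $A$ to produce a witness for condition (2), exploiting the fact that the join operation is associative (up to computable re-encoding) and that joining more information can only decrease Kolmogorov complexity.

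For condition (1), I would observe that by the point-to-set principle,
\[
\dim_P(E) = \sup_{x \in E} \Dim^A(x) \geq \sup_{x \in E} \Dim^{(A,B)}(x) \geq \dim_P(E),
\]
where the first inequality holds because augmenting the oracle cannot increase effective packing dimension. Hence $C = (A,B)$ is a packing oracle for $E$.

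For condition (2), fix an arbitrary oracle $B' \subseteq \N$ and $\epsilon > 0$. Since $A$ is packing optimal for $E$, applied with the oracle $(B, B')$ and tolerance $\epsilon/2$, there exists $x \in E$ with $\Dim^{A,(B,B')}(x) \geq \dim_P(E) - \epsilon/2$ and, for almost all $r$,
\[
K^{A,(B,B')}_r(x) \geq K^A_r(x) - \epsilon r / 2.
\]
Now I would rewrite $K^{A,(B,B')}_r(x) = K^{(A,B),B'}_r(x) = K^{C,B'}_r(x)$ using associativity of the join, and bound $K^A_r(x) \geq K^{A,B}_r(x) = K^C_r(x)$ since additional oracle information cannot increase complexity. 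Chaining these gives $K^{C,B'}_r(x) \geq K^C_r(x) - \epsilon r/2 \geq K^C_r(x) - \epsilon r$ for almost every $r$, while $\Dim^{C,B'}(x) = \Dim^{A,(B,B')}(x) \geq \dim_P(E) - \epsilon/2 \geq \dim_P(E) - \epsilon$.

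There is no real obstacle here; the argument is a direct transcription of the Hausdorff case. The only minor point requiring care is noting that the two conditions in the definition of packing optimality are genuinely conditions about \emph{one} point $x$ simultaneously, so the same $x$ produced by applying optimality of $A$ to the oracle $(B,B')$ witnesses both conditions for $C$. Once this is observed, the proof is mechanical.
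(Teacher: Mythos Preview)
Your proposal is correct and follows essentially the same approach as the paper: both verify condition~(1) via the packing half of the point-to-set principle, then for condition~(2) apply the packing optimality of $A$ to the oracle $(B,B')$ with tolerance $\epsilon/2$ and chain the inequalities $K^{(A,B),B'}_r(x) = K^{A,(B,B')}_r(x) \geq K^A_r(x) - \epsilon r/2 \geq K^{A,B}_r(x) - \epsilon r/2$. Your write-up is in fact slightly more explicit than the paper's in noting that the same witness $x$ also satisfies $\Dim^{C,B'}(x) \geq \dim_P(E) - \epsilon$.
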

\begin{proof}
Let $A$ be an optimal packing oracle for $E$, let $B$ be an oracle and let $C =(A,B)$. By the point-to-set principle (Theorem \ref{thm:p2s}),
\begin{align*}
\dim_P(E) &= \sup\limits_{x\in E} \Dim^A(x)\\
&\geq \sup\limits_{x\in E} \Dim^{C}(x)\\
&\geq \dim_P(E).
\end{align*}
Hence, the oracle $C = (A,B)$ is a packing oracle for $E$.

Let $B^\prime \subseteq \N$ be an oracle, and let $\epsilon > 0$. Let $x \in E$ be a point such that
\begin{equation}
\Dim^{A, (B, B^\prime)}(x) \geq \dim_P(E) - \epsilon / 2,
\end{equation}
and
\begin{equation}
K_r^{A, (B, B^\prime)}(x) \geq K^A_r(x) - \epsilon r / 2,
\end{equation}
for almost every precision $r$. Note that such a point exists since $A$ is packing optimal for $E$.

For all sufficiently large $r$,
\begin{align*}
K^{(A, B), B^\prime}_r(x) &= K^{A, (B, B^\prime)}_r(x)\\
&\geq K^{A}_r(x) - \epsilon r/2\\
&\geq K^{A, B}_r(x) - \epsilon r/2\\
&= K^{C}_r(x) - \epsilon r/2.
\end{align*}
Therefore, $C = (A,B)$ is packing optimal for $E$.
\end{proof}

We now give some basic closure properties of the class of sets with optimal packing oracles. 
\begin{obs}\label{obs:optimalPackingSubset}
Let $F \subseteq E$. If $\dim_P(F) = \dim_P(E)$ and $F$ has an optimal packing oracle, then $E$ has an optimal packing oracle.
\end{obs}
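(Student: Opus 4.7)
The plan is to mirror the argument for the Hausdorff case (Observation \ref{obs:optimalSubset}), exploiting the packing analogue of Lemma \ref{lem:joinOptimalisOptimal}, namely Lemma \ref{lem:joinPackingOptimalisOptimal}. Let $A_0$ be an optimal packing oracle for $F$, and let $B_0$ be any packing oracle for $E$, whose existence is guaranteed by the point-to-set principle. The candidate optimal packing oracle for $E$ is the join $A = (A_0, B_0)$.

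First I would verify condition (1) of the definition, i.e. that $A$ is a packing oracle for $E$. Since additional information cannot increase Kolmogorov complexity, for every $x \in E$ we have $\Dim^A(x) \leq \Dim^{B_0}(x)$, so $\sup_{x\in E}\Dim^A(x) \leq \dim_P(E)$; the reverse inequality follows from Theorem \ref{thm:p2s}. Hence $A$ is a packing oracle for $E$.

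Next I would verify condition (2). By Lemma \ref{lem:joinPackingOptimalisOptimal}, since $A_0$ is packing optimal for $F$, the join $A = (A_0, B_0)$ is also packing optimal for $F$. Given any oracle $B \subseteq \N$ and any $\epsilon > 0$, applying the optimality of $A$ for $F$ produces a point $x \in F$ such that $\Dim^{A,B}(x) \geq \dim_P(F) - \epsilon$ and $K^{A,B}_r(x) \geq K^A_r(x) - \epsilon r$ for almost every $r$. The hypothesis $\dim_P(F) = \dim_P(E)$ then converts the dimension bound into $\Dim^{A,B}(x) \geq \dim_P(E) - \epsilon$, and since $F \subseteq E$ this witness lies in $E$ as required.

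No step here presents a real obstacle; the content of the argument is entirely captured by Lemma \ref{lem:joinPackingOptimalisOptimal} together with the fact that taking a subset of the same packing dimension lets one transport the witnessing point directly into the ambient set. This is why the statement is recorded as an observation rather than a lemma.
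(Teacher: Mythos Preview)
Your argument is correct and is exactly the natural justification one would supply; the paper itself records this statement as an observation without proof (as it does for the Hausdorff analogue, Observation~\ref{obs:optimalSubset}), so there is nothing to compare against. The only substantive point---that one must join the optimal packing oracle for $F$ with a packing oracle for $E$ to ensure condition~(1) holds on all of $E$, not merely on $F$---is handled correctly in your proposal.
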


We can also show that having optimal packing oracles is closed under countable unions.
\begin{lem}\label{lem:optimalPackingOraclesClosedUnderUnion}
Let $E_1,E_2,\ldots$ be a countable sequence of sets and let $E = \cup_n E_n$. If every set $E_n$ has an optimal packing oracle, then $E$ has an optimal Hausdorff oracle.
\end{lem}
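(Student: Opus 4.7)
The plan is to mirror the proof of Proposition \ref{prop:optimalOraclesClosedUnderUnion} essentially verbatim, with the Hausdorff versions of all objects replaced by their packing analogues. The one external fact I will use is the countable stability of packing dimension: for any countable cover $E=\bigcup_n E_n$, we have $\dim_P(E)=\sup_n \dim_P(E_n)$. This is standard, and it is precisely the analogue of the observation about $\dim_H$ used in the Hausdorff proof. Note that the conclusion of the lemma as stated says ``optimal Hausdorff oracle'' but this is a typo; the proof will produce an optimal \emph{packing} oracle, which is what the hypothesis and Lemma \ref{lem:joinPackingOptimalisOptimal} are set up to deliver.

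For each $n$, I will fix an optimal packing oracle $A_n$ for $E_n$, and set $A$ to be the join of the countable sequence $A_1,A_2,\dots$. Given an arbitrary packing oracle $B$ for $E$ (which exists by the point-to-set principle), the candidate optimal packing oracle for $E$ will be $(A,B)$. Lemma \ref{lem:joinPackingOptimalisOptimal} then tells us immediately that $(A,B)$ is packing optimal for each $E_n$ individually, since $(A,B)$ is a join extending $A_n$.

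To verify that $(A,B)$ satisfies condition (1) for $E$, the packing-dimension version of the point-to-set principle gives $\dim_P(E)\le \sup_{x\in E}\Dim^{A,B}(x)\le \sup_{x\in E}\Dim^{B}(x)=\dim_P(E)$, where the last equality uses that $B$ is a packing oracle for $E$. For condition (2), let $C\subseteq\N$ and $\epsilon>0$ be given. By countable stability of packing dimension I choose $n$ with $\dim_P(E_n)>\dim_P(E)-\epsilon/2$. Because $(A,B)$ is packing optimal for $E_n$, there exists $x\in E_n\subseteq E$ with $\Dim^{(A,B),C}(x)\ge \dim_P(E_n)-\epsilon/2\ge \dim_P(E)-\epsilon$ and $K_r^{(A,B),C}(x)\ge K_r^{A,B}(x)-\epsilon r$ for almost every $r$. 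As $B$, $C$, and $\epsilon$ were arbitrary, this shows $(A,B)$ is packing optimal for $E$.

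There is essentially no obstacle here: the argument is a direct transcription of the Hausdorff proof, and the only nontrivial ingredient, the $\sigma$-stability of $\dim_P$, is a classical fact and exactly parallels the corresponding property of $\dim_H$ used before. The only small care needed is in the $\epsilon$-bookkeeping when combining the loss from approximating $\dim_P(E)$ by $\dim_P(E_n)$ with the optimality slack for $E_n$, which is handled by halving $\epsilon$ as above.
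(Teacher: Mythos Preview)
Your proposal is correct and follows essentially the same route as the paper's proof: join the optimal packing oracles $A_n$, join with a packing oracle $B$ for $E$, invoke Lemma~\ref{lem:joinPackingOptimalisOptimal} to get optimality for each $E_n$, then use countable stability of $\dim_P$ to find a suitable $E_n$ and witness point. Your $\epsilon/2$ bookkeeping is in fact slightly more careful than the paper's own version, which writes $\dim_P(E_n)-\epsilon \ge \dim_P(E)-\epsilon$ without the halving.
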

\begin{proof}
We first note that 
\begin{center}
$\dim_P(E) = \sup_n \dim_P(E_n)$.
\end{center}
For each $n$, let $A_n$ be an optimal packing oracle for $E_n$. Let $A$ be the join of $A_1, A_2,\ldots$. Let $B$ be an oracle guaranteed by Theorem \ref{thm:p2s} such that
\begin{center}
$\sup_x \Dim^B(x) = \sup_n \dim_P(E_n)$.
\end{center} 
Note that, by Lemma \ref{lem:joinOptimalisOptimal}, for every $n$, $(A,B)$ is packing optimal for $E_n$.

We now claim that $(A, B)$ is an optimal packing oracle for $E$. Theorem \ref{thm:p2s} shows that item (1) of the definition of optimal packing oracles is satisfied. For item (2), let $C \subseteq \N$ be an oracle, and let $\epsilon > 0$. Let $n$ be a number such that $\dim_P(E_n) > \dim_P(E) - \epsilon$. Since $(A,B)$ is packing optimal for $E_N$, there is a point $x \in E_n$ such that
\begin{enumerate}
\item[(i)] $\dim^{(A, B), C}(x) \geq \dim_P(E_n) - \epsilon \geq \dim_P(E) - \epsilon$, and
\item[(ii)] for almost every $r$,
\begin{center}
$K^{(A, B), C}_r(x) \geq K^{(A,B)}_r(x) - \epsilon r$.
\end{center}
\end{enumerate}
Therefore, item (2) of the definition of optimal packing oracles is satisfied, and so $(A,B)$ is Hausdorff optimal for $E$.
\end{proof}

For every $0 \leq \alpha < \beta \leq 1$ define the set 
\begin{center}
$D_{\alpha, \beta} = \{ x \in (0,1) \, | \, \dim(x) = \alpha \text{ and } \Dim(x) = \beta\}$.
\end{center}
\begin{lem}\label{lem:OscillatingHasOptimal}
For every $0\leq  \alpha < \beta \leq 1$, $D_{\alpha, \beta}$ has optimal Hausdorff and optimal packing oracles and 
\begin{align*}
&\dim_H(D_{\alpha, \beta}) = \alpha\\
&\dim_P(D_{\alpha, \beta}) = \beta.
\end{align*}
\end{lem}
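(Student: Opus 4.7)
The plan is to show that the empty oracle $\emptyset$ is simultaneously Hausdorff optimal and packing optimal for $D_{\alpha,\beta}$, which yields the dimension identities as a byproduct. Since every $x \in D_{\alpha,\beta}$ satisfies $\dim(x) = \alpha$ and $\Dim(x) = \beta$, the point-to-set principle applied to the oracle $\emptyset$ immediately gives the upper bounds $\dim_H(D_{\alpha,\beta}) \leq \alpha$ and $\dim_P(D_{\alpha,\beta}) \leq \beta$, and shows that condition (1) in the definition of optimal Hausdorff (resp.\ packing) oracle is satisfied by $\emptyset$ provided we also establish the matching lower bounds.

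The core technical ingredient is the following oscillation lemma: for every oracle $C \subseteq \N$ there is a point $x \in (0,1)$ with $\dim(x) = \dim^C(x) = \alpha$, $\Dim(x) = \Dim^C(x) = \beta$, and $K^C_r(x) \geq K_r(x) - O(\log r)$ for every $r$. To construct such an $x$, I would fix a sequence $A \in \{0,1\}^\N$ that is Martin-L\"of random relative to $C$, and choose fast-growing natural number sequences $\{a_n\}$ and $\{b_n\}$ with $a_n/b_n \to (1-\beta)/(1-\alpha)$ and $\beta b_n / a_{n+1} \to \alpha$ (with obvious modifications at the endpoints $\alpha = 0$ or $\beta = 1$). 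Define $x$ by copying the bits of $A$ at positions in $(a_n, b_n]$ and padding with $0$'s at positions in $(b_n, a_{n+1}]$. A short calculation shows that the count $c(r)$ of copied bits in $x\upharpoonright r$ satisfies $c(a_n)/a_n \to \alpha$ and $c(b_n)/b_n \to \beta$, with $c(r)/r$ moving monotonically between these two values on consecutive breakpoints. Since $A$ is $C$-random, it is also unconditionally random (because $K \geq K^C$), so both $K^C_r(x)$ and $K_r(x)$ equal $c(r)$ up to an $O(\log r)$ error; this pins down all four effective dimensions simultaneously and yields $K^C_r(x) \geq K_r(x) - O(\log r)$.

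With the oscillation lemma in hand, everything else assembles quickly. For the lower bound $\dim_H(D_{\alpha,\beta}) \geq \alpha$: given any oracle $C$, apply the lemma to obtain $x_C \in D_{\alpha,\beta}$ with $\dim^C(x_C) = \alpha$, so by the point-to-set principle $\dim_H(D_{\alpha,\beta}) \geq \alpha$. The analogous argument using $\Dim^C(x_C) = \beta$ gives $\dim_P(D_{\alpha,\beta}) \geq \beta$. Combined with the upper bounds, $\dim_H(D_{\alpha,\beta}) = \alpha$ and $\dim_P(D_{\alpha,\beta}) = \beta$, and hence $\emptyset$ is both a Hausdorff and a packing oracle for $D_{\alpha,\beta}$. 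For Hausdorff optimality, given any $B$ and $\epsilon > 0$, apply the lemma with $C = B$ to obtain $x_B \in D_{\alpha,\beta}$ with $\dim^B(x_B) = \alpha \geq \dim_H(D_{\alpha,\beta}) - \epsilon$ and $K^B_r(x_B) \geq K_r(x_B) - O(\log r) \geq K_r(x_B) - \epsilon r$ for almost every $r$, verifying condition (2). The same witness $x_B$ verifies packing optimality after noting that $\Dim^B(x_B) = \beta = \dim_P(D_{\alpha,\beta})$.

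The main obstacle is the oscillation lemma itself, and specifically the stronger estimate $K^C_r(x) \geq K_r(x) - O(\log r)$ rather than the weaker $K^C_r(x) \geq (\alpha - \epsilon) r$. Producing a point with prescribed $\dim^C$ and $\Dim^C$ is essentially automatic, but to ensure that the \emph{unrelativized} complexity matches the relativized complexity up to a negligible error, the random bits embedded in $x$ must be drawn from a seed that is random \emph{relative to} $C$; the randomness then transfers unconditionally because $K \geq K^C$, which is what aligns the two complexity profiles. The remaining delicate point is to align the precision sequences $\{a_n\}, \{b_n\}$ so that the $\liminf$ and $\limsup$ of $c(r)/r$ are attained exactly at $\alpha$ and $\beta$, a careful but routine calculation in the spirit of the constructions in Section~\ref{ssec:counterexamOptimal}.
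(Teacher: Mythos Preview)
Your approach is correct and genuinely different from the paper's. The paper first observes that $D_{\alpha,\beta}$ is Borel and then invokes the general machinery (analytic sets have optimal Hausdorff and packing oracles, via Lemmas~\ref{lem:analyticHasOptimal} and~\ref{lem:analyticHasOptimalPacking}) to obtain optimal oracles without further work; it then constructs, for each oracle $A$, a point $x_A\in D_{\alpha,\beta}$ with $\dim^A(x_A)=\alpha$ and $\Dim^A(x_A)=\beta$ solely to pin down the two dimension equalities. You instead bypass the Borel/outer-measure machinery entirely and prove optimality by hand, showing directly that $\emptyset$ is an optimal Hausdorff and packing oracle via your oscillation lemma. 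The point constructions are cousins---the paper splices a random sequence $y_A$ with a ``density-$\alpha$'' sequence $z_A$, while you splice a $C$-random sequence with zeros and tune the interval ratios---but the paper never exploits the stronger conclusion $K^C_r(x)\geq K_r(x)-O(\log r)$ that you need. Your route is more self-contained and yields the sharper statement that the empty oracle is already optimal; the paper's route is shorter because it cashes in results already proved, at the cost of relying on the outer-measure apparatus of Section~\ref{ssec:optoraclesouter}.
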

\begin{proof}
We begin by noting that $D_{\alpha, \beta}$ is Borel. Therefore, by Theorems \ref{thm:mainTheorem} and \ref{thm:mainTheoremPacking}, $D_{\alpha, \beta}$ has optimal Hausdorff and optimal packing oracles. Thus, it suffices to show prove the dimension equalities.

Define the increasing sequence of natural numbers $\{h_j\}$ inductively as follows. Let $h_1 = 2$, and let $h_{j+1} = 2^{h_j}$. For every oracle $A$ let $z_A$ be a point such that, for every $\delta > 0$ and all sufficiently large $r$,
\begin{center}
$K^{A}_{(1 + \delta) r, r}(z_A \, | \, z_A) = \alpha \delta r = K_{(1 + \delta) r, r}(z_A \, | \, z_A)$.
\end{center}
Let $y_A$ be random relative to $A$ and $z_A$.

Let $x_A$ be the point whose binary expansion is given by
\begin{align*}
x_A[r] = \begin{cases}
z_A[r] &\text{ if } h_j < r \leq \frac{1-\beta}{1-\alpha}h_{j+1} \text{ for some } j \in \N \\
y_A[r] &\text{ otherwise}
\end{cases}
\end{align*}

Let $A$ be an oracle, and consider the point $x_A$. Let $r \in \N$ be sufficiently large and let $j\in\N$ such that $h_j < r \leq h_{j+1}$. We first suppose that $r \leq \frac{1-\beta}{1-\alpha}h_{j+1}$. Then
\begin{align*}
K_{r}(x_A) &\geq K^A_{r}(x_A) &= K^A_{h_{j}}(x_A) + K^A_{r,h_{j}}(x_A\,|\, x_A)\\
&= O(\log r)+ K^A_{r,h_{j-1}}(z_A\,|\, z_A)\\
&= \alpha r + O(\log r)\\
&\geq K_{r}(x_A).
\end{align*}

Now suppose that $r > \frac{1-\beta}{1-\alpha}h_{j+1}$. Let $t = \frac{1-\beta}{1-\alpha}h_{j+1}$. Then
\begin{align*}
K_r(x_A) &\geq K^A_{r}(x_A)\\
&= K^A_{t}(x_A) + K^A_{r,t}(x_A\,|\, x_A)+ O(\log r)\\
&= \alpha t + K^A_{r,t}(x_A\,|\, x_A)+ O(\log r)\\
&= \alpha t + r - t + O(\log r)\\
&= r - t(1 - \alpha) + O(\log r)\\
&= r - (1-\beta) h_{j+1} +O(\log r)\\
&\geq K_r(x_A).
\end{align*}
In particular, $K^A_{r}(x_A)\geq \alpha r$ for every $h_j < r \leq h_{j+1}$. Hence for every oracle $A$,
\begin{center}
$\dim^A(x_A) = \alpha = \dim(x_A)$.
\end{center}

For all sufficiently large $j$, 
\begin{align*}
K_{h_j}(x_A) &= K^A_{r}(x_A)\\\
&= h_j - (1-\beta) h_{j} +O(\log r)\\
&= \beta h_{j} +O(\log r),
\end{align*}
and so 
\begin{center}
$\Dim^A(x_A) = \beta = \Dim(x_A)$.
\end{center}
Therefore, for every $A$, $x_A \in D_{\alpha,\beta}$. 

Finally, by the above bounds,
\begin{align*}
&\dim_H(D_{\alpha, \beta}) = \alpha\\
&\dim_P(D_{\alpha, \beta}) = \beta.
\end{align*}
\end{proof}

\subsection{Sufficient conditions for optimal packing oracles}
\begin{lem}
Let $E\subseteq\R^n$ be a set such that $\dim_H(E) = \dim_P(E) = s$. Then $E$ has optimal Hausdorff and optimal packing oracles.
\end{lem}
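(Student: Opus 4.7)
The Hausdorff half of the conclusion is already supplied by Lemma \ref{lem:regularHasOptimalHaus}, so the real content is to show that $E$ has an optimal \emph{packing} oracle. The plan is to show that essentially the same oracle used for the Hausdorff case works: let $A$ be a Hausdorff oracle and $B$ a packing oracle for $E$, and set $C = (A,B)$. I expect this oracle to simultaneously be an optimal packing oracle for $E$.

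That $C$ satisfies condition (1) of packing optimality is immediate from Theorem \ref{thm:p2s} together with the fact that additional information can only decrease effective packing dimension: $\sup_{x\in E}\Dim^{C}(x) \leq \sup_{x\in E}\Dim^{B}(x) = \dim_P(E)$, and the reverse inequality comes from the point-to-set principle. Moreover, because $C$ also contains $A$, we have $\sup_{x\in E}\dim^{C}(x) = \dim_H(E) = s$, so $C$ is simultaneously a Hausdorff and a packing oracle.

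For condition (2), fix an arbitrary oracle $C'$ and $\epsilon > 0$. The hypothesis $\dim_H(E)=s$ and the point-to-set principle, applied to the oracle $(C,C')$, yield a point $x \in E$ with $\dim^{C,C'}(x) \geq s - \epsilon/4$. The crucial observation is that $\dim^{C,C'}(x) \leq \Dim^{C,C'}(x)$, so this single $x$ automatically satisfies $\Dim^{C,C'}(x) \geq s - \epsilon/4 \geq \dim_P(E) - \epsilon$. The complexity lower bound is then obtained by a two-sided squeeze: $\dim^{C,C'}(x) \geq s - \epsilon/4$ is a \emph{liminf} statement and gives $K^{C,C'}_r(x) \geq (s - \epsilon/2)r$ for all sufficiently large $r$, whereas $\Dim^C(x) \leq s$ (from $C$ being a packing oracle) is a \emph{limsup} statement and gives $K^C_r(x) \leq (s + \epsilon/2)r$ for all sufficiently large $r$. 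Subtracting yields $K^{C,C'}_r(x) \geq K^C_r(x) - \epsilon r$ for almost every $r$.

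The main obstacle, and the reason the hypothesis $\dim_H(E) = \dim_P(E)$ is essential, is that condition (2) of packing optimality demands the complexity bound hold for \emph{almost every} $r$, not merely infinitely often. Normally a large packing dimension only supplies information at infinitely many scales through its limsup, which is too weak. The trick is that the regularity assumption forces $\dim^{C,C'}(x)$ to also be close to $s$, and the liminf character of $\dim$ is precisely what gives a lower bound on $K^{C,C'}_r(x)$ at \emph{every} sufficiently large $r$. Without the equality $\dim_H(E) = \dim_P(E)$, one could not simultaneously pick an $x$ whose $\dim$ and $\Dim$ are both large relative to $(C,C')$, and the argument would fail.
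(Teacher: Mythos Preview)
Your proof is correct and follows essentially the same route as the paper. The only cosmetic difference is that the paper first notes that $C=(A,B)$ is an \emph{optimal Hausdorff} oracle (by Lemma~\ref{lem:regularHasOptimalHaus} and Lemma~\ref{lem:joinOptimalisOptimal}), and then invokes that property abstractly to obtain a point $x$ with $\dim^{C,C'}(x)\geq s-\epsilon$ and $K^{C,C'}_r(x)\geq K^{C}_r(x)-\epsilon r$ for almost every $r$, after which $\Dim^{C,C'}(x)\geq\dim^{C,C'}(x)$ finishes the packing condition; you instead rederive the complexity inequality by the same squeeze argument used inside Lemma~\ref{lem:regularHasOptimalHaus}.
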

\begin{proof}
Lemma \ref{lem:regularHasOptimalHaus} shows that $E$ has optimal Hausdorff oracles. Let $A_1$ be an optimal Hausdorff oracle for $E$. Let $A_2$ be a packing oracle for $E$. Let $A = (A_1, A_2)$. By Lemma \ref{lem:joinOptimalisOptimal}, $A$ is an optimal Hausdorff oracle for $E$. We now show that $A$ is an optimal packing oracle for $E$.

It is clear that $A$ is a packing oracle for $E$. Let $B\subseteq\N$ and $\epsilon > 0$. Since $A$ is Hausdorff optimal for $E$, there is a point $x \in E$ such that $\dim^{A,B}(x) \geq s - \epsilon$ and 
\begin{center}
$K^{A,B}_r(x) \geq K^A_r(x) - \epsilon r$
\end{center}
for almost every $r$. Therefore 
\begin{align*}
\Dim^{A,B}(x) &\geq \dim^{A,B}(x)\\
&\geq s - \epsilon\\
&=\dim_P(E) - \epsilon.
\end{align*}
Therefore $x$ satisfies the second condition of optimal packing oracles, and the conclusion follows.
\end{proof}

\begin{thm}\label{thm:mainTheoremPacking}
Let $E\subseteq\R^n$ with $\dim_P(E) = s$. Suppose there is a metric outer measure $\mu$ such that
\begin{center}
$0<\mu(E)<\infty$,
\end{center}  
and either 
\begin{enumerate}
\item $\mu \ll \Ps$, or
\item $\Ps \ll \mu$ and $\Ps(E) > 0$.
\end{enumerate}
Then $E$ has an optimal packing oracle $A$.
\end{thm}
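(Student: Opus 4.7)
The plan is to mirror the proof of Theorem \ref{thm:mainTheorem}, replacing the Hausdorff ingredients with their packing counterparts. The key observation is that Lemma \ref{lem:mainTheoremEngine} is a purely measure-theoretic/algorithmic statement about outer measures and Kolmogorov complexity; it makes no reference to Hausdorff versus packing dimension. Hence, the ``engine lemma'' can be reused verbatim in the packing setting. Only the final contradiction step needs to be adapted.

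I would begin by fixing a \emph{packing} oracle $A$ for $E$ (given by the point-to-set principle) relative to which the discrete semimeasure $p_{\mu,E}$ from Section \ref{ssec:optoraclesouter} is lower semicomputable; such an $A$ exists by routine encoding. The first condition in the definition of optimal packing oracle is then satisfied. For the second, suppose toward a contradiction that $A$ is not packing optimal. Then there exist an oracle $B$ and an $\epsilon>0$ such that every $x\in E$ fails at least one of the two defining clauses, i.e., either
\begin{enumerate}
\item $\Dim^{A,B}(x) < s - \epsilon$, or
\item $K^{A,B}_r(x) < K^A_r(x) - \epsilon r$ for infinitely many $r$.
\end{enumerate}

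Let $N$ be the set of points satisfying clause (2). Lemma \ref{lem:mainTheoremEngine} applies unchanged and gives $\mu(N)=0$. On the complement $E\setminus N$, every point falls in clause (1), so the packing point-to-set principle yields $\dim_P(E\setminus N) \leq s-\epsilon$, and hence $\Ps(E\setminus N)=0$ by the definition of packing dimension. Now the two hypothesized cases of the theorem close the argument exactly as in Theorem \ref{thm:mainTheorem}: if $\mu\ll\Ps$ then $\mu(E\setminus N)=0$, and metric subadditivity gives $0<\mu(E)\leq\mu(N)+\mu(E\setminus N)=0$, a contradiction; if instead $\Ps\ll\mu$ with $\Ps(E)>0$, then $\Ps(E\setminus N)=0$ forces $\Ps(N)>0$, which via $\Ps\ll\mu$ gives $\mu(N)>0$, again contradicting Lemma \ref{lem:mainTheoremEngine}.

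There is no real obstacle: once one recognizes that Lemma \ref{lem:mainTheoremEngine} is dimension-agnostic (the set $N$ is defined purely in terms of Kolmogorov complexity inequalities), the only bookkeeping required is to replace $\dim$ by $\Dim$, $\dim_H$ by $\dim_P$, and $\Hs$ by $\Ps$, using the fact that $\dim_P(X)<s$ implies $\Ps(X)=0$. The proof is thus essentially a transcription of the Hausdorff argument with these substitutions.
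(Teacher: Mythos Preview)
Your proposal is correct and follows essentially the same argument as the paper's own proof: choose a packing oracle $A$ making $p_{\mu,E}$ lower semicomputable, apply Lemma \ref{lem:mainTheoremEngine} verbatim to get $\mu(N)=0$, use the packing point-to-set principle to obtain $\Ps(E\setminus N)=0$, and derive the contradiction in each of the two cases exactly as you describe. The paper's proof is a direct transcription of the Hausdorff case with the same substitutions you identified.
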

\begin{proof}
Let $A\subseteq\N$ be a packing oracle for $E$ such that $p_{\mu,E}$ is computable relative to $A$. Note that such an oracle exists by the point-to-set principle and routine encoding. We will show that $A$ is packing optimal for $E$. 

For the sake of contradiction, suppose that there is an oracle $B$ and $\epsilon > 0$ such that, for every $x \in E$ either
\begin{enumerate}
\item $\Dim^{A,B}(x) < s - \epsilon$, or
\item there are infinitely many $r$ such that $K^{A,B}_r(x) < K^A_r(x) - \epsilon r$.
\end{enumerate}

Let $N$ be the set of all $x$ for which the second item holds. By Lemma \ref{lem:mainTheoremEngine}, $\mu(N) = 0$. We also note that, by the point-to-set principle, 
\begin{center}
$\Dim_H(E - N) \leq s - \epsilon$,
\end{center}
and so $\Ps(E - N) = 0$.

To achieve the desired contradiction, we first assume that $\mu \ll \Ps$. In this case, it suffices to show that $\mu(E-N) > 0$.  Since $\mu \ll \Ps$,
\begin{center}
$\mu(E - N) = 0$.
\end{center}
Since $\mu$ is a metric outer measure, 
\begin{align*}
0 &< \mu(E)\\
&\leq \mu(N) + \mu(E-N)\\
&= 0,
\end{align*}
a contradiction. 

Now suppose that $\Ps \ll \mu$ and $\Ps(E) > 0$. Then, since $\Ps$ is an outer measure, $\Ps(E) > 0$ and $\Ps(E - N) = 0$ we must have $\Ps(N) >0$. However this implies that  $\mu(N) > 0$, and we again have the desired contradiction. Thus $A$ is an optimal packing oracle for $E$ and the proof is complete.
\end{proof}

We now show that every analytic set has optimal packing oracles. 
\begin{lem}\label{lem:analyticHasOptimalPacking}
Every analytic set $E$ has optimal packing oracles.
\end{lem}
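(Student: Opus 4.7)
The plan is to follow exactly the template used in the proof of Lemma~\ref{lem:analyticHasOptimal}, substituting the packing-theoretic tools developed earlier in this section for their Hausdorff counterparts. I begin with the compact case. Suppose $E$ is compact with $s = \dim_P(E)$. For every $t < s$ we have $\mathcal{P}^t(E) = \infty > 0$, so Theorem~\ref{thm:compactSset}(3) supplies, for each $n \in \N$, a compact subset $F_n \subseteq E$ with $0 < \mathcal{P}^{s_n}(F_n) < \infty$, where $s_n = s - 1/n$. Since $\mathcal{P}^{s_n}$ is a metric outer measure and is trivially absolutely continuous with respect to itself, Theorem~\ref{thm:mainTheoremPacking} (applied with $\mu = \mathcal{P}^{s_n}$ and condition~(1)) produces an optimal packing oracle for each $F_n$.

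Next I would observe that $\dim_P(\bigcup_n F_n) \geq \sup_n s_n = s = \dim_P(E)$, so $\bigcup_n F_n$ has packing dimension exactly $s$. By Lemma~\ref{lem:optimalPackingOraclesClosedUnderUnion} (which, despite its statement saying ``Hausdorff,'' is proved for packing as its proof makes clear), the countable union $\bigcup_n F_n$ has an optimal packing oracle, and then Observation~\ref{obs:optimalPackingSubset} transfers that oracle to $E$. This finishes the compact case.

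For a $\Sigma^0_2$ set $E$, write $E = \bigcup_n F_n$ as a countable union of compact sets; each $F_n$ has an optimal packing oracle by the previous paragraph, so a further application of Lemma~\ref{lem:optimalPackingOraclesClosedUnderUnion} yields one for $E$. Finally, for a general analytic set $E$, Theorem~\ref{thm:compactSset}(4) furnishes a $\Sigma^0_2$ subset $F \subseteq E$ with $\dim_P(F) = \dim_P(E)$; the $\Sigma^0_2$ case equips $F$ with an optimal packing oracle, and Observation~\ref{obs:optimalPackingSubset} pushes it up to $E$.

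I do not expect any substantive obstacle here: every tool needed is a direct packing analogue of what was used in the Hausdorff proof, and those analogues have already been established in this section, namely Theorem~\ref{thm:mainTheoremPacking}, Observation~\ref{obs:optimalPackingSubset}, Lemma~\ref{lem:optimalPackingOraclesClosedUnderUnion}, together with the packing clauses of Theorem~\ref{thm:compactSset}. The one step that required genuine work (the passage from an outer measure with $0 < \mu(E) < \infty$ to the existence of an optimal packing oracle) has already been carried out in Theorem~\ref{thm:mainTheoremPacking}, so the remaining argument is purely structural.
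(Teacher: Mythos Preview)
Your proposal is correct and follows essentially the same route as the paper's own proof: reduce to packing $s$-sets via Theorem~\ref{thm:compactSset}(3), invoke Theorem~\ref{thm:mainTheoremPacking}, close under countable unions with Lemma~\ref{lem:optimalPackingOraclesClosedUnderUnion}, and then pass from $\Sigma^0_2$ subsets to general analytic sets via Theorem~\ref{thm:compactSset}(4) and Observation~\ref{obs:optimalPackingSubset}. If anything, you are slightly more explicit than the paper in the compact case (making the appeal to Observation~\ref{obs:optimalPackingSubset} for $\bigcup_n F_n \subseteq E$ overt) and in noting that the stated conclusion of Lemma~\ref{lem:optimalPackingOraclesClosedUnderUnion} contains a typo.
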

\begin{proof}
A set $E \subseteq \R^n$ is called an packing $s$-set if
\begin{center}
$0 < \Ps(E) <\infty$.
\end{center}
Since $\Ps$ is a metric outer measure, and trivially absolutely continuous with respect to itself, Theorem \ref{thm:mainTheoremPacking} shows that if $E$ is a packing $s$-set then there is an optimal packing oracle for $E$.

Now assume that $E$ is compact, and let $s = \dim_H(E)$. Then for every $t < s$, $\mathcal{H}^t(E) > 0$. Thus, by Theorem \ref{thm:compactSset}, there is a sequence of compact subsets $F_1,F_2,\ldots$ of $E$ such that 
\begin{center}
$\dim_P(\bigcup_n F_n) = \dim_P(E)$,
\end{center}
and, for each $n$,
\begin{center}
$0 < \mathcal{P}^{s_n}(F_n) < \infty$,
\end{center}
where $s_n = s - 1/n$. Therefore, each set $F_n$ has optimal packing oracles. Hence, by Lemma \ref{lem:optimalPackingOraclesClosedUnderUnion}, $E$ has optimal packing oracles and the conclusion follows.

We now show that every $\Sigma^0_2$ set has optimal packing oracles. Suppose $E = \cup_n F_n$ is $\Sigma^0_1$, where each $F_n$ is compact. As we have just seen, each $F_n$ has optimal packing oracles. Therefore, by Lemma \ref{lem:optimalPackingOraclesClosedUnderUnion}, $E$ has optimal packing oracles and the conclusion follows. 

Finally, let $E$ be analytic. By Theorem \ref{thm:compactSset}, there is a $\Sigma^0_2$ subset $F$ of the same packing dimension as $E$. We have just seen that $F$ must have an optimal packing oracle. Since $\dim_P(F) = \dim_P(E)$, by Observation \ref{obs:optimalPackingSubset} $E$ has optimal packing oracles, and the proof is complete
\end{proof}

\subsection{Sets without optimal oracles}

\begin{thm}\label{thm:noHausnoPackingOptimal}
Assuming CH and AC, for every $0 < s_1 < s_2 \leq 1$ there is a set $E \subseteq \R$ which does not have Hausdorff optimal nor packing optimal oracles such that
\begin{center}
$\dim_H(E) = s_1$ and $\dim_P(E) = s_2$.
\end{center}
\end{thm}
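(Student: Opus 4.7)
The plan is to fuse the oscillating-complexity idea behind $D_{\alpha,\beta}$ from Lemma~\ref{lem:OscillatingHasOptimal} with the diagonalization of Lemma~\ref{lem:setRwithoutOptimalOracles}. Choose sequences with $a_n/b_n \to s_1/s_2$ and $b_{n-1}^2 \leq a_n$ (for instance $b_n = \lceil a_n s_2/s_1\rceil$ and $a_{n+1} = b_n^2$), and replace the Martin-L\"of random $x_\alpha$ of Lemma~\ref{lem:setRwithoutOptimalOracles} by a \emph{sparse $s_2$-random} real: fix a computable set $S \subseteq \N$ of asymptotic density $s_2$, let $w_\alpha$ be random relative to $A_\alpha$ and $\bigcup_{\beta<\alpha}(A_\beta, x_\beta, y_\beta)$, and define $x_\alpha[i] = w_\alpha[j]$ when $i$ is the $j$-th element of $S$, and $x_\alpha[i] = 0$ otherwise. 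This guarantees the symmetric local complexity bound
\[
K^B(x_\alpha[a..b]) = s_2(b-a) + O(\log b),
\]
both unconditionally ($B = \emptyset$) and relative to the natural ordinal-$\alpha$-level oracle $B$. Define $y_\alpha$ and $E = \{y_\alpha\}$ exactly as in Section~\ref{ssec:counterexamOptimal}.

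For $\dim_H(E) = s_1$, mimic the proof of Lemma~\ref{lem:setRwithoutOptimalOracles}, but replace the trivial bound $K^A_{a_n}(y_\alpha) \leq a_n$ by the chunk-by-chunk bound $K_{a_n}(y_\alpha) \leq s_2 a_n + o(a_n)$: with the oracle $A$ encoding $x_1$, at the infinitely many $n$ with $f_\alpha(n) = 1$ we obtain $K^A_{b_n}(y_\alpha) \leq s_2 a_n + o(b_n) = s_1 b_n + o(b_n)$, whence $\dim^A(y_\alpha) \leq s_1$. The lower bound runs the Lemma~\ref{lem:setRwithoutOptimalOracles} calculation verbatim, now with $s_2$ in place of $1$: the ``overwrite budget'' $\sum_{m<n}(b_m - a_m) \leq b_{n-1} = o(a_n)$ is absorbed, giving $K^{A_\alpha}_{a_n}(y_\alpha) \geq s_2 a_n - o(a_n)$, after which the case split $a_n < r \leq b_n$ vs.\ $b_n < r \leq a_{n+1}$ forces $\dim^{A_\alpha}(y_\alpha) \geq s_1$. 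For $\dim_P(E) = s_2$, the chunk-by-chunk description uses at most $s_2 r + O(n \log r) = s_2 r + o(r)$ bits \emph{unconditionally}, so $\Dim(y_\alpha) \leq s_2$ and $\dim_P(E) \leq s_2$ with the empty oracle; the matching $K_{a_n}(y_\alpha) \geq s_2 a_n - o(a_n)$ follows from the overwrite bound applied unconditionally, and gives $\Dim^{A_\alpha}(y_\alpha) \geq s_2$ for every $\alpha$.

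To show that no oracle is either Hausdorff or packing optimal, fix any oracle $A$, let $\alpha$ be its ordinal, and let $B$ encode $x_\alpha$ together with $\{y_\beta : \beta \leq \alpha\}$ (a countable set under CH). For $\beta \leq \alpha$ the point $y_\beta$ is $(A,B)$-computable, so $\Dim^{A,B}(y_\beta) = 0$ and the dimension clause of both optimality definitions fails. For $\beta > \alpha$, infinitely many $n$ satisfy $f_\beta(n) = \alpha$; at these precisions, $y_\beta[a_n..b_n] = x_\alpha[a_n..b_n]$ becomes $B$-computable, yielding $K^{A,B}_{b_n}(y_\beta) \leq s_1 b_n + o(b_n)$, whereas the $A$-randomness of $x_\alpha[a_n..b_n]$ conditioned on $y_\beta[0..a_n]$ forces $K^{A}_{b_n}(y_\beta) \geq s_2 b_n - o(b_n)$. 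Any $\epsilon < (s_2 - s_1)/2$ then violates the complexity clause at infinitely many $n$, simultaneously for the Hausdorff and packing definitions. The main obstacle is securing $\Dim(y_\alpha) \leq s_2$ \emph{unconditionally}, so that a single (empty) oracle provides the point-to-set upper bound over the uncountable set $E$; this is exactly what the sparse-random construction buys, since it is impossible to encode uncountably many $x_\alpha$'s into one subset of $\N$.
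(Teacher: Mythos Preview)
Your approach is correct and takes a genuinely different, cleaner route than the paper. The paper builds each point $z_\alpha$ out of four interleaved sources with local complexity rates $s_1$, $1$, $s_1+\tfrac{\delta}{2}$, and $1$ (the reals $w_\alpha$, $x_\alpha$, $y_\alpha$, and the diagonalizing $x_\beta$), chooses super-exponential scale sequences $a_{n+1}=2^{b_n}$, $b_{n+1}=2^{a_{n+1}}$, and calibrates segment endpoints so that $K^{A_\alpha}_r(z_\alpha)/r$ oscillates between $s_1$ and $s_2$; the diagonalization segment is the short $x_\beta$-block near $a_{n+1}$, giving a complexity drop of order $\tfrac{s_1\delta}{2}\,a_{n+1}$. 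You instead keep the two-segment architecture of Lemma~\ref{lem:setRwithoutOptimalOracles} verbatim and change only the \emph{source}: replacing a Martin--L\"of random by a sparse $s_2$-random forces every chunk to have local rate exactly $s_2$, so that the ratio $a_n/b_n\to s_1/s_2$ pins the liminf at $s_1$ while the limsup stays at $s_2$, and the diagonalization drop is the full $(s_2-s_1)b_n$. What your route buys is a single uniform estimate $K(x_\gamma[a..b])=s_2(b-a)+O(\log b)$ that simultaneously handles the packing upper bound (with the empty oracle, as you stress) and the Hausdorff calculations; the paper instead arranges this by hand via the mixed $w/x/y$ splicing. One small point: when you pick $\epsilon$ at the end, you also need $\epsilon<s_1$ (and $\epsilon<s_2$) so that the $\beta\le\alpha$ case really fails the dimension clause; $\epsilon<\min\bigl(s_1,\tfrac{s_2-s_1}{2}\bigr)$ suffices. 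A second minor point: if $s_2$ is not computable you cannot literally take $S$ computable, but encoding $s_2$ into every oracle (or just taking $S$ computable from $s_2$) is harmless for the argument.
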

\begin{proof}
Let $\delta = s_2 - s_1$. We begin by defining two sequences of natural numbers, $\{a_n\}$ and $\{b_n\}$. Let $a_1 = 2$, and $b_1 = 4$. Inductively define $a_{n+1} = 2^{b_n}$ and $b_{n+1} = 2^{a_{n+1}}$.

Using AC and CH, we order the subsets of the natural numbers such that every subset has countably many predecessors. For every countable ordinal $\alpha$, let $f_\alpha : \N \to \{\beta \, | \, \beta < \alpha\}$ be a function such that each ordinal $\beta$ strictly less than $\alpha$ is mapped to by infinitely many $n$. Note that such a function exists, since the range is countable assuming CH.

We will define real numbers $w_\alpha$, $x_\alpha$, $y_\alpha$ and $z_\alpha$ via transfinite induction. Let $x_1$ be a real such that, for every $\gamma > 0$ and all sufficiently large $r$,
\begin{center}
$K^{A_1}_{(1 + \gamma) r, r}(w_1 \, | \, w_1) = s_1 \gamma r = K_{(1 + \gamma) r, r}(w_1 \, | \, w_1)$.
\end{center}
Let $x_1$ be random relative to $A_1$ and $w_1$. Let $y_1$ be a real such that, for every $\gamma > 0$ and all sufficiently large $r$,
\begin{center}
$K^{A_1}_{(1 + \gamma) r, r}(y_1 \, | \, y_1) = (s_1 + \frac{\delta}{2}) \gamma r = K_{(1 + \gamma) r, r}(y_1 \, | \, y_1)$.
\end{center}
Let $z_1$ be the real whose binary expansion is given by
\begin{align*}
z_1[r] = \begin{cases}
w_1[r] &\text{ if } a_n < r \leq \frac{1-s_2}{1-s_1}b_n \text{ for some }  n \in \N \\
x_1[r] &\text{ if } \frac{1-s_2}{1-s_1}b_n < r \leq b_n  \text{ for some }  n \in \N \\
y_1[r] &\text{ if } b_n < r \leq (1-\delta)a_{n+1} < \text{ for some }  n \in \N \\
0 &\text{ if } (1-\delta)a_{n+1} < r \leq (1+\delta)a_{n+1} < \text{ for some } n \in \N \\
\end{cases}
\end{align*}

For the induction step, suppose we have defined our points up to $\alpha$. Let $A$ be the join of $ \bigcup_{\beta < \alpha} (A_\beta, w_\beta. x_\beta, y_\beta, z_\beta)$ and $A_\alpha$. Let $x_\alpha$ be a real such that, for every $\gamma > 0$ and all sufficiently large $r$,
\begin{center}
$K^{A}_{(1 + \gamma) r, r}(w_\alpha \, | \, w_\alpha) = s_1 \gamma r = K_{(1 + \gamma) r, r}(w_\alpha \, | \, w_\alpha)$.
\end{center}
Let $x_\alpha$ be random relative to $A$ and $w_\alpha$. Let $y_\alpha$ be a real such that, for every $\gamma > 0$ and all sufficiently large $r$,
\begin{center}
$K^{A,w_\alpha,x_\alpha}_{(1 + \gamma) r, r}(y_\alpha \, | \, y_\alpha) = (s_1 + \frac{\delta}{2}) \gamma r = K_{(1 + \gamma) r, r}(y_\alpha \, | \, y_\alpha)$.
\end{center}
Let $z_\alpha$ be the real whose binary expansion is given by
\begin{align*}
z_\alpha[r] = \begin{cases}
w_\alpha[r] &\text{ if } a_n < r \leq \frac{1-s_2}{1-s_1}b_n \text{ for some }  n \in \N \\
x_\alpha[r] &\text{ if } \frac{1-s_2}{1-s_1}b_n < r \leq b_n  \text{ for some }  n \in \N \\
y_\alpha[r] &\text{ if } b_n < r \leq (1-\delta)a_{n+1} < \text{ for some }  n \in \N \\
x_\beta &\text{ if } (1-s_1\delta/2)a_{n+1} < r \leq a_{n+1} < \text{ where  }  f(\beta) = n \\
\end{cases}
\end{align*}
Finally, we define our set $E = \{z_\alpha\}$. 

We begin by collecting relevant aspects of our construction. Let $\alpha$ be an ordinal, let $A = A_\alpha$ be the corresponding oracle in the order, and let $z = z_\alpha$ be the point constructed at ordinal $\alpha$. 

Let$n$ be sufficiently large. Let $a_n < r \leq \frac{1-s_2}{1-s_1}b_n$,
\begin{align}
K^A_{r}(z) &= K^A_{a_n}(z) + K^A_{r,a_n}(w_\alpha \, | \, z)\tag*{}\\
&= K^A_{a_n}(z) + (r - a_n) s_1 + O(\log r)\label{eq:conditionalWalpha}.
\end{align}

Let $t = \frac{1-s_2}{1-s_1}b_n < r \leq b_n$,
\begin{align}
K^A_{r}(z) &= K^A_t(z) + K^A_{r,t}(x_\alpha \, | \, z)\tag*{}\\
&= K^A_t(z) + (r - t) + O(\log r)\tag*{}\\
&= (t - a_n)s_1+ (r - t) + O(\log r)\tag*{}\\
&= ts_1 + r -t +O(\log r)\tag*{}\\
&= r - (1-s_1)t+O(\log r)\tag*{}\\
&= r - (1-s_2)b_n + O(\log r)\label{eq:conditionalX}.
\end{align}

Let $b_n < r \leq (1-s_1\delta/2)a_{n+1}$. Then,
\begin{align}
K^A_{r}(z) &= K^A_{b_n}(z) + K^A_{r,b_n}(y_\alpha \, | \, z)\tag*{}\\
&= b_n - (1-s_2)b_n  + K^A_{r,b_n}(y_\alpha \, | \, z)\tag*{}\\
&= s_2b_n  + K^A_{r,b_n}(y_\alpha \, | \, z)\tag*{}\\
&= s_2b_n  + (s_1 + \frac{\delta}{2})(r - b_n)\label{eq:conditionalY}.
\end{align}

Finally, let $t = (1-s_1\delta/2)a_{n+1} < r \leq a_{n+1}$ and let $\beta < \alpha$ be the ordinal such that $f(\beta) = n$. Then,
\begin{align}
K^A_{r}(z ) &= K^A_{t}(z) + K^A_{r,t}(x_\beta \, | \, z)\tag*{}\\
&= s_2b_n  + (s_1 + \frac{\delta}{2})(t - b_n)  + K^A_{r,t}(x_\beta \, | \, z)\tag*{}\\
&= (s_1 + \frac{\delta}{2})t   + K^A_{r,t}(x_\beta \, | \, z)\label{eq:conditionalXbeta}
\end{align}
In particular,
\begin{align}
s_1 r &\leq K^A_{r}(z )\tag*{}\\
&= (s_1 + \frac{\delta}{2})t   + K^A_{r,t}(x_\beta \, | \, z)\tag*{}\\
&\leq (s_1 + \frac{\delta}{2})r   + r-t\tag*{}\\
&\leq (s_1 + \frac{\delta}{2})r   + \frac{\delta r}{2}\tag*{}\\
&= s_2 r\label{eq:boundonZCloseToBn}
\end{align}

Let $a_n < r \leq a_{n+1}$. The above inequalities show that, if $r > \frac{1-s_2}{1-s_1}b_n$, then
\begin{center}
$K^A_r(z) \geq s_1 r$.
\end{center}
When $r \leq \frac{1-s_2}{1-s_1}b_n$, by combining equality (\ref{eq:conditionalWalpha}) and inequality (\ref{eq:boundonZCloseToBn}),
\begin{align*}
K^A_{r}(z) &= K^A_{a_n}(z) + (r - a_n) s_1 + O(\log r)\\
&\geq s_1a_n + (r - a_n) s_1 + O(\log r)\\
&= s_1 r+ O(\log r).
\end{align*}
Therefore, $K^A_r(z) \geq s_1 r$. for all $r$. For the lower bound, let $r = \frac{1-s_2}{1-s_1}b_n$. Then,
\begin{align*}
K^A_{r}(z) &=K^A_{a_n}(z) + (r - a_n) s_1 + O(\log r)\\
&\leq a_n + + (r - a_n) s_1 + O(\log r)\\
&\leq s_1 r+ O(\log r),
\end{align*}
and so $\dim^A(z) = s_1$.

Similarly, the above inequalities show that  $K^A_r(z) \leq s_2 r$. To prove the lower bound, let $r = b_n$. Then
\begin{align*}
K^A_r(z) &=r - (1-s_2)b_n + O(\log r)\\
&= s_2 r + O(\log r),
\end{align*}
and so $\Dim^A(z) = s_2$.

To complete the proof, we must show that $E$ does not have an optimal Hausdorff oracle, nor an optimal packing oracle. Let $A = A_\alpha$ be any Hausdorff oracle for $E$. Let $B$ be an oracle encoding the set $\{w_\beta, x_\beta, y_\beta \, | \, \beta \leq \alpha\}$. Note that we can encode this information since this set is countable. 

Let $z_\beta \in E$. First, suppose that $\beta \leq \alpha$. Then by our choice of $B$, $\dim^{A_\alpha, B}(z_\beta) = 0$. So then suppose that $\beta > \alpha$. Let $n$ be a sufficiently large natural such that $f(\alpha) = n$. Then, since $x_\alpha$ is random relative to $A_\alpha$
\begin{align*}
K^{A_\alpha}_{a_{n+1}}(z_\beta) &= (s_1 + \frac{\delta}{2})t   + K^A_{a_{n+1},t}(x_\alpha \, | \, z)\\
&\geq (s_1 + \frac{\delta}{2})t  + a_{n+1} -t,
\end{align*}
where $t = (1-s_1\delta / 2)a_{n+1}$.
However, by our construction on $B$, 
\begin{align*}
K^{A_\alpha, B}_{a_{n+1}}(z_\beta) &= (s_1 + \frac{\delta}{2})t   + K^{A,B}_{r,t}(x_\alpha \, | \, z)\\
&\geq (s_1 + \frac{\delta}{2})t  + O(1).
\end{align*}
Therefore, 
\begin{align*}
K^{A_\alpha}_{a_{n+1}}(z_\beta) - K^{A_\alpha, B}_{a_{n+1}}(z_\beta) &= a_{n+1} -t \\
&= \frac{s_1\delta a_{n+1}}{2}.
\end{align*}
Since $z_\beta$ was arbitrary, it follows that $B$ reduces the complexity of every point $z \in E$ infinitely often. Since $A_\alpha$ was arbitrary, we conclude that $E$ does not have optimal Hausdorff nor optimal packing oracles.
\end{proof}

\begin{cor}
Assuming CH and AC, for every $0 < s_1 < s_2 \leq 1$ there is a set $E \subseteq \R$ which has optimal Hausdorff oracles but does not have optimal packing oracles such that
\begin{center}
$\dim_H(E) = s_1$ and $\dim_P(E) = s_2$.
\end{center}
\end{cor}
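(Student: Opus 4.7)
The plan is to construct $E = F \cup G$, where $F$ is a well-behaved Borel set with $\dim_H(F) = \dim_P(F) = s_1$ and $G$ is the set produced in the proof of Theorem \ref{thm:noHausnoPackingOptimal}, which has $\dim_H(G) = s_1$, $\dim_P(G) = s_2$, and admits no optimal packing oracle. For $F$ one may take any compact self-similar subset of $\R$ of Hausdorff dimension $s_1$; such sets are analytic and satisfy $\dim_H = \dim_P$, so by Lemmas \ref{lem:analyticHasOptimal} and \ref{lem:analyticHasOptimalPacking} they admit both optimal Hausdorff and optimal packing oracles.

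With this choice, $\dim_H(E) = \max\{\dim_H(F), \dim_H(G)\} = s_1$ and $\dim_P(E) = \max\{\dim_P(F), \dim_P(G)\} = s_2$, as required. For the existence of an optimal Hausdorff oracle, observe that $F \subseteq E$ and $\dim_H(F) = \dim_H(E) = s_1$, so Observation \ref{obs:optimalSubset} immediately yields that $E$ has an optimal Hausdorff oracle.

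To show $E$ has no optimal packing oracle, I would argue by contradiction. Suppose $A$ is packing optimal for $E$. By Lemma \ref{lem:joinPackingOptimalisOptimal}, after joining $A$ with a packing oracle for $F$, we may assume $A$ is itself a packing oracle for $F$, so that $\sup_{y \in F} \Dim^A(y) = \dim_P(F) = s_1$. Fix $\epsilon \in (0, s_2 - s_1)$ and any oracle $B$. Packing optimality of $A$ on $E$ produces some $x \in E$ with $\Dim^{A,B}(x) \geq s_2 - \epsilon$ and $K^{A,B}_r(x) \geq K^A_r(x) - \epsilon r$ for almost every $r$. But $\Dim^{A,B}(y) \leq \Dim^A(y) \leq s_1 < s_2 - \epsilon$ for every $y \in F$, so the witness $x$ must lie in $G$. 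Since $B$ and $\epsilon$ were arbitrary, this is precisely the condition that $A$ is packing optimal for $G$, contradicting the conclusion of Theorem \ref{thm:noHausnoPackingOptimal} applied to $G$.

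The main subtlety is the use of Lemma \ref{lem:joinPackingOptimalisOptimal} at the start of the contradiction argument: without first arranging that $A$ is also a packing oracle for $F$, we would only have $\sup_{y \in F} \Dim^A(y) \geq s_1$, and a priori points of $F$ could act as high-$\Dim^{A,B}$ witnesses, sabotaging the reduction to $G$. The join trick removes this possibility and forces every candidate witness into $G$, where the transfinite construction of Theorem \ref{thm:noHausnoPackingOptimal} supplies the refuting $B$.
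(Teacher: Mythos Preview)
Your proof is correct and follows essentially the same approach as the paper: take $E = F \cup G$ with $F$ regular of dimension $s_1$ and $G$ supplied by Theorem \ref{thm:noHausnoPackingOptimal}, then use Observation \ref{obs:optimalSubset} for the Hausdorff side and argue that any packing-optimal oracle for $E$ would be packing optimal for $G$. The only cosmetic differences are that the paper picks $\dim_H(G) < s_1$ rather than $= s_1$ and joins the candidate oracle with a packing oracle for $G$ rather than for $F$; your choice to join with a packing oracle for $F$ is in fact the cleaner one, since that is exactly what forces every high-$\Dim^{A,B}$ witness out of $F$ and into $G$.
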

\begin{proof}
Let $F$ be a set such that $\dim_H(F) = \dim_P(F) = s_1$. Then, by Lemma \ref{lem:regularHasOptimalHaus}, $F$ has optimal Hausdorff oracles. Let $G$ be a set, guaranteed by Theorem \ref{thm:noHausnoPackingOptimal}, with $\dim_H(G) < s_1$, $\dim_P(G) = s_2$ such that $G$ does not have optimal Hausdorff nor optimal packing oracles.

Let $E = F \cup G$. Then $\dim_H(E) = s_1$ and $\dim_P(E) = s_2$ by the union formula for Hausdorff and packing dimension. By Observation \ref{obs:optimalSubset}, $E$ has optimal Hausdorff oracles. 

We now prove that $E$ does not have optimal packing oracles. Let $A$ be a packing oracle for $E$. By possibly joining $A$ with a packing oracle for $G$, we may assume that $A$ is a packing oracle for $G$ as well. Since $G$ does not have optimal packing oracles, there is an oracle $B\subseteq \N$ and $\epsilon > s_2 - s_1$ such that, for every $x\in G$ where $\Dim^{A,B}(x) \geq s_2 - \epsilon$,
\begin{center}
$K^{A,B}_r(x) < K^A_r(x) - \epsilon r$
\end{center}
for infinitely many $r$. Let $x \in E$ such that $\Dim^{A,B}(x) \geq s_2 - \epsilon$. Then, by our choice of $F$, $x$ must be in $G$. Therefore 
\begin{center}
$K^{A,B}_r(x) < K^A_r(x) - \epsilon r$
\end{center}
for infinitely many $r$, and so $A$ is not an optimal packing oracle for $E$. Since $A$ was arbitrary, the conclusion follows.

\end{proof}

\begin{thm}
Assuming CH and AC, for every $0 < s_1 < s_2 \leq 1$ there is a set $E \subseteq \R$ which has optimal packing oracles but does not have optimal Hausdorff oracles such that
\begin{center}
$\dim_H(E) = s_1$ and $\dim_P(E) = s_2$.
\end{center}
\end{thm}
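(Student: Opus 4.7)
The plan is to mirror the previous corollary by exchanging the roles of Hausdorff and packing. Concretely, I would write $E = F \cup G$, where $F$ contributes the packing dimension $s_2$ together with optimal packing oracles, and $G$ contributes the Hausdorff dimension $s_1$ together with the obstruction to optimal Hausdorff oracles.

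For $F$, fix any $\alpha$ with $0 \leq \alpha < s_1$ and take $F = D_{\alpha, s_2}$ from Lemma \ref{lem:OscillatingHasOptimal}. Then $\dim_H(F) = \alpha < s_1$, $\dim_P(F) = s_2$, and $F$ has optimal packing oracles. For $G$, pick any $s_2' \in (s_1, s_2)$ and apply Theorem \ref{thm:noHausnoPackingOptimal} to obtain $G \subseteq \R$ with $\dim_H(G) = s_1$, $\dim_P(G) = s_2'$, and no optimal Hausdorff oracles (nor optimal packing oracles, but we will not use the latter). By the union formula, $\dim_H(E) = \max(\alpha, s_1) = s_1$ and $\dim_P(E) = \max(s_2, s_2') = s_2$. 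Optimal packing for $E$ is immediate from Observation \ref{obs:optimalPackingSubset}: $F \subseteq E$ satisfies $\dim_P(F) = \dim_P(E) = s_2$ and has optimal packing oracles.

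The substantive step is showing that $E$ has no optimal Hausdorff oracle. Suppose, for contradiction, that $A$ were such an oracle. Fix a Hausdorff oracle $A'$ for $F$; by Lemma \ref{lem:joinOptimalisOptimal} the join $A'' = (A, A')$ is still optimal Hausdorff for $E$, and it is also a Hausdorff oracle for $F$, so $\sup_{x \in F} \dim^{A''}(x) = \alpha$. Since $\sup_{x \in E} \dim^{A''}(x) = s_1$ and $\sup_{x \in G} \dim^{A''}(x) \geq \dim_H(G) = s_1$, $A''$ is also a Hausdorff oracle for $G$. I would then argue that $A''$ is in fact an optimal Hausdorff oracle for $G$, contradicting the choice of $G$. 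Given any $B \subseteq \N$ and any $\epsilon > 0$, set $\epsilon' = \min(\epsilon, (s_1 - \alpha)/2)$ and apply the Hausdorff optimality of $A''$ for $E$ to obtain $x \in E$ with $\dim^{A'', B}(x) \geq s_1 - \epsilon'$ and $K^{A'', B}_r(x) \geq K^{A''}_r(x) - \epsilon' r$ for almost every $r$. Because $\dim^{A'', B}(x) \leq \dim^{A''}(x) \leq \alpha < s_1 - \epsilon'$ for every $x \in F$, such an $x$ must lie in $G$, and since $\epsilon' \leq \epsilon$ the two conditions witness that $A''$ is Hausdorff optimal for $G$.

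The main obstacle, and the reason for the careful choice of $F$, is maintaining the strict inequality $\dim_H(F) < s_1$ combined with the stronger structural fact that $\sup_{x \in F} \dim^{A''}(x) = \dim_H(F)$ once $A''$ is a Hausdorff oracle for $F$; this is precisely what forces the near-optimal point produced by the Hausdorff-optimality condition on $E$ into $G$, enabling the transfer of optimality from $E$ to $G$ and hence the contradiction. Lemma \ref{lem:OscillatingHasOptimal} gives a clean supply of candidates for $F$, so once this transfer argument is set up the remaining steps are routine bookkeeping with the point-to-set principle and the union formulas for Hausdorff and packing dimension.
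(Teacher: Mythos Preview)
Your proposal is correct and follows essentially the same approach as the paper: write $E = F \cup G$ with $F = D_{\alpha, s_2}$ supplying the packing dimension and optimal packing oracles, and $G$ from Theorem~\ref{thm:noHausnoPackingOptimal} supplying the Hausdorff dimension and the obstruction. The paper makes the specific choices $\alpha = 0$ and $\dim_P(G) = s_2$, whereas you allow any $\alpha < s_1$ and $s_2' \in (s_1, s_2)$; these generalizations are harmless. For the ``no optimal Hausdorff oracle'' step, the paper argues directly (take an arbitrary Hausdorff oracle $A$ for $E$, pull the witness $(B,\epsilon)$ from the non-optimality of $A$ for $G$, and use $\dim(x)=0$ for $x \in D_{0,s_2}$ to force the relevant point into $G$), while you argue the contrapositive (assume $A$ optimal for $E$, join with a Hausdorff oracle for $F$, and transfer optimality to $G$); these are logically equivalent, and your explicit join with a Hausdorff oracle for $F$ is a clean way to handle the step that the paper handles via the special property $\dim(x)=0$ on $D_{0,s_2}$.
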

\begin{proof}
Let 
\begin{center}
$F = \{ x \in (0,1) \, | \, \dim(x) = 0 \text{ and } \Dim(x) = s_2\}$.
\end{center}
By Lemma \ref{lem:OscillatingHasOptimal}, $\dim_H(F) = 0$, $\dim_P(F) = s_2$ and $F$ has optimal packing oracles. Let $G$ be a set, guaranteed by Theorem \ref{thm:noHausnoPackingOptimal}, with $\dim_H(G) = s_1$, $\dim_P(G) = s_2$ such that $G$ does not have optimal Hausdorff nor optimal packing oracles. Let $E = F \cup G$. Then $\dim_H(E) = s_1$ and $\dim_P(E) = s_2$ by the union formula for Hausdorff and packing dimension. By Observation \ref{obs:optimalPackingSubset}, $E$ has optimal packing oracles. 

We now prove that $E$ does not have optimal Hausdorff oracles. Let $A$ be a Hausdorff oracle for $E$. By possibly joining $A$ with a Hausdorff oracle for $G$, we may assume that $A$ is a Hausdorff oracle for $G$ as well. Since $G$ does not have optimal Hausdorff oracles, there is an oracle $B\subseteq \N$ and $\epsilon > s_1$ such that, for every $x\in G$ where $\dim^{A,B}(x) \geq s_1 - \epsilon$,
\begin{center}
$K^{A,B}_r(x) < K^A_r(x) - \epsilon r$
\end{center}
for infinitely many $r$. Let $x \in E$ such that $\dim^{A,B}(x) \geq s_1 - \epsilon$. Then, since $\dim_H(F) = 0$, $x$ must be in $G$. Therefore 
\begin{center}
$K^{A,B}_r(x) < K^A_r(x) - \epsilon r$
\end{center}
for infinitely many $r$, and so $A$ is not an optimal Hausdorff oracle for $E$. Since $A$ was arbitrary, the conclusion follows.
\end{proof}

\section{Acknowledgments}
I would like to thank  Denis Hirschfeldt, Jack Lutz and Chris Porter for very valuable discussions and suggestions. I would also like to thank the participants of the recent AIM workshop on Algorithmic Randomness.

\bibliography{optimal}
\end{document}